\numberwithin{equation}{section}
\DeclareMathAlphabet{\pazocal}{OMS}{zplm}{m}{n}
\newcommand\R{\mathbb R}
\newcommand\br{\begin{remark}}
\newcommand\er{\end{remark}}
\newcommand\bp{\begin{pmatrix}}
\newcommand\ep{\end{pmatrix}}
\newcommand{\be}{\begin{equation}}
\newcommand{\ee}{\end{equation}}
\newcommand\ba{\begin{equation}\begin{aligned}}
\newcommand\ea{\end{aligned}\end{equation}}
\newcommand{\bap}{\begin{app}}
\newcommand{\eap}{\end{app}}
\newcommand{\begs}{\begin{exams}}
\newcommand{\eegs}{\end{exams}}
\newcommand{\beg}{\begin{example}}
\newcommand{\eeg}{\end{exaplem}}
\newcommand{\bpr}{\begin{proposition}}
\newcommand{\epr}{\end{proposition}}
\newcommand{\bt}{\begin{theorem}}
\newcommand{\et}{\end{theorem}}
\newcommand{\bc}{\begin{corollary}}
\newcommand{\ec}{\end{corollary}}
\newcommand{\bl}{\begin{lemma}}
\newcommand{\el}{\end{lemma}}
\newcommand{\bd}{\begin{definition}}
\newcommand{\ed}{\end{definition}}
\newcommand{\brs}{\begin{remarks}}
\newcommand{\ers}{\end{remarks}}
\newcommand{\mycomment}[1]{}
\newtheorem{theorem}{Theorem}[section]
\newtheorem{proposition}[theorem]{Proposition}
\newtheorem{corollary}[theorem]{Corollary}
\newtheorem{lemma}[theorem]{Lemma}
\theoremstyle{remark}
\newtheorem{remark}[theorem]{Remark}
\theoremstyle{definition}
\newtheorem{definition}[theorem]{Definition}
\newtheorem{example}[theorem]{Example}
\newcommand{\beq}{\begin{equation}}
\newcommand{\eeq}{\end{equation}}
\title{Synchronous vs. asynchronous coalitions in multiplayer games, with applications
to guts poker}
\author{Kevin Buck}
\address{Indiana University, Bloomington, IN 47405}
\email{kevbuck@iu.edu}
\thanks{Research of K.B. was partially supported under NSF grants no. DMS-2154387 and DMS-2206105}
\author{Jessica Babyak}
\address{Indiana University, Bloomington, IN 47405}
\email{jt103@iu.edu}
\thanks{Research of J.B. was partially supported under NSF grants no. DMS-2154387 and DMS-2206105}
\author{Leah Dichter}
\address{Bowdoin College, Brunswick, ME 04011}
\email{ldichter@bowdoin.edu}
\thanks{Research of L.D. was partially supported under NSF grant no. DMS-2051032 (REU).}
\author{David Jiang}
\address{University of Wisconsin, Madison, WI 53706-1388}
\email{djiang38@wisc.edu}
\thanks{Research of D.J. was partially supported under NSF grant no. DMS-2051032 (REU).}
\author{Kevin Zumbrun}
\address{Indiana University, Bloomington, IN 47405}
\email{kzumbrun@iu.edu}
\thanks{Research of K.Z. was partially supported under NSF grants no. DMS-2154387 and DMS-2206105}
\begin{document}

\begin{abstract}
	We study the issue introduced by Buck-Lee-Platnick-Wheeler-Zumbrun of synchronous vs. asynchronous
	coalitions in multiplayer games, that is, the difference between coalitions with full and partial
	communication, with a specific interest in the context of continuous Guts poker where this
	problem was originally formulated.
	We observe for general symmetric multiplayer games, with players 2-n in coalition against player 1,
	that there are three values, corresponding to symmetric Nash equilibrium, optimal asynchronous,
	and optimal synchronous strategies, in that order, for which inequalities may for different examples
	be strict or nonstrict (i.e., equality) in any combination.
	Different from Nash equilibria and synchronous optima, which may be phrased as convex optimization problems,
	or classical 2-player games, determination of asynchronous optima is a nonconvex optimization problem.
	We discuss methods of numerical approximation of this optimum, and examine performance on 3-player rock-paper-scissors and discretized Guts poker.
	Finally, we present sufficient conditions guaranteeing different possibilities for behavior, based on
	concave/convexity properties of the payoff function.
	These answer in the affirmative the open problem posed by
	Buck-Lee-Platnick-Wheeler-Zumbrun whether the optimal asynchronous coalition value for 3-player guts
	is equal to the Nash equilibrium value zero.
	At the same time, we present a number of new results regarding synchronous coalition play
	for continuous $3$-player guts. 
\end{abstract}

\date{\today}
\maketitle
\tableofcontents

\section{Introduction}\label{s:intro}
In this paper, we examine the topic introduced in \cite{BLPWZ} 
of ``synchronous'' vs. ``asynchronous'' coalitions in multiplayer zero-sum games, with
a particular eye toward the context of Guts Poker studied there.
Consider an $n$-player symmetric zero-sum game, with payoffs $\Psi_i(s_1, \dots, s_n)$
to players $i$ given a choice of strategies $s_j\in \{ 1, \dots, N\}$ for $j=1,\dots, n$,
under the restrictions (i) (symmetry) $\Psi_i$ is invariant under transpositions of different $s_j$, $j\neq i$
and equal to $\Psi_j$ under transpositions of $s_i$ with $s_j$, and (ii) (zero sum) $\sum_{i=1}^n\Psi_i=0$.
Then \cite{N}, there is a symmetric {\it Nash equilibrium} consisting of identical mixed strategies,
or randomly chosen choices with probability densities $p_j=(p_{j1}, \dots, p_{jN}$, for which
departure by any single player results in a lower payoff for that player. The value of this Nash
equilibrium, by properties (i)-(ii), is necesssarily zero.

On the other hand, viewing this as a game between player 1 and a coalition of players 2-n working together, we may
ask what is the value of that (no longer symmetric) game, and what is its relation to the Nash equilibrium.
The {\it asynchronous optimum} is the optimal set of mixed strategies for players 2-n, which includes identical
mixed strategies, hence gives a value that is necessarily less than or equal to the Nash equilibrium value.
We call this asynchronous because the players 2-n of the coalition, though they may agree beforehand with which
probabilities each will choose strategies $\{1,\dots, N\}$, they are not allowed to coordinate these choices, but must
determine them independently.

The {\it synchronous optimum} on the other hand, is defined as an optimal probability distribution
\be\label{synch}
q_{j_2,\dots, j_n}, \qquad j_i\in \{1, \dots, N\},
\ee
among the possible $(n-1)N$ joint choices of players 2-n.
Evidently, these include the asynchronous strategies as the strict
subset
\be\label{asynch}
q_{j_2,\dots, j_n}=\Pi_{i=2}^n p_{i,j_i}
\ee
of strategies of tensorial form, hence the synchronous optimum gives a value less than or equal to that of
the asynchronous one.
We denote these values for convenience of discussion as 
\be\label{Vs}
V_S\leq V_A\leq V_N=0,
\ee
where $V_S$ is the synchronous coalition value, 
$V_A$ the asynchronous coalition value, and $V_N$ the symmetric Nash equilibrium value for player 1. 

We note, by the fundamental theorem of finite 2-player zero-sum games \cite{vN}, that
\ba\label{VS}
V_S &= \min_{q_{i_2,\dots,i_n}} \max_{i_1} \sum  q_{i_2,\dots, i_n} \psi(i_1, \dots, i_n)\\
&= \max_{p_{i_1}} \min_{i_2,\dots,i_n} \sum p_{i_1} \psi(i_1, \dots, i_n),
\ea
so that $V_S$ is also the maximum value forceable by player 1 by an optimal choice of mixed strategy,
whether against synchronous coalition, asynchronous coalition, or no coalition (independent play),
as this is computed by worst-case analysis as described in the final term of \eqref{VS}.
By comparison,
\ba\label{VA}
V_A &= \min_{p_{i_2},\dots,p_{i_n}}  \max_{i_1} \sum  \Pi_{j=2}^N p_{i_j} \psi(i_1, \dots, i_n).
\ea
Thus, {\it the asynchronous coalition game has a value}, in the classical sense \cite{vN} 
that the optimal values forceable by player 1 and by players 2-n agree, {\it if and only if $V_S=V_A$}.
Meanwhile, the synchronous coalition game, as a classical 2-player game, has value $V_S$.

On the other hand, {\it the asynchronous game is ``fair''}, in the sense that players 2-n do not gain an advantage
vs. player 1 by coalition, {\it if and only if $V_A=V_N=0$.}
Thus, the asynchronous game is both fair and has a well-determined value if and only if
\be\label{cdet}
V_S=V_A=V_N=0,
\ee
i.e., there is no advantage gained by coalition of any sort between players 2-n.
This completely determinate case is somewhat rare among multiplayer games, however, as typically \cite{O,BLPWZ}, 
\be\label{typical}
V_S<V_N=0,
\ee
a major practical/philosophical complication in the analysis of multiplayer games.  

The above concepts have natural generalizations to continuous symmetric zero-sum games 
like continuous Guts Poker \cite{CCZ,BLPWZ},
with the arguments $i_j$ of $\Psi$ replaced by real numbers $x_j$, and discrete probability distributions 
$p_j$ replaced by continuous ones.

\subsection{Objectives}\label{s:objectives}
The discrepancy \eqref{typical} between synchronous coalition and Nash equilibrium values
is well-documented in the literature, and has been substantially discussed.
Our purpose here, building on observations of \cite{BLPWZ}, is to examine under what circumstances
there is a discrepancy 
\be\label{Atyp}
V_A<V_N=0
\ee
between asynchronous coalition and Nash equilibrium values,
a question that seems to have been not at all considered, but one that seems relevant and important
to dynamics of multiplayer games.
For, $V_A=V_N=0$ would imply that, in the absence of communication between players 2-n, as one could
imagine in various real-world scenarios, there is no advantage forceable by a player 2-n coalition,
yet, at the same time, player 1 cannot force a zero return against all possible plays.

In particular, we consider the key open problem posed in \cite{BLPWZ} whether or not
$V_A=V_N=0$ for continuous Guts Poker.
More generally, we investigate techniques for numerical approximation of $V_A$ for arbitrary games,
an interesting nonconvex optimization problem.

\subsection{Results}\label{s:results}
Our main analytical results are, first, to characterize all possible behaviors of synchronous vs. asynchronous
coalition values for $2\times 2\times 2$ symmetric zero sum games, second, for a simple $3\times 3\times 3$
version of Rock-Paper-Scissors, to show that there can occur the new possibility of a gap between the
asynchronous value $V_A$ and both the Nash equilibrium value $V_N=0$ and the synchronous value $V_S$.
Third, we treat in detail the example of continuous 3-player guts introduced in \cite{CCZ}, obtaining
an explicit analytical solution also of this at first sight quite complicated infinite-dimensional case.
Here, the main idea is, by using partial convexity properties, to reduce to consideration of pure strategy
solutions, converting the maximin and minimax problems defining $V_S$ and $V_A$ to finite-dimensional
calculus problems \eqref{VSmaximin} and \eqref{VAminimax} analogous to \eqref{VS} and \eqref{VA},
which may with some effort be solved completely.

The latter appears of independent interest,
{\it identifying continuous guts as a rare instance of a realistic 3-player game admitting complete solution.}
Specifically, we (i) answer in the affirmative, by rigorous analysis, 
the key open problem posed in \cite{BLPWZ} whether or not $V_A=V_N=0$ for continuous Guts Poker,
and (ii) confirm rigorously the optimal synchronous coalition strategies and values observed
numerically in \cite{BLPWZ}. 
Moreover, using our solution formulae, we are able to track the minute evolution of optimal strategies
for the recursive game as the number of rounds increases, a delicate computation not easily accessible
by numerical approximation. 

In the remainder of the paper, we investigate numerical methods for approximating synchronous and asynchronous 
values for general games, an interesting class of convex ($V_S$) and nonconvex ($V_A$) optimization problems,
using the exact solutions from (i) and (ii) above as benchmark problems, then investigating
convergence/computational cost on testbeds of larger randomly generated 2- and 3-player symmetric games.
Finally, having tested our methods,
we numerically investigate frequency distributions of the gaps between $V_S$, $V_A$, $V_N$ for random
symmetric $3$-player games.

\subsection{Plan of the paper}\label{s:plan}
In Sections \ref{s:eg1}-\ref{s:eg3}, we present some basic discrete examples illustrating the
possible behaviors for low-dimensional games, and deriving values and optimal strategies for
synchronous vs. asynchronous games. In Section \ref{s:eg4}, we recall the description
of continuous Guts Poker in \cite{CCZ}, and discretizations thereof, and rigorously determine the
value of the asynchronous game to be zero, answering the open problem posed in \cite{BLPWZ}.
At the same time, we show how to recover by rigorous analysis all features relevant to the synchronous
coalition game, validating numerical observations of \cite{BLPWZ} and giving useful benchmarks for
our numerical studies to follow.
In Section \ref{s:num}, we discuss methods for numerical approximation, using our exact
solutions as useful guidelines for comparison.
In Section \ref{s:numeff}, we compare accuracy/computational effiency of these various methods by
experiments on randomly chosen two-player games of different sizes.
In Section \ref{s:numV}, we use them to compile statistics for randomly generated three-player games.
on gaps $|V_S-V_N|$, $|V_S-V_A|$, $|V_A-V_N$, and relative gap $\theta:=|V_A-V_N|/|V_S-V_N|$. 
Finally, in Section \ref{s:disc}, we present discussion and open problems.
In Appendix \ref{s:critical}, we provide for completeness a discussion of maximin vs. critical points.
In Appendix \ref{s:alphabeta}, we study evolution of optimal strategies for recursive games with the
round of play; see in particular Section \ref{s:rguts} on continuous guts.
In Appendix \ref{s:Tables1}, we collect tables of results having to do with Section \ref{s:numeff}.

\medskip
{\bf Acknowledgement:} This work was carried out with the aid of open source packages
Desmos, Nashpy, and SciPy. L.D. and D.J. thank Indiana University, especially
REU director Dylan Thurston and administrative coordinator Mandie McCarty, for their
hospitality during the REU program in which this work was carried out.
We also thank the UITS system at Indiana University for the use of supercomputer cluster Carbonate.
Code used in the investigations of this project is publicly available and may be found at \cite{Gi}.

\section{Example 1: Odds and Evens}\label{s:eg1}
Simple illustrative examples are given by versions of the game of odds and evens.
In the most basic version, each of three player chooses ``one'' or ``two'' and the players simultaneously
display their choices. In the ``odd man out'' (OMO) version, if two players match and the other does not,
then the ``odd'' (nonmatching) player pays one value unit to each of the other two players. If all match, then
the game is a tie, with return to all players of zero.
In the ``odd man in'' (OMI) version, if two players match and the other does not, then the ``even'' (matching)
players each pay one unit to the odd player. If all match, as in (OMO), there is a tie and the value is zero.
Both versions are clearly symmetric, so have symmetric Nash equilibria returning value zero.

\subsection{Values for (OMO)} 
Evidently, the unique symmetric Nash equilibrium strategy is for each player to choose
one or two with equal probability $1/2$, guaranteeing a return of $V_N=0$.
On the other hand, players 2-3 can choose strategy pairs $(1,1)$ and $(2,2)$ with equal probability
to guarantee a return of $(-2)(1/2) + (0)(1/2)=-1$ to player 1.
Meanwhile, player 1 can choose $1$ and $2$ with equal probability to guarantee a return of $\geq -1$ against
synchronized coalition strategies.  Thus, $V_S=-1$.

Finally, setting $0\leq y\leq 1$ to be the probability that player 2 chooses ``one'', and $0\leq z\leq 1$ the
probablity that player 3 chooses ``one'', we find that the expected return to player 1 upon choosing ``one''
is
\be\label{OMO1}
(0)yz + (1)y(1-z) + (1) (1-y)z + (-2)(1-y)(1-z)=
3y+3z-4yz-2
\ee
while the expected return on choosing ``two'' is
\be\label{OMO2}
(-2)yz + (1)y(1-z) + (1) (1-y)z + (0)(1-y)(1-z)= y+z -4yz.
\ee
A quick computation shows that the second majorizes the first precisely when $y+z\leq 1$.
Minimizing \eqref{OMO2} on $0\leq y,z$, $y+z\leq 1$, we find a saddle at $(y,z)=(1/4,1/4)$,
giving value $1/2- 1/4=+1/4$. On the boundaries $y=0$ or $z=0$, the return is $z$ or $y$, respectively,
both $\geq 0$. By symmetry, we have the same behavior on $y=1$ or $z=1$.
On the boundary $y=1-z$, the return is $1-4z(1-z)= (1-2z)^2\geq 0$ as well.
By symmetry, the best that players 2-3 can force for $y+z\geq 1$ is $0$, too, showing that $V_A=0$.

The above analysis shows in passing that the (nonconvex) asynchronous problem has no local minimizers
other than the global minimizers recorded in Proposition \ref{omoprop}.
The interior minimizer $(y,z)=(1/2,1/2)$,  
corresponding to a minimax at $(x,y,z)=(1/2,1/2,1/2)$ of the expected return 
\be\label{r1}
\alpha(x,y,z)=-4yz+2x(y+z-1)+ (y+z)
\ee
for $0\leq x\leq 1$ defined as the probability that player one chooses ``one'',
is a critical point of $\alpha$, in agreement with Proposition \ref{critprop}, below;
indeed, it is the unique critical point.

Summarizing, we have as follows.

\begin{proposition}\label{omoprop}
	For 3-player odds and evens (OMO),
\be\label{omoval}
V_S=-1< V_A=V_N=0.
\ee
	Moreover, the asynchronous minimax problem has global minimimizers $(y,z)=(0,0), (1,1), (1/2,1/2)$,
	with no other local minima. Local saddle points however occur.
\end{proposition}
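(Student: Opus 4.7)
The plan is to assemble the three value assertions one by one, then analyze the structure of the asynchronous minimax directly. The Nash value $V_N=0$ is immediate from the symmetry and zero-sum hypotheses as noted in Section \ref{s:intro}. For the synchronous value $V_S=-1$, I would verify both sides of \eqref{VS}: the correlated coalition distribution placing mass $1/2$ on each of the pure pairs $(1,1)$ and $(2,2)$ yields expected return $-1$ to player 1, while player 1's uniform mixture over $\{1,2\}$ forces at least $-1$ against every synchronous coalition distribution. Since these match, $V_S=-1$.

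For the characterization of $V_A$ and its minimizers, I would work directly with the expected return $\alpha(x,y,z)$ of \eqref{r1}, where $x,y,z\in[0,1]$ are the probabilities that players $1,2,3$ choose ``one.'' Because $\alpha$ is affine in $x$, the inner maximum $f(y,z):=\max_{x\in[0,1]}\alpha(x,y,z)$ equals \eqref{OMO1} on the triangle $\{y+z\geq 1\}$ and \eqref{OMO2} on the triangle $\{y+z\leq 1\}$. I would minimize this piecewise-quadratic $f$ over $[0,1]^2$. On $\{y+z\leq 1\}$ the unique interior critical point is $(1/4,1/4)$ with $f=1/4>0$, and the boundary restrictions reduce to $f=z$, $f=y$, and $f=(1-2z)^2$, whose minimizers are $(0,0)$ and $(1/2,1/2)$ with $f=0$. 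The involution $(y,z)\mapsto(1-y,1-z)$ satisfies $f(y,z)=f(1-y,1-z)$ and exchanges the two triangles, so the analysis on $\{y+z\geq 1\}$ is identical and produces the additional minimizer $(1,1)$. Hence $V_A=0=V_N$ and the three global minimizers are exactly the points claimed.

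To rule out other local minima, I would argue region by region. Since $f$ is smooth on each closed triangle, any local minimum in the open interior must be a critical point of the underlying quadratic. There are only two such interior critical points, namely $(1/4,1/4)$ and $(3/4,3/4)$, and both have indefinite Hessian $\bigl(\begin{smallmatrix}0&-4\\-4&0\end{smallmatrix}\bigr)$; they are thus saddle points rather than minima, and simultaneously furnish the local saddle points advertised in the last sentence of the proposition. Any remaining local minimum must lie on one of the six line segments forming the boundary of the square together with the diagonal $y+z=1$, on each of which $f$ is a one-variable quadratic whose minimizers recover exactly the three global minimizers already identified.

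The main obstacle, such as it is, is combinatorial rather than conceptual: one has to keep careful track of the piecewise structure of $f$ along the diagonal $y+z=1$, where $f$ is continuous but fails to be $C^1$, so that two-sided local minima must be handled with care. The symmetry $f(y,z)=f(1-y,1-z)$ together with the explicit restriction $f|_{y+z=1}=(1-2z)^2$ reduces this to a short one-dimensional inspection.
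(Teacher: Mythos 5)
Your proposal is correct and follows essentially the same route as the paper: the same exhibition of strategies for $V_S=-1$ and $V_N=0$, the same reduction of the asynchronous problem to minimizing the piecewise-quadratic $\max_x\alpha(x,y,z)$ given by \eqref{OMO1} on $\{y+z\geq 1\}$ and \eqref{OMO2} on $\{y+z\leq 1\}$, the same interior saddle at $(1/4,1/4)$ (and its image $(3/4,3/4)$ under the symmetry $(y,z)\mapsto(1-y,1-z)$), and the same boundary/diagonal inspection yielding the minimizers $(0,0)$, $(1,1)$, $(1/2,1/2)$. Your explicit segment-by-segment exclusion of further local minima is just a slightly more systematic write-up of the paper's "no other local minimizers" remark, so no substantive difference.
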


\subsection{Values for (OMI)}\label{s:omi}
Again, the unique symmetric Nash equilibrium strategy is for each player to choose
one or two with equal probability $1/2$, guaranteeing a return of $V_N=0$.
But, now, , players 2-3 can guarantee a return of $-1$ to player 1 by choosing pure strategy
pairs $(1,2)$ and $(2,1)$ with arbitrary probabilities, including pure strategy pairs $(1,2)$ or $(2,1)$.
As pure strategies are also asynchronous strategies, this gives $V_A=V_S=-1$.

Observing that the expected payoff analogous to \eqref{r1} is now its negative,
$\alpha(x,y,z)=-[-4yz+2x(y+z-1)+ (y+z)]$, we find that there are no local minima for the 
asynchronous problem other than the global minimizers $(y,z)=(0,1)$ and
$(y,z)=(1,0)$ found on the boundary $y+z=1$, where $\alpha$ reduces to $- (1-2z)^2\leq 0$.
Note that these minimaxes are not critical points of $\alpha$, a fact associated with their
nonuniqueness; see Remark \ref{ceg}.

Summarizing, we have as follows.

\begin{proposition}\label{omoprop2}
	For 3-player odds and evens (OMI),
\be\label{omival}
V_S=V_A=-1< V_N=0.
\ee
Moreover, the asynchronous minimax problem has global minimizers $(y,z)=(0,1), (1,0)$,
with no other local minima.
\end{proposition}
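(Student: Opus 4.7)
The first task is immediate: $V_N = 0$ follows from the zero-sum and symmetry assumptions, as noted in the introduction. For the coalition values, I would sandwich $V_S = V_A = -1$ as follows. In (OMI) the payoff $\psi$ to player 1 takes only the values $\{-1, 0, 2\}$, with $-1$ achieved exactly when player 1 is part of a matching pair, so for every coalition distribution $q$ and every choice $i_1$ of player 1 the expected payoff is $\geq -1$; hence $V_S \geq -1$. For the matching upper bound, the pure asynchronous strategy $(y,z) = (1,0)$---player 2 plays ``one'', player 3 plays ``two''---forces player 1 to match exactly one opponent and pay a unit whatever her choice, so $V_A \leq -1$. Combined with the general inequality $V_S \leq V_A$, this pins down $V_S = V_A = -1 < 0 = V_N$.

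For the structure of minimizers, I would form the expected return to player 1 as
\begin{equation*}
\alpha(x,y,z) = 4yz - 2x(y+z-1) - (y+z),
\end{equation*}
i.e., the negative of expression \eqref{r1}. Since $\alpha$ is affine in $x$, the maximum $f(y,z) := \max_{x \in [0,1]} \alpha(x,y,z)$ is piecewise quadratic:
\begin{equation*}
f(y,z) = 4yz - 3(y+z) + 2 \text{ on } \{y+z \leq 1\},
\qquad
f(y,z) = 4yz - (y+z) \text{ on } \{y+z \geq 1\},
\end{equation*}
with $f(y, 1-y) = -(1-2y)^2$ on the common ridge. The unique unconstrained critical points of the two smooth pieces are $(3/4, 3/4)$ and $(1/4, 1/4)$, each lying outside its respective region of definition, so neither open region contains any critical point, and in particular no interior local minima. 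On each of the four edges of $[0,1]^2$, $f$ reduces to an affine monotone function, with local minima on the boundary occurring only at the corners $(0,1)$ and $(1,0)$, both with value $-1$.

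The delicate step, and the main obstacle I anticipate, is ruling out further local minima on the fold $\{y+z = 1\} \cap (0,1)^2$. Along this ridge, $f = -(1-2y)^2$ attains its strict maximum at $y = 1/2$ and its minimum $-1$ only at the endpoints, so at any interior ridge point there is at least one direction along the ridge in which $f$ strictly decreases. To conclude, it suffices to verify that $f$ strictly increases moving off the ridge into either adjacent region. Using the gradients $(4z - 3, 4y - 3)$ and $(4z - 1, 4y - 1)$ of the two smooth pieces, a direct computation at a ridge point $(y_0, 1 - y_0)$ yields $\nabla f \cdot n > 0$ uniformly in $y_0$ for the inward normal $n$ on either side of the ridge. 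Hence every interior ridge point is a saddle of $f$, and the only local (and global) minimizers of the asynchronous problem are $(0,1)$ and $(1,0)$, completing the proof.
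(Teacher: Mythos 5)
Your proposal is correct and follows essentially the same route as the paper: the value $-1$ is pinned down by the pure mismatched pair $(1,2)$ (an asynchronous strategy) together with the trivial floor $\psi\geq-1$, and the minimizer structure is obtained from the negated payoff of \eqref{r1}, maximized over $x$ and analyzed piecewise on the two regions separated by the fold $y+z=1$, where the objective reduces to $-(1-2y)^2$. You simply spell out in more detail (critical points of each quadratic piece, edge monotonicity, and the transversal increase off the ridge) what the paper dispatches by reference to its preceding (OMO) computation.
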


\section{Example 2: general $2\times 2\times 2$ games}\label{s:eg2}
More generally, payoffs for a general symmetric zero-sum $2\times 2\times 2$ game,
by symmetry, take the form
\be\label{gensym}
\bp P_{111} & P_{112}& P_{121} & P_{122}\\
 P_{211} & P_{212}& P_{221} & P_{22,2}\ep =
 \bp 0 & \alpha& \alpha & -2\beta\\ -2\alpha  & \beta & \beta  & 0 \ep,
\ee
where $P_{i_1i_2 i_3}$ denotes the payoff to player 1 if each player $j$ chooses $i_j$.

Case $\alpha=\beta=0$ corresponds to the trivial zero game, with $V_S=V_A=V_N=0$.
If $\alpha$ or $\beta$ is nonzero, taking without loss of generality $\beta\neq 0$, by
symmetry, we may rescale by $|\beta|$ to obtain either
\be\label{redgensym}
\bp P_{111} & P_{112}& P_{121} & P_{122}\\
 P_{211} & P_{212}& P_{221} & P_{22,2}\ep =
 \bp 0 & \alpha& \alpha & -2 \\ -2\alpha & 1 & 1  & 0 \ep
\ee
or
\be\label{2redgensym}
\bp P_{111} & P_{112}& P_{121} & P_{122}\\
 P_{211} & P_{212}& P_{221} & P_{22,2}\ep =
 \bp 0 & -\alpha& -\alpha & 2 \\ 2\alpha & -1 & -1  & 0 \ep.
\ee
Case $\alpha=\beta=1$ corresponds to (OMO), while $\alpha=\beta=-1$ corresponds to (OMI).
Thus, \eqref{redgensym} may be considered as a weighted payoff (OMO), where the ``odd man'' penalty
depends on the strategy chosen by the odd man, and \eqref{2redgensym}, similarly, as a weighted version
of (OMI).

\subsection{Case $\alpha\geq 0$, \eqref{redgensym}: generalized (OMO)}
Evidently, columns 2 and 3 of the righthand side of \eqref{redgensym} are inferior strategies for
player 2, and can be ignored in the synchronous coalition game.
This reduces the problem to a $2\times 2$ two-player game with payoff matrix

\be\label{redbloc}
 \bp 0 &  -2 \\ -2\alpha &  0 \ep,
 \ee
 for which a quick computation gives 
 $V_S= - \frac{2\alpha}{\alpha+1}$, with optimal player 1 strategy
 choosing 1 with probability $ \frac{\alpha}{\alpha+1}$ and 2 with probability $ \frac{1}{\alpha+1}$.

 Turning to the 
 computation of $V_A$, set $y$ and $z$ to be the probabilities that player 2 and player 3 choose $1$. 
 Then, the associated payoffs may be computed as
 \be\label{gOMOpay1}
 -2 -2(1+\alpha)yz + (2+\alpha)(y+z)
 \ee
 when player 1 chooses $1$, and
 \be\label{gOMOpay2}
  -2(1+\alpha)yz + y+z
 \ee
 when player 1 chooses $2$. The second majorizes the first when 
 \be\label{gOMOmaj}
 y+z\leq \frac{2}{1+\alpha},
 \ee
 the first majorizing the second on the complement.
 Minimizing \eqref{gOMOpay2} over \eqref{gOMOmaj}, we find a single critical saddle point
 at $(y,z)=(\frac{1}{2(1+\alpha)} ,\frac{1}{2(1+\alpha)})$, which can therefore not be an interior minimum.
 $\frac{1}{2(1+\alpha)}$. 
 On the boundaries $y=0$ and $z=0$, we have payoffs $z\geq 0$ and $y\geq 0$,
 respectively, returning at least zero.
 On the boundary $y+z =\frac{2}{1+\alpha}$, the payoff may be calculated to be 
 $$
 -2(1+\alpha)y\Big( \frac{2}{1+\alpha}-y\Big) + \frac{2}{1+\alpha}= 
 2\Big(\sqrt{1+\alpha} y- \frac{1}{\sqrt{1+\alpha}}\Big)^2,
 $$
 hence greater than or equal to zero.

 When $\alpha<1$, $ \frac{2}{1+\alpha}>1$, and there are two other boundaries $y=1$ and $z=1$, without loss
 of generality (by symmetry) $y=1$, $0\leq z\leq  \frac{1-\alpha}{\alpha+1}$.
 On this boundary, we have payoff
 $$
 -2(1+\alpha)z + 1+z= 1 -(1+2\alpha)z,
 $$
 which is minimized at $z =\frac{1-\alpha}{\alpha+1}$, with value
 $$
 P_*= 1- (1+2\alpha) \frac{1-\alpha}{1+\alpha}=
 \frac{ 2\alpha^2   } {1+\alpha}\geq 0.
 $$
 
 A similar computation on the complement of \eqref{gOMOmaj} gives the
 same result, hence $V_A\equiv 0$ for all $\alpha\geq 0$ for this class of games.
	 (Indeed, the invariance $y\to 1-y$, $z\to 1-z$, $\alpha \to 1/\alpha$ reduces this to the
	 previously considered case.)

\subsection{Case $\alpha\geq 0$, \eqref{2redgensym}: generalized (OMI)}
Here, columns 1, 3, and 4 of the righthand side of \eqref{2redgensym} are majorized for player 2 by
column 3, so may be ignored in the synchronous coalition game. This gives a trivial $2\times 1$ reduced game
with payoff matrix
\be\label{redbloc2}
 \bp -\alpha   \\  -1   \ep,
 \ee
 evidently returning value $V_S=\max\{-1,-\alpha\}$ to player 1.
 As this corresponds to a pure, or deterministic strategy pair $(1,2)$ for players 2-3,
 we have in this case $V_S=V_A<V_N=0$, similarly as in the basic (OMI) case of Section \ref{s:omi}.

\subsection{Case $\alpha \leq 0$} 
In the case $\alpha\leq 0$, \eqref{redgensym}, player 1 can force a return of zero by
the choice $2$, hence $V_S=V_A=V_N=0$.
In case $\alpha\leq 0$, \eqref{2redgensym}, player 1 can force a return of zero by
the choice $1$, hence again $V_S=V_A=V_N=0$.

\subsection{Summary}\label{s:sum}
Collecting the above conclusions, we have the following results categorizing possible forms and
behavior for general symmetric $2 \times 2\times 2$ zero-sum games.

\bt\label{genthm}
Any symmetric $2\times 2\times 2$ zero sum game may be reduced by rescaling/symmetry to 
either the trivial zero game, or a game of form \eqref{redgensym} or \eqref{2redgensym}. 
In the first case, $V_S=V_A=V_N=0$, and in the second $V_S<V_A=V_N=0$.
In the third case, $V_S=V_A<V_N=0$ for $\alpha>0$ and $V_S<V_A=V_N=0$ for $\alpha\leq 0$.
\et

\section{Example 3: three-player Rock-paper-scissors}\label{s:eg3}
We next examine two related $3\times 3\times 3$ odds and evens games discussed in \cite[Appendix D]{BLPWZ},
which could be considered as three-player generalizations of Rock-Paper-Scissors.
These illustrate that moving from $2\times 2\times 2$ to $3\times 3\times 3$ games opens
up the final new possibility for behavior of
$$
V_S<V_A<V_N=0.
$$
The descriptions and analysis below are paraphrased from \cite[Appendix D]{BLPWZ},
except for the discussion of local minimizers which is new.

\subsection{Odd man in}\label{s:in}
In this version, each player chooses a value 1, 2, or 3 for "Rock", "Paper", or "Scissors". 
If all choices are the same, or all are different, there is no payoff. 
If two players choose a common number, however, and the third player
a different one, then the first two each pay a value of 1 to the third, i.e., the first two receive payoff -1
and the third +2. Clearly, the strategy distribution $(1/3, 1/3, 1/3)$ for player 1 gives average 
return of $+2/3$ if the other two players play the same number, and $-2/3$ if they play different numbers.
Thus, player 1 can force $\geq -2/3$.  On the other hand,  if players 2-3 choose with equal probability
between pairs of choices $(1,2)$, $(1,3)$, and $(2,3)$, then the average payoff to player 1 is independent
of player 1's choice of strategy, and equal to $(2/3)\times (-1) + (1/3)\times (0)=-2/3$.  Thus, players
2-3 can force a return of $\leq -2/3$ to player 1 by synchronous coalition play, and the value of the 
player 1 vs. players 2-3 game is $-2/3$. 

On the other hand, let $y:=(y_1,y_2,y_2)$ and $z:=(z_1,z_2,z_3)$
denote probability distributions describing mixed strategies for players 2 and 3.
Then, it is readily computed \cite{BLPWZ} that the payoff to player 1 is
\ba\label{aspay}
&\hbox{ \rm $\Psi_1(y,z):=2y\cdot z - (y_1+z_1)$ for player 1 choice 1,}\\
&\hbox{ \rm $\Psi_2(y,z):=2y\cdot z - (y_2+z_2)$ for player 1 choice 2,}\\
&\hbox{ \rm $\Psi_3(y,z):=2y\cdot z - (y_3+z_3)$ for player 1 choice 3.}\\
\ea
The value $\Psi(y,z):= \max_j \Psi_j(y,z)$ is thus the minimum value forceable by choice $(y,z)$,
and
$$
\overline{V}=\min_{y,z} \Psi(y,z)
$$
is the minimum value forceable by players 2-3 via asynchronous play, and by continuity of $\Psi$ is
achieved for some feasible pair of strategies $(y_*,z_*)$.

Noting that the average of $\Psi_j$ is 
$$
2 y\cdot z - (1/3)\sum_j (y_j+z_j)= 2 y\cdot z - (2/3)\geq -2/3,
$$
we have that $\Psi(y,z)\geq -2/3$, with equality if and only if simultaneously $y\cdot z=0$
and $(y_j+z_j)=2/3$ for all $j$. But, these together imply that one of each pair $y_j, z_j$ has
value zero and the other value $2/3$, which is impossible to reconcile with $\sum_j y_j=\sum_j z_j=1$.
Thus, evaluating at $(y,z)=(y_*,z_*)$, we obtain $\overline{V}= \Psi(y_*,z_*)>-2/3$, verifying
that there is indeed a gap between this value and the value $-2/3$ forceable by synchronous coalition play.
Indeed, the optimum asynchronous strategy 
can be shown to be $y_*=(1,0,0)$, $z_*=(0, 1/2, 1/2)$, and symmetric permutations thereof, forcing an expected
payoff to player 1 of $\leq -1/2$: thus, 
$$
-2/3=V_S< V_A=-1/2< V_N=0,
$$
leaving a gap of $-1/2-(-2/3)=1/6$ between $V_S$ and $V_A$ and a gap of $1/2$ between $V_A$ and $V_N$.

As regards local minima for the asynchronous game, we may first check readily that the unique critical points
in each of the three regions above are $y=(4/6,1/6,1/6)$, $z= (4/6,1/6,1/6)$ for $\Psi_1$, and symmetric
rearrangements for $\Psi_2$ and $\Psi_3$. But, these are outside the ranges of validity of the $\Psi_j$;
for instance, $\Psi_1$ is valid only where $y_1+z_1$ minimizes $y_j+z_j$.
Let us check next on the boundary $y_1+z_1=y_2+z_2=h$, and symmetric rearrangements. Here, we find that
the unique critical points are $y= z= (5/12,5/12, 2/12)$, and symmetric rearrangements.  Again, this
is out of the range of validity.  

Checking on the triple interior boundary $y_1+z_1=y_2+z_2= y_3+z_3=2/3$, 
we may set $z_1=2/3-y_1$, $z_2=2/3-y_2$, and minimize the resulting function
\be\label{triplered}
\check \psi(y_1,y_2):= 2\Big( y_1(2/3-y_1)+ y_2(2/3-y_2)\Big) - 2/3
\ee
on the domain
\be\label{dom}
1/3\leq y_1+y_2 \leq 1, \qquad 0\leq y_1,y_2\leq 2/3.
\ee
We find that the unique interior critical point is the Nash equilibrium $y=z=(1/3,1/3,1/3)$, 
which is (automatically) in the range of validity. However, this is a maximum and not a minimum for
the problem \eqref{triplered}, since $2 w(2/3-w)$ is maximized at $w=1/3$.
The minima with respect to this restricted problem thus occur at the boundary points 
$$
(y_1,y_2)=(0,1/3), (1/3,0), (2/3,0) (2/3,1/3), (1/3,2/3), (0, 2/3),
$$
giving values $\Psi(y,z)=-4/9>-1/2=V_A$. Thus, they are at best local and not global minimizers,
and are the only candidates for local minimizers lying on the triple interior boundary.

Further analysis yields that they are indeed local minimizers with respect to general admissible perturbations
as well. 
For, taking without loss of generality $y=(0,1/3,2/3)$, $z=(2/3, 1/3,0)$, and assuming by symmetry that
the perturbed $y$, $z$ feature either a) $y_1+z_1$ is minimum, or b) $y_2+z_2$ is minimum, let us consider
each case in turn. In case a), we must have
$$
y=(\theta,1/3+ \gamma + \beta + \delta -\theta, 2/3-\gamma -\beta-\delta), 
$$
$$
z=(2/3-\theta-\beta, 1/3+\theta + \beta -\gamma, \gamma),
$$
with $\theta,\gamma,\delta\geq 0$ and $\beta>0$, giving 
$$
\psi(y,z)=
2\Big(  \theta(2/3-\theta -\beta) + (1/3 +\gamma + \beta + \delta -\theta)(1/3+ \theta +\beta -\gamma)
+ (2/3-\gamma -\beta-\delta)\gamma \Big) - (2/3-\beta),
$$
giving first variation
$$
(2/3)(2\theta + \gamma + \beta + \delta -\theta + \theta + \beta -\gamma + 2\gamma) - \beta=
(2/3)(2\theta  + \delta +  2\gamma) + \beta/3>0.
$$

Similarly, in case b), we must have
$$
y=(\gamma,1/3+ a ,2/3- a - \gamma ),
$$
$$
z=(2/3-\gamma -\theta+\delta, 1/3-a- \theta, \gamma+ 2 \theta -\delta +a), 
$$
with $\beta,\gamma,\delta\geq 0$, 
$\gamma+ 2 \theta -\delta +a\geq 0$, and $\theta>0$, giving first variation
$$
(2/3)\Big(2\gamma +a -a -\theta + 2(\gamma +2\theta -\delta + a)\Big) +\theta\geq \theta/3>0.
$$
Taken together, this verifies that $y=(0,1/3 ,2/3 )$, $z=(2/3, 1/2,0)$ is a (nonsmooth) local minimizer,
and similarly for its symmetric rearrangements.

Finally, a tedious case-by-case analysis, omitted, shows that there are no local minimizers on the boundary of
the domain $y_1,y_2,z_1,z_2\geq 0$, $y_1+y_2, z_1+z_2 \leq 1$, other than the {\it global minimizers}
$(y_1,y_2)=(1,0)$, $z=(0,1/2)$; 
$(y_1,y_2)=(0,1)$, $z=(1/2,0)$; 
and $(y_1,y_2)=(0,0)$, $z=(1/2,1/2)$,
together with the local minimizers just determined.  This accounts for all global and local minimizers.
Finally, returning to the Nash equilibrium, we note that by definition it is minimum with respect to
perturbations involving one player at a time, so is neither a local minimum nor a local maximum, but
a nonsmooth saddle.

We record this as follows.

\begin{proposition}\label{RPSOMIprop}
	For 3-player Rock-Paper-Scissors (OMI),
\be\label{RPSomival}
-2/3=V_S< V_A=-1/2< V_N=0.
\ee
Asynchronous global minima are achieved at $y=(1,0,0),z=(0,1/2,1/2)$;
$y=(0,1,0),z=(1/2,0,1/2)$; and $y=(0,0,1),z=(1/2,1/2,0)$, while asynchronous local minima are
achieved at 
$y=(0,1/3,2/3),z=(2/3,1/3,0)$; 
$y=(2/3,1/3,0),z=(0,1/3,2/3)$; 
$y=(1/3,0, 2/3),z=(1/3,2/3,0)$; 
$y=(1/3,2/3, 0),z=(1/3,0, 2/3)$; 
$y=(0,2/3, 1/3),z=(2/3, 0, 1/3)$; 
and $y=(0,1/3,2/3),z=(2/3, 1/3,0)$. 
The Nash equilibrium $y=z=(1/3,1/3,1/3)$, meanwhile, is a nonsmooth saddle.
\end{proposition}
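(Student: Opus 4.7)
The plan is to verify the three values $V_N = 0$, $V_S = -2/3$, $V_A = -1/2$ in order, then classify the global and local minima of the asynchronous minimax problem, and finally check the saddle character of the Nash equilibrium. The values $V_N$ and $V_S$ follow from the informal computations in the text preceding the proposition: $V_N = 0$ is immediate from symmetry and the zero-sum property, while for $V_S$ I would argue that player 1 forces $\geq -2/3$ using the uniform strategy $(1/3,1/3,1/3)$ (since the expected payoff depends only on whether players 2--3 match, yielding $+2/3$ or $-2/3$), and players 2--3 force $\leq -2/3$ by selecting pure pairs among $\{(1,2),(1,3),(2,3)\}$ with equal probability, which makes all three of player 1's pure responses return exactly $-2/3$.

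For $V_A$, I first obtain the upper bound $V_A \leq -1/2$ by direct evaluation of \eqref{aspay} at $y_* = (1,0,0)$, $z_* = (0,1/2,1/2)$. The matching lower bound $V_A \geq -1/2$ is the delicate step and is the main obstacle; I would establish it by partitioning the feasible region into subregions indexed by which $\Psi_j$ achieves the maximum in \eqref{aspay}. Within each open region the problem reduces to minimizing a single bilinear form $\Psi_j(y,z)$, whose unique interior critical point (as sketched in the text) lies outside the region of validity. Minimization then reduces to three classes of boundary: (i) the pairwise equalities $y_i + z_i = y_k + z_k$, whose interior critical point $y=z=(5/12,5/12,2/12)$ and its rearrangements lie outside their range of validity; (ii) the triple interior boundary $y_j + z_j = 2/3$ for all $j$, where substitution yields \eqref{triplered} on the domain \eqref{dom}, whose unique interior critical point is the Nash equilibrium (a maximum of the reduced problem, not a minimum), with boundary minima at the six listed points giving value $-4/9$; and (iii) the simplex boundary $y_i = 0$ or $z_i = 0$ (including corners $y_i = 1$), where a case-by-case analysis yields the three listed global minimizers with value $-1/2$ and no smaller value elsewhere.

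To classify local minimizers at the six points on the triple interior boundary, I would verify strict positivity of the first variation in all admissible perturbation directions. Following the text, any admissible perturbation must belong to one of two cases determined by which index attains the new minimum of $y_i + z_i$; in each case I would parameterize the perturbation by nonnegative increments $\theta,\gamma,\beta,\delta$ with one of them strictly positive, substitute into the relevant $\Psi_j$, and check that the linear part of the expansion is a strictly positive linear combination of those increments, exactly as displayed in the running text. The remaining tedious boundary analysis (to rule out further local minima outside the triple interior boundary, except for the three global minimizers) can be handled by systematic examination of each face of the simplex using the bilinear structure of $\Psi_j$; each individual calculation is short, but the bookkeeping is the principal source of work.

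Finally, the Nash equilibrium $y = z = (1/3,1/3,1/3)$ is a nonsmooth saddle: by definition of a symmetric Nash equilibrium, any perturbation of a single mixed strategy $y$ (with $z$ fixed) does not decrease the max payoff to player 1, so this point is not a local maximum; conversely, along the restriction \eqref{triplered} it is the strict interior maximum of the reduced objective, so it is not a local minimum either. Combining the classification of global minima, local minima, and the saddle behavior at the Nash point completes the proof. The main technical obstacle throughout remains the lower bound $V_A \geq -1/2$ and, concurrently, the boundary case analysis on the simplex, since the piecewise-bilinear nature of $\Psi$ forces separate treatment of many subregions.
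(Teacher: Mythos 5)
Your proposal is correct and follows essentially the same route as the paper: the same forcing arguments for $V_S$, the same region-by-region reduction of the asynchronous minimax via \eqref{aspay} (interior critical points out of range, pairwise and triple interior boundaries including \eqref{triplered} on \eqref{dom}, then the simplex faces), the same first-variation verification of the six local minimizers, and the same saddle argument at the Nash point. The only caveat is that, like the paper, you defer the exhaustive boundary bookkeeping ruling out further local minima, so your sketch matches the paper's level of rigor rather than improving on it.
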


\br\label{locrmk}
The appearance of local (nonglobal) asynchronous minimizers in Proposition \ref{RPSomival} 
is significant as a major difference from the convex synchronous minimization problem,
present already in this simple setting.
Certainly it complicates numerical estimation of minima discussed in Section \ref{s:num} by
descent or other iterative methods, as randomly chosen starting data may lie in the basin of
attraction of a local but not global minimizer, thus returning a local and not the correct global minimum
as numerical approximation.
\er

\br\label{KKTrmk}
In more complicated situations, the  search for nonsmooth minimizers would be more systematically done 
by checking Karush-Kuhn-Tucker conditions \cite{BMS}.
\er

\subsection{Odd man out}\label{s:out}
Next, consider the same game, but with payoff function multiplied by $-1$: that is, the ``reverse'' game,
in which the odd player is penalized instead of rewarded.
Here, an optimum synchronized strategy is a blend of pure strategy pairs 1-1, 2-2, 3-3, 
each chosen with probability $1/3$, yielding value $(2/3)(-2)=-4/3$.
The symmetric Nash equilibrium may be computed \cite{BLPWZ} to be
$x=y=z=(1/3,1/3,1/3)$, returning payoff zero to all players.  

However, the optimum value forceable by asynchronous coalition of players 2-3 is now
$
\tilde{V}= \min_{y,z}\tilde \Psi(y,z)$, where
$$
\tilde \Psi(y,z): =\min -\psi_j(y,z)= -2 y\cdot z + \max_j (y_j+z_j),
$$

The effect of changing the sign of the payoff function in going from OMI to OMO, is to take $\Psi_j$ to
$-\Psi_j$, but changing the domains of validity from $(y_j+z_j)$ minimal (OMI) to 
to $(y_j+z_j)$ maximal (OMO).
This has the effect that the disallowed critical points
$y=(4/6,1/6,1/6)$, $z= (4/6,1/6,1/6)$ for $\Psi_1$, and symmetric versions for $\Psi_2$ and $\Psi_3$, are
now allowed, so that valid interior critical points do occur. On the other hand, the associated Hessian
$-2y\cdot z$ is indefinite, hence they are not local minimizers, but saddles.

Similarly, on the interior boundary $y_1+z_1=y_2+z_2=:h$, we find that the formerly disallowed critical point
$y=z=(5/12,5/12)$ (and symmetric permutations thereof) is now allowed, hence must be checked further.
However, considering the restricted class of competitors $y_1=y_2=z_1=z_2$ gives payoff function
$\psi(y_2)=-12y_2^2+10y_2 -2$, which satisfies $\psi'(y_2)= -24y_2 + 10$ vanishing at $y_2=5/12$,
but $\psi''(y_2)=-24<0$ showing that it is a maximum and not a minimum. Hence, this too can be discarded,
corresponding to a nonsmooth saddle.
The endpoints $h=0$ and $h=1$, however, give $(y_1,y_2)=(z_1,z_2)=(0,0)$ and
$(y_1)=(z_1)=(0)$, respectively, reducing the problem to 
a trivial one-strategy and the two-strategy Odds-Evens case, each of which feature $V_A=0$.
Hence, these furnish in the first case the minimizer $y=z=(0,0,1)$ and in the second
the previously determined minimizers $(y,z)=(1,0,0)$,
$(y,z)=(0,1,0)$, and $y=z=(1/2,1/2,0)$ of the two-strategy case, along with symmetric rearrangements.

Finally, the interior double boundary $y_1+z_1=y_2+z_2=y_3+z_3$ 
yields as before the Nash equilibrium $y=z=(1/3,1/3,1/3)$, or global minimimum, as the unique interior 
minimizer.
On the domain boundaries $y_1=0$, $y_2=0$, $y_1+y_2=1$, $z_1=0$, $z_1=1$, $z_1+z_2=1$,
meanwhile, there appear the already noted global minimizers $y=z=(1,0,0)$, and symmetric rearrangements 
thereof, but no other local minimizers. All of the above-described minima have value zero.

Thus, in the reverse (OMO) game,
$$
-4/3=V_S < V_A=V_N=0.
$$
We record this as follows.\footnote{This repairs an error in \cite{BLPWZ}, both in the proof, and
the resulting omission of several minimizers.}

\begin{proposition}\label{RPSOMOprop}
For 3-player Rock-Paper-Scissors (OMO),
\be\label{RPSomoval}
-4/3=V_S < V_A=V_N=0.
\ee
Asynchronous global minima are achieved at $y=(1/3,1/3,1/3),z=(1/3,1/3,1/3)$;
	$y=(1,0,0),z=(1,0,0)$; $y=(0,1,0),z=(0,1,0)$; and $y=(0,0,1),z=(0,0,1)$;
	and 
	$y=(1/2,1/2,0),z=(1/2,1/2,0)$, $y=(0,1/2,1/2),z=(0,1/2,1/2)$, and $y=(1/2,0 , 1/2),z=(1/2,0, 1/2)$,
	with no other local minima.
\end{proposition}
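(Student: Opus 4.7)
The plan is to pin down $V_S=-4/3$ by matching strategies, establish $V_A=V_N=0$ via a sharp pointwise lower bound on the asynchronous payoff whose equality case produces exactly the listed global minimizers, and finally to rule out extraneous local minima by a region-by-region analysis of the piecewise-quadratic payoff.

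To start, I would compute $V_S$ from both sides. The synchronous coalition distribution placing probability $1/3$ on each pure pair $(j,j)$ forces expected payoff $(1/3)(0)+(2/3)(-2)=-4/3$ against any pure response of player 1, so $V_S\le-4/3$. Conversely, if player 1 randomizes uniformly then $(1/3)\sum_{i_1}\psi(i_1,i_2,i_3)$ evaluates to $-4/3$ when $i_2=i_3$ and $+2/3$ when $i_2\ne i_3$, so the inner minimum over $(i_2,i_3)$ is $-4/3$, yielding $V_S\ge-4/3$.

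Next I would establish $V_A=0$. Writing $\tilde\Psi_j(y,z)=-2\,y\cdot z+(y_j+z_j)$ and $\tilde\Psi=\max_j\tilde\Psi_j$, the heart of the argument is the inequality
\[
\max_j(y_j+z_j)\;\ge\;2\,y\cdot z,
\]
obtained by chaining $4y_jz_j\le (y_j+z_j)^2\le m(y_j+z_j)$ with $m:=\max_j(y_j+z_j)$ and summing over $j$, using $\sum_j(y_j+z_j)=2$. This gives $\tilde\Psi\ge 0$, hence $V_A\ge 0$; combined with $V_A\le V_N=0$ one obtains $V_A=V_N=0$. Equality in the chain forces, for each $j$, either $y_j=z_j=0$ or $y_j=z_j=m/2$. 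If $k\in\{1,2,3\}$ coordinates are active, then $k\cdot m/2=1$ and one reads off the three families of global minimizers: pure pairs $(k=1)$, two-strategy uniform pairs $(k=2)$, and the Nash pair $(k=3)$, together with symmetric permutations.

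The main obstacle is ruling out further local minimizers, for which I would exploit that $\tilde\Psi$ is piecewise quadratic across the partition by which $y_j+z_j$ is maximal. Inside a single-maximum region, eliminating the simplex constraints by $y_3=1-y_1-y_2$ and $z_3=1-z_1-z_2$ yields a unique stationary point, which by direct calculation is $y=z=(2/3,1/6,1/6)$ up to permutation, with value $1/3$ and a Hessian of off-diagonal block form whose eigenvalues compute to $\pm 6,\pm 2$, hence a saddle. On a two-face $y_i+z_i=y_j+z_j$, restricting to the diagonal $y_i=y_j=z_i=z_j=t$ reduces the payoff to $-12t^2+10t-2$, maximized at $t=5/12$ with negative second derivative, again a saddle. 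On the triple-tie locus $y_j+z_j=2/3$ for all $j$, the reduced objective is concave and its unique interior stationary point is the Nash pair, a global minimizer. On the domain boundaries $y_j=0$ or $z_j=0$, the problem collapses to sub-games already handled in Sections \ref{s:eg1}--\ref{s:eg2}, contributing only two-strategy minimizers already on the list. A careful enumeration, ideally organized via the Karush-Kuhn-Tucker conditions of Remark \ref{KKTrmk}, then shows the classification is exhaustive.
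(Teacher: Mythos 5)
Your proposal is correct in substance but reaches the heart of the result by a genuinely different route than the paper. For the global structure, the paper works stratum by stratum: it locates the interior critical points $y=z=(2/3,1/6,1/6)$ (saddles, by indefiniteness of the Hessian of $-2y\cdot z$), discards the two-face candidate $y=z=(5/12,5/12,1/6)$ via the same diagonal restriction $-12t^2+10t-2$ you use, reduces the triple-tie locus to the Nash point, and reads off the value $V_A=0$ only at the end, from the fact that all surviving minima have value zero. Your key lemma $\max_j(y_j+z_j)\ge 2\,y\cdot z$, proved by chaining $4y_jz_j\le(y_j+z_j)^2\le m(y_j+z_j)$ and summing, short-circuits this: it gives $\tilde\Psi\ge0$ pointwise, hence $V_A\ge0$ (so $V_A=V_N=0$ since $V_A\le V_N$), and its equality case ($y_j=z_j\in\{0,m/2\}$ for each $j$, $k\,m=2$) delivers exactly the seven listed global minimizers in one stroke — arguably more cleanly and more rigorously than the paper's enumeration, which the authors themselves note repairs an earlier omission. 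What the paper's stratified analysis buys, and what you still need it for, is the ``no other local minima'' claim; there your plan coincides with the paper's (interior saddles with eigenvalues $\pm6,\pm2$, the two-face diagonal restriction, the triple-tie locus, then the simplex boundary), and both treatments leave the boundary enumeration at the level of an asserted case check.

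Two small corrections to that last part. On the triple-tie locus the reduced OMO objective is $2\sum_j y_j^2-2/3$, which is \emph{convex}, not concave (concavity is the OMI case); with convexity your conclusion that the Nash pair is the unique minimizer of the restriction is right, but as written the claim would give a maximum. And the statement that the domain boundaries ``collapse to sub-games'' is literally true only when $y_j=z_j=0$ simultaneously; the mixed cases ($y_j=0<z_j$, etc.) are precisely where the deferred KKT-style enumeration must do real work — the paper is equally terse there, so this is not a gap relative to the paper's own proof, but it is the one place your argument is not yet self-contained.
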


\br\label{maximinrmk}
In all of these examples, a more straightforward way to compute the synchronous value $V_s$ is, using the
fundamental theorem of games, to evaluate the maximin 
$$
\hbox{\rm $\max_x  \sum_i\min_{jk}x_i P_{ijk}=:\max_x \Phi(x)$, where $\Phi(x):= \sum_i \min_{jk}x_i P_{ijk}$.}
$$
For example, in OMI, it is readily seen that $\Phi(x)= \min_j (x_j) -1$, so that $\min_x \Phi=-2/3$, achieved
at $x=(1/3,1/3,1/3)$
Similarly, in OMO, it is found that $\Phi(x)= 2( \min_j (x_j) -1)$, so that $\min_x \Phi=-4/3$, achieved
again at $x=(1/3,1/3,1/3)$.
\er

\section{Example 4: continuous Guts Poker and discretizations}\label{s:eg4}
Finally, we come to our main example, the continuous version of Guts Poker introduced in \cite{CCZ},
which can be played with any number of players $n \geq 2$.
In this game, players make an initial one unit ante into a pot, and are dealt continuous ``hands'' consisting
of I.I.D. random variable uniformly distributed on $[0,1]$.
On the count of three players either ``hold'' or ``drop'' their hands, with no further betting or cards dealt.
If only one player holds, they win the pot and the round is terminated.  If no players hold, the game is
redealt, starting over.  If $m\geq 2$ players hold, the player with highest ``hand'' wins the pot and the 
remaining $m-1$ players must match it, so that the stakes increase by factor $m-1$. A new hand
is then dealt to all players and the game played in the same way but with now higher stakes, this process
continuing until play is terminated by a single player holding.

As described in \cite{CCZ,BLPWZ}, the study of this variable-stakes ``generalized recursive game''\footnote{
	See \cite{E,Sh3} for related notions of recursive and stochastic games.}
can be reduced to the study of the ``single-shot'' game consisting of the outome of a single round.
A ``pure'' strategy for player $i$ for the one-shot game, indexed by $p_i^*\in [0,1]$, is the threshold type strategy
to hold for $p_i> p_i^*$ and otherwise drop.
A ``mixed,'' or ``blended'' strategy is a random mixture of pure strategies with a given probability weight.
The outcome for the single-shot game may be encoded by expected instantaneous return $\alpha(p_1,\dots, p_n)$ in
that round for a selection of pure strategies, together with expected stakes $\beta(p_1,\dots, p_n)$ 
for ensuing rounds.

It was shown in \cite{CCZ,BLPWZ} that players 2-$n$ working in synchronous coalition may force a negative
return for player 1 for the full recursive game if and only if they may force a negative instantaneous return for
the one-shot game with payoff function $\alpha(\dots)$.
Hence, we may focus in this discussion on the one-shot game, a classical continuous $n$-player game with no
recursive aspect.
It was shown analytically and numerically in \cite{CCZ,BLPWZ} that $V_S<0$ for this game.
Here, we investigate the remaining open question posed there whether the corresponding asynchronous
coalition value $V_A$ is equal to $V_S<0$ or $V_N=0$, or lies in the open interval $(V_S, 0)$,
showing for the three-player version that in vact $V_A=V_N=0$.

\subsection{Three-player payoff function for one-shot continuous Guts}\label{s:gutsfrag}
We start by recalling from \cite{CCZ}, without repeating the derivation,
the payoff function for one-shot continuous Guts.

\begin{proposition}[\cite{CCZ}]\label{3prop}
The one-shot payoff function for $3$-player continuous Guts is
	\ba\label{3alpha}	
	\alpha(p_1^*,p_2^*,p_3^*)&=
		\begin{cases}
		2p_1^*-p_2^*-p_3^*+(p_3^*)^3+3(p_2^*)^2p_3^*-4p_1^*p_2^*p_3^*, &
		p_1^*<p_2^*<p_3^*,\\
		2p_1^*-p_3^*-p_2^*+(p_2^*)^3+3(p_3^*)^2p_2^*-4p_1^*p_2^*p_3^*, &
		p_1^*<p_3^*<p_2^*,\\
		2p_1^*-p_2^*-p_3^*+(p_3^*)^3-3(p_1^*)^2p_3^*+2p_1^*p_2^*p_3^*, &
		p_2^*<p_1^*<p_3^*\\
		2p_1^*-p_2^*-p_3^*+(p_2^*)^3-3(p_1^*)^2p_2^*+2p_1^*p_2^*p_3^*, &
		p_3^*<p_1^*<p_2^*,\\
		2p_1^*-p_2^*-p_3^*-2(p_1^*)^3+2p_1^*p_2^*p_3^*, &
		p_2^*<p_3^*<p_1^*,\\
		2p_1^*-p_2^*-p_3^*-2(p_1^*)^3+2p_1^*p_2^*p_3^*, &
		p_3^*<p_2^*<p_1^*.
	\end{cases}\\
	\ea
\end{proposition}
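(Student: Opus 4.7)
Since \eqref{3alpha} is recalled from \cite{CCZ}, the proof is the direct computation given there; here is how I would organize it. Each player's hand $p_i$ is i.i.d. uniform on $[0,1]$, and under the pure threshold strategy $p_i^*$, player $i$ holds iff $p_i > p_i^*$. Thus $\alpha(p_1^*,p_2^*,p_3^*)$ is an expectation, over $(p_1,p_2,p_3)\in[0,1]^3$, of the one-round return $\Pi_1$ to player $1$ prescribed by the rules of Guts recalled in Section \ref{s:eg4}.

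The plan is to partition the cube $[0,1]^3$ into the eight subregions indexed by which subset $S \subseteq \{1,2,3\}$ of players holds (that is, $i \in S \iff p_i > p_i^*$), and evaluate $\int \Pi_1$ on each. On the $|S|\le 1$ regions, $\Pi_1$ is constant (the outcome is determined by who the lone holder is, or by the fact that none held), so each such contribution is a product of a fixed coefficient with a monomial in the $p_i^*$ and $1-p_i^*$. On the $|S|\ge 2$ regions, one further splits by which holder has the largest hand; the associated conditional probabilities, after cancelling the probability-of-holding factor, reduce to integrals of the form
$$\int_{p_i^*}^1 \prod_{j\in S\setminus\{i\}}(p_i-p_j^*)_+\,dp_i.$$
The closed-form evaluation of each such integral depends on the ordering of $\{p_j^* : j \in S\}$, because the positive parts $(p_i-p_j^*)_+$ simplify differently on the integration interval $[p_i^*, 1]$. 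This is precisely the source of the six-case structure in \eqref{3alpha}.

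Having set up this decomposition, for each of the six orderings of $(p_1^*,p_2^*,p_3^*)$ I would carry out the integrations, sum contributions over all subsets $S$ and ``who wins'' splits, and collect the result into a cubic polynomial. Two symmetries streamline the work: invariance of $\alpha$ under $p_2^*\leftrightarrow p_3^*$ pairs cases $1\leftrightarrow 2$, $3\leftrightarrow 4$, and $5\leftrightarrow 6$ (the last pair yielding \emph{identical} formulas, since when $p_1^*$ is the largest threshold, $p_2^*$ and $p_3^*$ enter player 1's payoff through a symmetric pair of integration regions); and the symmetric-Nash check $\alpha(p^*,p^*,p^*)=0$, forced by the zero-sum symmetric structure of Guts, provides a useful consistency diagnostic across all six formulas.

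The main obstacle is purely bookkeeping, not conceptual: six orderings and several non-trivial $|S|\ge 2$ subregions per ordering, each contributing a cubic polynomial in $(p_1^*,p_2^*,p_3^*)$, to be assembled without arithmetic error into the displayed closed forms. No idea beyond Fubini and elementary integration of piecewise-polynomial functions is required.
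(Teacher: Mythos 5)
The paper itself contains no proof of Proposition \ref{3prop}: the formula \eqref{3alpha} is recalled from \cite{CCZ} ``without repeating the derivation,'' so the only comparison available is with the computation in that reference. Your outline---conditioning on the set $S$ of holders, splitting the $|S|\ge 2$ events according to which holder has the highest hand, and reducing to integrals $\int_{p_i^*}^1\prod_{j\in S\setminus\{i\}}(p_i-p_j^*)_+\,dp_i$ whose closed forms depend on the ordering of the thresholds (the source of the six cases)---is precisely the direct computation by which \eqref{3alpha} is obtained in \cite{CCZ}, and your consistency checks ($p_2^*\leftrightarrow p_3^*$ symmetry, and $\alpha(p^*,p^*,p^*)\equiv 0$ from the symmetric zero-sum normalization) are legitimate diagnostics. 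The one caution is that your sketch defers all of the actual bookkeeping and takes the per-outcome credits and debits from the informal rules of Section \ref{s:eg4}; to land exactly on \eqref{3alpha} one must use the precise accounting of the instantaneous return $\alpha$ versus the continuation stakes $\beta$ fixed in \cite{CCZ}, an attribution the verbal description here leaves implicit, so the outline is a correct plan rather than a verification.
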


We next derive some basic properties we will need in the analysis.

\bl\label{C1lem}
The payoff function $\alpha$ of \eqref{3alpha} is $C^1$ on its entire domain $p_*\in [0,1]^3$.
It is concave in $p_1^*$ and individually convex in $p_2^*$ and $p_3^*$, but not jointly
convex in $(p_2^*,p_3^*)$.
\el

\begin{proof}
By symmetry, we may take without loss of generality $p_2^*\leq p_3^*$.
Computing partial derivatives in the resulting cases, we have
	\be\label{gradient}
	\partial_{p_1^*,p_2^*,p_3^*}\alpha=
	\begin{cases}
		\big(2-4p_2^*p_3^* , -1 + 6p_2^*p_3^* -4p_1^*p_3^*, -1+3(p_3^*)^2 + 3(p_2^*)^2 -4p_1^* p_2^* \big)
		& p_1^*\leq p_2^*\leq p_3^*,\\
		\big( 2- 6p_1^*p_3^* + 2 p_2^*p_3^*, -1+ 2p_1^*p_3^*, -1 + 3(p_3^*)^2 -3(p_1^*)^2 + 2p_1^*p_2^* \big)
		& p_2^*<p_1^*<p_3^*,\\
		\big( 2-6(p_1^*)^2+2p_2^*p_3^*, -1 + 2p_1^*p_3^*, -1 + 2p_1^*p_2^*  \big)
		& p_2^*<p_3^*<p_1^*\\
	\end{cases}
	\ee
	and
	\be\label{secondpart}
	(\partial_{p_1^*}^2 \alpha, \partial_{p_2^*}^2\alpha , \partial_{p_3^*}^2\alpha)=
	\begin{cases}
		\big(0 ,   6p_3^* , 6p_3^*\big)
		& p_1^*\leq p_2^*\leq p_3^*,\\
		\big( - 6p_3^* , 0,  6p_3^* \big)
		& p_2^*<p_1^*<p_3^*,\\
		\big( -12p_1^*, 0 , 0  \big)
		& p_2^*<p_3^*<p_1^*.\\
	\end{cases}
	\ee

	From \eqref{3alpha}-\eqref{gradient}, $C^1$ regularity then follows by inspection, 
	comparing values and partial derivatives at boundaries $p_1^*=p_2^*$ and $p_1^*=p_3^*$
	of the different domains of definition.
	Concavity/convexity with respect to individual variables $p_j^*$ then follow by concavity/convexity
	on separate domains of definition, which follows by inspection by \eqref{secondpart} together
	together with $0\leq p_j^*\leq 1$.
	It was shown by direct computation in \cite{CCZ}, on the other hand, 
	that $\alpha$ is not jointly convex in $(p_2^*,p_3^*)$.
\end{proof}

As shown in \cite{CCZ,BLPWZ}, the unique symmetric Nash equilibrium for $\alpha$ is 
\be\label{gutsnash}
(p_1^*,p_2^*,p_3^*)=(1/\sqrt{2}, 1/\sqrt{2}, 1/\sqrt{2}),
\ee
with value $V_N=0$, but the synchronous coalition value is $V_S<0$,
giving a winning outcome for players 2-3 for mixed synchronized strategies.
Our next result shows that there are no winning pure strategies for players 2-3, a first step in 
showing that $V_A=V_N=0$.
(If such strategies did exist, we would have instead $V_A<V_N$, as pure strategies are a
case of asynchronous ones.)

\bl\label{nosaddleprop}
The payoff function $\alpha$ of \eqref{3alpha}, considered as a game between player 1 and a coalition
of players 2-3 has no winning pure strategy solution for players 2-n, that is,
	\be\label{ns}
	\min_{p_2^*,p_3^*}\max_{p_1^*}\alpha(p_1^*,p_2^*,p_3^*)=0.
	\ee
	Moreover, $\max_{p_1}\alpha(p_1,p_2^*,p_3^*)$ has a single local and global minimum,
	at $p_2^*=p_3^*=1/\sqrt{2}$.
\el

\begin{proof}
	{\bf 1.}
	Without loss of generality (by symmetry) take $p_2^*\leq p_3^*$.
	Then, easy calculations show that $\alpha(p_2^*,p_2^*,p_3^*)\geq 0$
	whenever
	\be\label{A}
	(p_2^*)^2 + p_2^*p_3^*\geq 1
	\ee
	and $\alpha(p_3^*,p_2^*,p_3^*)\geq 0$ whenever
	\be\label{B}
	(p_3^*)^2\leq 1/2.
	\ee
	It remains to show that $\alpha(p_1^*,p_2^*,p_3^*)\geq 0$ for some $p_1^*$ when
	\eqref{A}-\eqref{B} both fail, that is, when
	\be\label{nA}
	(p_2^*)^2 + p_2^*p_3^*\leq 1
	\ee
	and
	\be\label{nB}
	(p_3^*)^2\geq 1/2,
	\ee
	to which case we now restrict.

	Computing $\partial_{p_1^*}$ for $p_2^*\leq p_1^*\leq p_3^*$, we have
	\be\label{midpart}
	\partial
	\partial_{p_1^*}\alpha (p_1^*,p_2^*,p_3^*)=2- 6p_1^*p_3^* + 2 p_2^*p_3^*,
	\ee
	hence
	$$
	\partial_{p_1^*}\alpha (p_2^*,p_2^*,p_3^*)=
	2(1- 2 p_2^*p_3^*) > 0
	$$
	or else 
		$$
		\begin{aligned}
			\alpha &= -p_2^*-p_3^*+(p_3^*)^3+3(p_2^*)^2p_3^*\\
			&=
			p_2^*(2p_2^*p_3^*-1 + p_3^* ((p_3^*)^2+(p_2^*)^2-1)\\
			&\geq
			p_2^*(2p_2^*p_3^*-1) + p_3^* (2  p_2^*(p_3^*-1) \geq 0.
		\end{aligned}
		$$

Likewise,
	$$
	\partial_{p_1^*}\alpha (p_3^*,p_2^*,p_3^*)=
	2- 6(p_3^*)^2 + 2 p_2^*p_3^* \leq 3-6(p_3^*)^2 <0
	$$
	by 
	\eqref{nB}.
	Thus, in the remaining case $p_2p_3<1/2$ there is a maximum with respect to $p_1^*$ in $(p_2^*,p_3^*)$, 
	at which $\partial_{p_1^*}\alpha=0$.
	
	Setting the derivative equal to zero in \eqref{midpart}, we obtain after rearrangement the maximal argument
	\be\label{argmax}
	p_1^*=(1/3)(1/p_3^*+ p_2^*),
	\ee
	or, equivalently,
	\be\label{equiv}
	 3p_1^* p_3^*= 1+ p_2^*p_3^*.
	\ee

	Using \eqref{equiv} to simplify \eqref{3alpha}(iii), we find at this maximal point that
	\ba
	\alpha(p_1^*,p_2^*,p_3^*)&=
	2p_1^*-p_2^*-p_3^*+(p_3^*)^3-3(p_1^*)^2p_3^* + 2p_1^*p_2^*p_3^*\\
	&= 2p_1^*-p_2^*-p_3^* + (p_3^*)^3 -p_1^*(1+ p_2^*p_3^*) + (2/3)p_2^*(1+ p_2^*p_3^*)\\
	&= p_1^*-(2/3)p_2^*-p_3^* + (p_3^*)^3 + (1/3)(p_2^*)^2 p_3^*.
	\ea
	Thus, multiplying by $3p_3^*$ and applying \eqref{equiv} again, we have
	\ba
	3p_3^* \alpha(p_1^*,p_2^*,p_3^*)&= 3 p_1^*p_3^* -2p_2^*p_3^*  + 3(p_3^*)^4 + (p_2^*)^2 (p_3^*)^2\\
	 &= (1+p_2^*p_3^*) -2p_2^*p_3^*  + 3(p_3^*)^4 + (p_2^*)^2 (p_3^*)^2\\
	 &= (1-p_2^*p_3^*) + 3(p_3^*)^4 + (p_2^*)^2 (p_3^*)^2,
	\ea
	whence, using \eqref{nB}, 
	$
	3p_3^* \alpha(p_1^*,p_2^*,p_3^*)\geq
	 (p_2^*)^2 + 3(p_3^*)^4 + (p_2^*)^2 (p_3^*)^2\geq 0.
	 $

	 Combining, we have $\min_{p_2^*,p_3^*}\max_{p_1^*}\alpha(p_1^*,p_2^*,p_3^*)\geq 0$.
	 But, by \eqref{gutsnash}, the Nash equilibrium strategy $p_2^*=p_3^*=1/\sqrt{2}$
	 guarantees a return to player 1 of $\leq 0$, whence we may conclude \eqref{ns}.

	 {\bf 2.} (Unique local minimum) As $\alpha$ is $C^1$ on its domain $[0,1]^3$, 
	 and $\alpha_{p_1p_1}<0$ unless $p_1^*\leq p_2^*, p_3^*$, implying uniqueness of
	 $argmin \alpha(\cdot, p_2^*,p_3^*)$,
	 any interior local minimum in $(p_2^*,p_3^*)$
	 of $\max_{p_1^*}\alpha(p_1^*,p_2^*,p_3^*)$ by Proposition \ref{critprop}
	 must be a critical (saddle) point, except possibly in the case 
	 $p_1^*\leq p_2^*, p_3^*$, in which $0=\alpha_{p_1}= 2-4p_2^*p_3^*$ implies $p_2^*p_3^*=1/2$,
	 and 
	 $$
	 \alpha= -p_2^*-p_3^*+(p_3^*)^3 +3(p_2^*)^2p_3^*= 1/4p_3^*-p_3^*+(p_3^*)^3.
	 $$
	 Differentiating along the path $p_2^*p_3^*=1/2$, parametrized by $p_3^*$,
	 we thus have 
	 $$
	 \begin{aligned}
		 d\alpha/dp_3^* &= -1/4(p_3^*)^2 -1 + 3(p_3^*)^2\\
		 &=  -(p_2^*)^2 -1 + 3/4(p_2^*)^2,
	 \end{aligned}
	 $$
	 which, by $(p_2^*)^2\leq 1/2$ is greater than or equal to $-1/2 -1 + 3/2=0$,
	 with equality only if $p_2=1/\sqrt{2}$. Thus, the only possible local minimum 
	 is at $p_2^*=p_3^*=1/\sqrt{2}$, which is the global minimum.

	 As remaining possible local minima are critical points, 
	 we need thus to check separately only for interior critical points and for local minima on the
	 boundaries $p_j^*=0,1$.

	 {\bf 2(a).} (Unique interior critical point)
	 From \eqref{gradient}, we readily find that the unique critical point
	 occurs at $p_1^*=p_2^*=p_3^*=1/\sqrt{2}$.
	 We examine each of cases \eqref{gradient}(i)-(iii) in turn.

	 {\it Case \eqref{gradient}(i)} ($p_1^*\leq p_2^*\leq p_3^*$)
	 Setting $\partial_{p_1^*}\alpha=0$, we have $1=2p_2^*p_3^*$. From $\partial_{p_2^*}\alpha=0$,
	 we then have $0=-1+6p_2^*p_3^*-4p_1^*p_3^*= 2 -4p_1^*p_3^*$, or $1=2p_1^*p_3^*$.
	 Combining gives $p_3^*=p_2^*$, from which we obtain $p_1^*=p_2^*=p_3^*$ by $p_1^*\leq p_2^*\leq p_3^*$,
	 and thus $p_j^*=1/\sqrt{2}$ for $j=1,\dots,3$.
	 
	 {\it Case \eqref{gradient}(ii)} ($p_2^*\leq p_1^*\leq p_3^*$)
	 From $\partial_{p_2^*}\alpha=0$ we obtain $1=2p_1^*p_3^*$, while from $\partial_{p_1^*}\alpha=0$
	 we obtain
	 $0=2-6p_1^*p_3^*+2p_2^*p_3^*=-1 + 2p_2^*p_3^*$, or $2p_2^*p_3^*=1$.
	 Combining, we have $p_1^*=p_2^*$, and thus $p_1^*=p_2^*=p_3^*$, giving again $p_j^*=1/\sqrt{2}$.
	 
	 {\it Case \eqref{gradient}(iii)} ($p_2^*\leq p_3^*\leq p_1^*$)
	 From $\partial_{p_3^*}\alpha=0$ we obtain $1=2p_1^*p_2^*$, while from $\partial_{p_2^*}\alpha=0$
	 we obtain $2p_1^*p_3^*=1$. Combining, we have $p_2^*=p_3^*$.
	 From $\partial_{p_1^*}\alpha=0$, finally, we obtain $0=2-6(p_1^*)^2+ 2p_2^*p_3^*= 3-6(p_1^*)^2$,
	 giving $p_1^*=1/\sqrt{2}$, and thus $p_2^*=p_3^*=1/\sqrt{2}$.

	 {\bf 2(b).} (No boundary local minima)
	 It remains to check the various boundaries $p_j^*=0,1$ and verify that no local minimum
	 in $(p_2^*,p_3^*)$ can occur there. We omit these straightforward computations.

	 {\it Case $p_1^*=0$, \eqref{gradient}(i)} ($p_1^*\leq p_2^*\leq p_3^*$)
	 For a local minimum, there must hold $0\geq \partial_{p_1^*}\alpha =2-p_2^*p_3^* $ and 
	 also $0=\partial_{p_2^*}\alpha=-1+6p_2^*p_3^*-4p_1^*p_3^*= -1+6p_2^*p_3^*$, or $ p_2^*p_3^*=1/6$,
	 a contradiction.

	 {\it Case $p_1^*=1$, \eqref{gradient}(iii)} ($p_2^*\leq p_3^*\leq p_1^*$)
	 For a local minimum, there must hold $0\leq \partial_{p_1^*}\alpha =2-p_2^*p_3^* $ and 
	 also $0=\partial_{p_2^*}\alpha= \partial_{p_3^*}\alpha$. The latter two equalities give
	 $p_2^*=p_3^*$, hence the first inequality becomes $0\leq -4 + 2(p_2^*)^2\leq -2$, a contradiction.

	 {\it Case $p_2^*=0$, $p_1^*\neq 0$, \eqref{gradient}(ii)} ($p_2^*\leq p_1^*\leq p_3^*$)
	 For a local minimum, $0\geq \partial_{p_2^*}\alpha =2-p_1^*p_3^* $ and 
	 $0=\partial_{p_1^*}\alpha= 1-6p_1^*p_3^*$, or $p_1^*p_3^*=1/6$, a contradiction.

	 {\it Case $p_2^*=0$, $p_1^*\neq 0$, \eqref{gradient}(iii)} ($p_2^*\leq p_3^*\leq p_1^*$)
	 For a local minimum, $0\geq \partial_{p_2^*}\alpha =2-p_1^*p_3^* $ and 
	 $0=\partial_{p_3^*}\alpha= 1-2p_1^*p_2^*=1$, a contradiction.

	 {\it Case $p_3^*=1$, $p_1^*\neq 1$, \eqref{gradient}(i)} ($p_1^*\leq p_2^*\leq p_3^*$)
	 For a local minimum, $0\leq \partial_{p_3^*}\alpha =2-p_1^*p_3^* $ and 
	 $0=\partial_{p_1^*} \alpha =\partial_{p_2^*} \alpha$. But, the last two already give 
	 $p_j^*=1/\sqrt{2}$ for all $j=1,\dots,3$, a contradiction.

	 {\it Case $p_3^*=1$, $p_1^*\neq 1$, \eqref{gradient}(ii)} ($p_2^*\leq p_1^*\leq p_3^*$)
	 Again, the requirements $0=\partial_{p_1^*} \alpha =\partial_{p_2^*} \alpha$
	 give $p_j^*=1/\sqrt{2}$ for all $j=1,\dots,3$, a contradiction.

	 Combining the results of the above cases, we find that there exist no local minima with respect
	 to $p_2^*$, $p_3^*$ on the boundary, completing the proof.
\end{proof}

\br\label{altNash}
The second half of the proof of Lemma \ref{nosaddleprop} gives in passing
an alternative proof that $(1/\sqrt{2},1/\sqrt{2}, 1/\sqrt{2})$ is a (unique)
symmetric Nash equilibrium, as a Nash equilibrium is necessarily a critical point of $\alpha$.
\er

\br\label{Nashrmk}
For the $n$-player game, the symmetric Nash equilibrium is $ p_j^*=(1/2^{1/(n-1)})$.
This may be recognized as the player 1 value $p_1^*$ for which there is $1/2$ probability 
that one of players 2-$n$ have a higher hand.
The explanation for this convenient rule of thumb
is that marginal return for deviation of player one from strategy $p_1^*$ is then zero,
a necessary condition for optimality.
For, it is $-(n-1)$ in the case that players $2$-$n$ have hands less than $p_1^*$ and drop,
since increase in $p_1$ then leads (up to $O(p_1-p_1^*)$) 
to loss of $p_1-p_1^*$ times winnings $n-2$ lost due to dropping instead of holding, so marginal 
return of 
$$
1/(p_1-p_1^*) \times (p_1-p_1^*) \times -(n-1)=-(n-1)
$$
in the limit as $p_1\to p_1^*$.
But, it is $+(n-1)$ in the case that one or more of players $2$-$n$ have hands greater than $p_1^*$, 
since increase in $p_1$ then leads to $p_1-p_1^*$ times losings avoided due to dropping instead of holding,
which are $(n-1)$ independent of the number of players holding greater than or equal to one, with remaining
future possible earnings due to growth of the pot independent of whether player one holds or drops.
As these cases are equally likely at probablility $1/2$, the total marginal return thus averages to zero
as claimed.
This point of view perhaps illuminates somewhat the proof of optimality in \cite{CCZ}.

Similarly, an interesting Guts variation is the ``Weenie rule,'' in which if all players drop then the player with
highest hand must match the pot.
In this case the marginal return for increase of player one from a given common strategy $p_*$ is
$(n-1)$ as usual in the case that one or more of players $2$-$n$ have hands greater than 
$p^*$, but $-2(n-1)$ in the case that none of players $2$-$n$ have hands greater than 
$p^*$. Thus, total marginal return is zero when the probability of the latter is twice that of the former,
i.e., the probability that players $2$-$n$ have hands $\leq p^*$ is $1/3$, or,  
equivalently, the Nash equilibrium strategy is
$(p^*)^{n-1}=1/3$. This recovers and further illuminates the optimality result of \cite[Appendix B]{BLPWZ}.
\er

\subsection{Special properties of continuous games}\label{s:contfrag}
We consider now the general case of an n-player continuous game, 
with payoff matrix $\Psi(x_1,\dots,x_n)$, $x_j\in [0,1]$.
A mixed strategy for such a game for player j consists of a probability distribution $\mu_j$ on $[0,1]$.
A mixed strategy for a synchronous coalition of players 2-n consists of a probability distribution $\nu$
on $[0,1]^{n-1}$, a mixed strategy for an asynchronous coalition a product distribution
$\nu(x_2, \dots, x_n)=\mu_2(x_2)\cdots \mu_n(x_n)$.

The continuous formulation imposes additional structure on $\Psi$ as compared to the payoff matrix
$P$ of the discrete case, allowing us to say more in certain special cases.
The first is somewhat hypothetical, as we do not have an interesting example of this case.

\begin{proposition}\label{minimaxprop}
	If $\Psi$ is concave in $x_1$ and convex in $(x_2, \dots, x_n)$, then there exist pure strategies
	$x_1^*$ for player 1 and $x_2^*,\dots, x_n^*$ for players 2-n forcing the optimal value $V_S$,
	whence $V_A=V_S$.
	If also $\Psi$ is symmetric, then there is an optimal repeated pure strategy for players 2-n, whence
	$V_S=V_A=V_N=0$.
\end{proposition}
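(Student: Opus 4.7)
The plan is to exploit the convex/concave structure of $\Psi$ via Jensen's inequality to reduce mixed strategies to pure ones on both sides, and then use the symmetry hypothesis to symmetrize the coalition's optimal pure profile.

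\textbf{First assertion ($V_A=V_S$).} For any synchronous mixed coalition distribution $\nu$ on $[0,1]^{n-1}$, set $\bar y(\nu):=E_\nu[(x_2,\dots,x_n)]$. Joint convexity of $\Psi$ in $(x_2,\dots,x_n)$ together with Jensen's inequality gives
$$
\Psi(x_1,\bar y(\nu))\;\le\;E_\nu\bigl[\Psi(x_1,x_2,\dots,x_n)\bigr]\qquad\text{for every }x_1\in[0,1].
$$
Taking $\max_{x_1}$ and then $\inf_\nu$ (with pure $\nu$ admissible as Dirac masses) yields
$$
V_S\;=\;\inf_{(x_2^*,\dots,x_n^*)\in[0,1]^{n-1}}\;\max_{x_1\in[0,1]}\Psi(x_1,x_2^*,\dots,x_n^*),
$$
with the infimum attained at some pure profile $(x_2^*,\dots,x_n^*)$ by compactness and continuity. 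The analogous argument, applied to the dual maximin form of \eqref{VS} using concavity of $\Psi$ in $x_1$, produces a pure player-$1$ optimizer $x_1^*$, so both sides force $V_S$ via pure strategies. Since $(x_2^*,\dots,x_n^*)$ is a product of point masses, it is a (degenerate) asynchronous strategy; it therefore forces $V_S$ in the asynchronous regime as well, giving $V_A\le V_S$, and combined with $V_S\le V_A$ from \eqref{Vs} this yields $V_A=V_S$.

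\textbf{Symmetric case.} Given the pure optimizer $(x_2^*,\dots,x_n^*)$, symmetry of $\Psi$ in $(x_2,\dots,x_n)$ ensures that every permutation is equally optimal. Set $\bar x:=\frac{1}{n-1}\sum_{j=2}^n x_j^*$ and apply Jensen a second time, now to the uniform distribution on the permutation group of $\{2,\dots,n\}$:
$$
\Psi(x_1,\bar x,\dots,\bar x)\;\le\;\frac{1}{(n-1)!}\sum_{\sigma}\Psi(x_1,x_{\sigma(2)}^*,\dots,x_{\sigma(n)}^*)\;=\;\Psi(x_1,x_2^*,\dots,x_n^*),
$$
the equality by symmetry. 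Taking $\max_{x_1}$, the repeated pure profile $(\bar x,\dots,\bar x)$ also attains $V_S$. At the fully symmetric point $(\bar x,\dots,\bar x)$ the zero-sum plus symmetry hypotheses on the $\Psi_i$ force $\Psi(\bar x,\dots,\bar x)=0$, so
$$
V_S\;=\;\max_{x_1}\Psi(x_1,\bar x,\dots,\bar x)\;\ge\;\Psi(\bar x,\dots,\bar x)\;=\;0,
$$
which combined with $V_S\le V_N=0$ forces $V_S=V_A=V_N=0$.

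\textbf{Main obstacle.} The conceptually essential step is the permutation-averaging argument in the symmetric case, which requires \emph{joint} (not merely individual) convexity in $(x_2,\dots,x_n)$. This is precisely the hypothesis that fails for continuous Guts by Lemma~\ref{C1lem}, consistent with the observed gap $V_S<V_N=0$ there; the asymmetry between the two convexity hypotheses is what makes the proposition informative rather than a tautology.
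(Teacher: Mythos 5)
Your proof is correct and follows essentially the same route as the paper's: Jensen's inequality under joint convexity to reduce the coalition to pure strategies (the paper simply cites von Neumann's concave--convex minimax theorem for this first step), the observation that pure strategies are of product type to conclude $V_A=V_S$, and permutation-averaging plus Jensen in the symmetric case. You additionally spell out the concluding step that the repeated pure profile satisfies $\Psi(\bar x,\dots,\bar x)=0$ by symmetry and the zero-sum property, which the paper leaves implicit.
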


\begin{proof}
	The first assertion is the classical minimax theorem of von Neumann \cite{vN}.
	The second is evident from the fact that pure strategies are a subset of strategies of product type.
	The third follows from the fact, by symmetry, that every permutation of an optimal pure strategy 
	for players 2-n is also optimal, whence, so also is the mixed synchronous strategy consisting of
	the average of these permutations.  But, by Jensen's inequality, using convexity in $x_2,\dots, x_n$,
	we have that the average of the payoffs is greater than or equal the payoff of the average,
	and so the averaged strategy must be optimal as well. But, this is a repeated pure strategy, 
	giving the result.
\end{proof}

The second is of more interest, applying directly to continuous Guts Poker.

\begin{proposition}\label{indprop}
	If $\Psi$ is individually convex in each of  $x_2, \dots, x_n$, then there exist pure strategies
	$x_2^*,\dots, x_n^*$ for players 2-n forcing the optimal value $V_A$, i.e.,
	\be\label{VAminimax}
	V_A= \min_{x_2^*, \dots, x_n^*} \max_{x_1^*} \Psi(x_1^*, x_2^*, \dots, x_n^*).
	\ee
\end{proposition}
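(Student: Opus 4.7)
The plan is to reduce the infimum over mixed product (asynchronous) strategies to the infimum over pure strategies, using Jensen's inequality one player at a time. Since pure strategies are themselves a particular case of product mixed strategies, the inequality $V_A \leq \min_{x_2^*,\ldots,x_n^*}\max_{x_1^*}\Psi$ is automatic, so the real work lies in the reverse direction.

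First I would observe that since the expected payoff against an asynchronous coalition $(\mu_2,\ldots,\mu_n)$ is linear in player~1's mixed strategy $\mu_1$, the supremum over $\mu_1$ equals the supremum over pure strategies $x_1 \in [0,1]$, and is attained by compactness under mild continuity of $\Psi$ (valid, e.g., for continuous Guts via Lemma~\ref{C1lem}). Thus the relevant quantity governing $V_A$ is
\begin{equation*}
f(\mu_2,\ldots,\mu_n) := \max_{x_1 \in [0,1]} \int \Psi(x_1,x_2,\ldots,x_n)\, d\mu_2(x_2)\cdots d\mu_n(x_n).
\end{equation*}

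Next comes the core step: replace the $\mu_j$ one at a time by Dirac masses at their means. Fix $j \in \{2,\ldots,n\}$, fix $x_1 \in [0,1]$, and fix $\mu_k$ for $k \neq 1,j$. By hypothesis $\Psi$ is convex in $x_j$ with all other arguments held fixed, and convexity is preserved under integration against a fixed measure in the remaining variables, so
\begin{equation*}
x_j \;\mapsto\; \int \Psi(x_1,x_2,\ldots,x_j,\ldots,x_n)\,\prod_{k\neq 1,j}d\mu_k(x_k)
\end{equation*}
is convex. Jensen's inequality, applied with $\bar{x}_j := \int x_j\, d\mu_j(x_j)$, then yields
\begin{equation*}
\int \Psi\,\prod_{k\geq 2} d\mu_k(x_k) \;\geq\; \int \Psi\big|_{x_j=\bar{x}_j}\,\prod_{k\neq j,\,k\geq 2} d\mu_k(x_k),
\end{equation*}
a pointwise inequality in $x_1$. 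Taking the max over $x_1 \in [0,1]$ preserves its direction, so replacing $\mu_j$ by $\delta_{\bar{x}_j}$ can only decrease (or leave unchanged) $f$. Iterating this substitution for $j = 2,3,\ldots,n$, noting that each step preserves the individual-convexity hypothesis in the remaining free arguments of $\Psi$, produces pure strategies $(\bar{x}_2,\ldots,\bar{x}_n) \in [0,1]^{n-1}$ with $\max_{x_1}\Psi(x_1,\bar{x}_2,\ldots,\bar{x}_n) \leq f(\mu_2,\ldots,\mu_n)$. Infimizing over $(\mu_2,\ldots,\mu_n)$ delivers $V_A \geq \min_{x_2^*,\ldots,x_n^*}\max_{x_1^*}\Psi(x_1^*,x_2^*,\ldots,x_n^*)$, giving \eqref{VAminimax}.

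I do not expect a significant obstacle: the argument is a clean one-variable-at-a-time application of Jensen. The only technical points are the regularity needed to upgrade sup/inf to max/min (continuity of $\Psi$ and compactness of $[0,1]^n$), and the observation that \emph{individual} rather than joint convexity is exactly what this reduction requires—this matters because joint convexity in $(x_2,\ldots,x_n)$ fails for Guts (Lemma~\ref{C1lem}) while individual convexity holds. Note that no convexity or concavity in $x_1$ is used.
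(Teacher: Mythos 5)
Your proof is correct and follows essentially the same route as the paper's: a coordinate-by-coordinate application of Jensen's inequality replacing each $\mu_j$ by the Dirac mass at its mean, showing any product mixed strategy for players $2$-$n$ is majorized by a pure one, with the max over player 1 handled by linearity (the paper phrases this via the fundamental theorem of two-player games). Your version merely spells out the pointwise-in-$x_1$ inequality and the compactness/continuity details more explicitly than the paper does.
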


\begin{proof}
	Applying Jensen's inequality in turn in each of the coordinates $x_2,\dots, x_n$, we find
	that any mixed product strategy for players 2-n is majorized by a pure strategy, given by
	the expectations of each $x_j$ under the associated probability distributions,
	whence the latter gives an optimal pure strategy for the asynchronous coalition game.
	The conclusion \eqref{VAminimax} then follows by the fundamental theorem of two-player games,
	considering the coalition of players $2$-$n$ as a single player.
\end{proof}

Proposition \ref{indprop} reduces the problem of finding $V_A$ from an infinite-dimensional to a finite-dimensional
optimization problem, the latter treatable in principle by standard Calculus.

\subsection{Conclusion}\label{s:conc}
Combining the results of the previous two subsections, we have the following definitive result,
answering in the affirmative the main open problem posed in \cite{BLPWZ}.

\begin{corollary}\label{gutscor}
	For 3-player continuous Guts Poker, $V_S< V_A=V_S=0$.
\end{corollary}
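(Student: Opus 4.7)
The plan is to simply stitch together the three major ingredients that have just been assembled: the convexity structure of the payoff $\alpha$, the dimension-reduction provided by Proposition \ref{indprop}, and the pure-strategy no-saddle result of Lemma \ref{nosaddleprop}. (I note that the displayed statement $V_S<V_A=V_S=0$ must be a typo for $V_S<V_A=V_N=0$; I will prove the latter.)

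First I would invoke Lemma \ref{C1lem}, which records that $\alpha(p_1^*,p_2^*,p_3^*)$ is individually convex in $p_2^*$ and individually convex in $p_3^*$. This is exactly the hypothesis of Proposition \ref{indprop} applied to $n=3$ with $\Psi=\alpha$ on $[0,1]^3$. Applying that proposition, the infinite-dimensional search for an optimal mixed product strategy for the coalition $\{2,3\}$ collapses to a pure-strategy search, so that
\be\label{reduce}
V_A \;=\; \min_{p_2^*,p_3^*\in[0,1]}\;\max_{p_1^*\in[0,1]}\alpha(p_1^*,p_2^*,p_3^*).
\ee
Second, I would read off from Lemma \ref{nosaddleprop} that the right-hand side of \eqref{reduce} equals zero; indeed, the lemma explicitly asserts $\min_{p_2^*,p_3^*}\max_{p_1^*}\alpha=0$ (attained at the symmetric Nash point $p_2^*=p_3^*=1/\sqrt{2}$). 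Hence $V_A=0=V_N$.

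Third, I would recall from \cite{CCZ,BLPWZ}, as cited earlier in Section \ref{s:eg4}, the known strict inequality $V_S<0$: synchronous mixed coalition strategies by players 2--3 force a strictly negative instantaneous return for player 1, even though pure strategies cannot. Combining the two conclusions yields the chain $V_S<0=V_A=V_N$, which is the corollary.

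The only nontrivial step is the middle one, but that work has already been done in Lemma \ref{nosaddleprop}; the corollary itself is essentially a packaging statement. The conceptual punchline I would emphasize is that the obstruction to improving $V_A$ below zero is precisely the failure of \emph{joint} convexity of $\alpha$ in $(p_2^*,p_3^*)$ (also recorded in Lemma \ref{C1lem}): individual convexity suffices to eliminate asynchronous mixing via Jensen's inequality, but joint convexity would be required to collapse synchronous mixing as well, and it is exactly this gap between individual and joint convexity that accounts for the strict inequality $V_S<V_A$.
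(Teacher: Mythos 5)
Your proposal matches the paper's own proof, which likewise cites Proposition \ref{indprop} together with Lemmas \ref{C1lem} and \ref{nosaddleprop} (with $V_S<0$ inherited from \cite{CCZ,BLPWZ} and later Corollary \ref{VScor}), and you correctly identify the typo: the statement should read $V_S<V_A=V_N=0$. Your closing remark on individual versus joint convexity is a nice gloss but not needed for the argument.
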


\begin{proof}
	This is an immediate consequence of Proposition \ref{indprop} and Lemmas
\ref{C1lem} and \ref{nosaddleprop}.
\end{proof}

\br\label{conjrmk}
We conjecture that the result of Corollary \ref{gutscor} holds true for the  $n$-player game as well.
However, we have not carried out the necessary analysis of the $n$-player payoff function needed
to conclude. This would be a very interesting problem for further analytical investigation.
\er


\subsection{The synchronous coalition revisited}\label{s:revisit}
Remarkably, returning to the synchronous coalition problem, we may obtain
an explicit value for $V_S$ by the same set of techniques.
Analogous to Proposition \ref{indprop}, we have the following complementary result applying to synchronous
coalitions.
\begin{proposition}\label{maximinprop}
	Let $\Psi$ as in Proposition \ref{indprop} denote a continuous payoff function.
	If $\Psi$ is concave in $x_1$, then there exists an optimal pure strategy
	$x_1^*$ for player 1 forcing the optimal value $V_S$, i.e.,
	\be\label{VSmaximin}
	V_S= \max_{x_1^*}\min_{x_2^*, \dots, x_n^*} \Psi(x_1^*, x_2^*, \dots, x_n^*).
	\ee
\end{proposition}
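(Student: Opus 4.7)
The plan is to mirror the proof of Proposition \ref{indprop}, applying Jensen's inequality on player 1's side instead of on the coalition's side. First I would invoke the fundamental theorem of two-player zero-sum games (in its continuous form for compact strategy spaces with continuous payoff, as already used implicitly for $V_S$ in \eqref{VS}), viewing the coalition of players 2-$n$ under synchronous play as a single player with strategy set $[0,1]^{n-1}$. This gives
\be\label{VSmixed}
V_S = \max_{\mu_1}\min_{\nu}\int \Psi(x_1,x_2,\dots,x_n)\,d\mu_1(x_1)\,d\nu(x_2,\dots,x_n),
\ee
where $\mu_1$ ranges over probability measures on $[0,1]$ and $\nu$ over probability measures on $[0,1]^{n-1}$.

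Next I would use concavity of $\Psi$ in $x_1$ to reduce player 1 to a pure strategy. For any fixed $\nu$, the function
$$
F_\nu(x_1) := \int \Psi(x_1,x_2,\dots,x_n)\,d\nu(x_2,\dots,x_n)
$$
is concave on $[0,1]$, being an average of concave functions. Setting $\bar x_1 := \int x_1\,d\mu_1(x_1)$, Jensen's inequality gives $\int F_\nu(x_1)\,d\mu_1(x_1) \leq F_\nu(\bar x_1)$, i.e., the pure strategy $\bar x_1$ does at least as well as $\mu_1$ against any coalition strategy $\nu$. Hence the outer maximum in \eqref{VSmixed} may be restricted to Dirac masses, so
$$
V_S = \max_{x_1^*\in[0,1]}\min_{\nu}\int \Psi(x_1^*,x_2,\dots,x_n)\,d\nu(x_2,\dots,x_n).
$$

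For the inner minimum, note that for fixed $x_1^*$ the integrand is a continuous function of $(x_2,\dots,x_n)$ on the compact set $[0,1]^{n-1}$, and the map $\nu \mapsto \int \Psi(x_1^*,\cdot)\,d\nu$ is linear in $\nu$. The minimum of a continuous linear functional over the convex set of probability measures on a compact space is attained at an extreme point, i.e., at a Dirac mass, yielding
$$
\min_{\nu}\int \Psi(x_1^*,x_2,\dots,x_n)\,d\nu = \min_{x_2^*,\dots,x_n^*}\Psi(x_1^*,x_2^*,\dots,x_n^*),
$$
which produces \eqref{VSmaximin}.

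The main obstacle, as in Proposition \ref{indprop}, is really only the invocation of the fundamental theorem of games in the continuous setting; everything else is a direct application of Jensen's inequality and the extreme-point characterization of Dirac measures. Since the paper already takes this minimax equality for granted for $V_S$ in the continuous Guts setting (cf.\ Section \ref{s:eg4}), no additional machinery is required beyond concavity and standard measure-theoretic facts.
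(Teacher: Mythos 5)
Your proposal is correct and follows essentially the same route as the paper: Jensen's inequality applied to the concave dependence on $x_1$ to reduce player 1 to a pure strategy, combined with the fundamental theorem of two-player games viewing the synchronous coalition of players $2$--$n$ as a single opponent (the reduction of the inner minimum to pure coalition strategies by linearity in $\nu$ is left implicit in the paper but is the same standard step you spell out). The only caveat, which you already flag, is that the continuous (infinite-dimensional) minimax theorem is taken for granted here, just as the paper does (cf.\ Remark \ref{usefulrmk}).
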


\begin{proof}
	The first assertion follows by Jensen's inequality as in the proof of Proposition
\ref{minimaxprop}, whence the second follows by the fundamental theorem of two-player games
	(as usual, considering the coalition of players $2$-$n$ as a single player opposing
	player 1).
\end{proof}

As Proposition \ref{indprop} did for $V_A$,
Proposition \ref{maximinprop} reduces the infinite-dimensional problem of finding $V_S$ to a Calculus
problem in an $n$-dimensional domain.
We recall from \cite[Prop. 6.8]{CCZ} and surrounding discussion the following result.

\begin{proposition} \label{3br}
For continuous 3-player Guts, the best response function is given by 
	  \be\label{R}
	  R(p_1^*):=\min_{p_2^*,p_3^*}\alpha(p^*_1,p^*_2,p^*_3)= \min \{\alpha_a(p_1^*), \alpha_b(p_1^*)\},
\ee
	  with respective values
	  \ba\label{3breq}
	  \alpha_a&=-\frac{1}{27}\Big((4(p_1^*)^2+6)^{\frac{3}{2}}+8(p_1^*)^3-36p_1^*\Big),\\
	  \alpha_b &=-\frac{2}{27}\Big((9(p_1^*)^2+3)^{\frac{3}{2}}-27p_1^*\Big) ,
	  \ea
	achieved at $(p_2^*, p_3^*)=(p_3^a,p_3^a)=((\sqrt{4(p_1^*)^2+6}-2p_1^*)^{-1}, (\sqrt{4(p_1^*)^2+6}-2p_1^*)^{-1})$ and
	  $(p_2^*, p_3^*)=(0,p_3^b)=\Big(0, \sqrt{\frac{3(p_1^*)^2+1}{3}}\Big)$.
	  Here $\alpha_1<0$ except at $p_1^*=1/\sqrt{2}$, where it is zero, and $\alpha_b(1/\sqrt{2})<0$.
\end{proposition}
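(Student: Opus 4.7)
The plan is to combine the individual convexity of $\alpha$ in each of $p_2^*$ and $p_3^*$ from Lemma \ref{C1lem} with the $(p_2^*,p_3^*)$-symmetry and the piecewise polynomial structure of $\alpha$ from Proposition \ref{3prop} to reduce the minimization of $\alpha(p_1^*,\cdot,\cdot)$ over $[0,1]^2$ to a short list of candidates. Individual convexity implies that the joint minimum, in each coordinate, is attained either at a stationary point or at a boundary value $0$ or $1$, so the natural candidates are a symmetric interior stationary point on the diagonal $\{p_2^*=p_3^*\}$ and a boundary minimum at $\{p_2^*=0\}$ (or, by symmetry, $\{p_3^*=0\}$).

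For candidate (a), restricting to $p_2^*=p_3^*=q$ in the region $p_1^*\le q$ (case (i) of \eqref{3alpha}) gives $\alpha(p_1^*,q,q)=2p_1^*-2q+4q^3-4p_1^*q^2$, whose unique positive critical point is determined by $6q^2-4p_1^*q-1=0$; rationalizing yields $q=p_3^a=(\sqrt{4(p_1^*)^2+6}-2p_1^*)^{-1}$, and substitution back, using the quadratic relation to reduce high powers of $q$, produces $\alpha_a$. To confirm that $(q,q)$ is a genuine joint minimum rather than a saddle, I would compute the Hessian of $(p_2^*,p_3^*)\mapsto \alpha$ in case (i), obtaining determinant $36(p_3^*)^2-(6p_2^*-4p_1^*)^2$, which at $(q,q)$ simplifies via $6q-4p_1^*=1/q$ to $36q^2-1/q^2\ge 0$ since $q\ge 1/\sqrt 6$. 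To rule out asymmetric interior critical points in case (i), I would eliminate $p_2^*$ from the system $\partial_{p_2^*}\alpha=\partial_{p_3^*}\alpha=0$ to obtain the biquadratic $36v^4-(12+16(p_1^*)^2)v^2+1=0$ in $v=p_3^*$; the smaller positive root, paired with the companion $p_2^*$ forced by the first equation, satisfies $p_2^*>p_3^*$ and thus lies outside the region defining case (i).

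For candidate (b), setting $p_2^*=0$ with $p_1^*\le p_3^*$ places us in case (iii) of \eqref{3alpha}, where $\alpha(p_1^*,0,p_3^*)=2p_1^*-p_3^*+(p_3^*)^3-3(p_1^*)^2 p_3^*$; convexity in $p_3^*$ yields the unique interior minimizer $p_3^*=p_3^b=\sqrt{(3(p_1^*)^2+1)/3}$ and value $\alpha_b$ after substitution. The one-sided minimality condition $\partial_{p_2^*}\alpha|_{p_2^*=0^+}\ge 0$ reduces to $2p_1^*p_3^b\ge 1$, delineating the range of $p_1^*$ over which (b) is a genuine candidate. Remaining boundary configurations, namely $p_2^*=p_3^*=1$ and $p_3^*=1$ with $p_2^*\in(0,1)$, are then compared directly to $\alpha_a$ and $\alpha_b$ using $C^1$ monotonicity at those edges and shown to produce strictly larger values.

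The main obstacle is the careful bookkeeping across the six subregions of \eqref{3alpha}, especially the range tracking for candidate (a): since $q\ge p_1^*$ is equivalent to $p_1^*\le 1/\sqrt 2$, for $p_1^*>1/\sqrt 2$ the symmetric critical point descends into cases (v)--(vi), where the formula for $\alpha$ is different and the analogous stationary analysis gives $q=1/(2p_1^*)$ with a value distinct from $\alpha_a$. It must be verified that in this regime $\alpha_b$ lies below both $\alpha_a$ and the case (v)--(vi) symmetric critical value, so that the envelope $R(p_1^*)=\min\{\alpha_a,\alpha_b\}$ remains correct throughout; this relies on continuity of $\alpha$ across region boundaries together with direct sign comparisons of the explicit closed-form expressions. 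The qualitative claims that $\alpha_a(p_1^*)\le 0$ with equality only at $p_1^*=1/\sqrt 2$, consistent with Lemma \ref{nosaddleprop}, and that $\alpha_b(1/\sqrt 2)<0$, then follow by direct substitution into the closed-form formulas.
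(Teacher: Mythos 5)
Your overall route is essentially the one the paper relies on (the paper itself defers the proof of Proposition \ref{3br} to \cite[Prop.~6.8]{CCZ}, recapping the method inside the proof of Proposition \ref{r3br}): locate the symmetric interior critical point $(q,q)$ in the region $p_1^*\le p_2^*,p_3^*$ from $6q^2-4p_1^*q-1=0$, take the boundary candidate $(0,p_3^b)$ on the edge $p_2^*=0$, verify the closed forms $\alpha_a,\alpha_b$ by substitution, and observe that for $p_1^*>1/\sqrt2$ the symmetric critical point leaves its region while $\alpha_b<\alpha_a$ keeps the envelope $\min\{\alpha_a,\alpha_b\}$ correct --- exactly the point the paper flags (footnote to Proposition \ref{r3br}) as repairing an omission in \cite{CCZ}. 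Your explicit computations check out: the quadratic for $q$ and its rationalized form, the case-(i) Hessian determinant $36(p_3^*)^2-(6p_2^*-4p_1^*)^2$ and its value $36q^2-1/q^2\ge 0$ at $(q,q)$, the biquadratic $36v^4-(12+16(p_1^*)^2)v^2+1=0$ whose second root gives $(p_2^*,p_3^*)=(q+2p_1^*/3,\,q-2p_1^*/3)$ and hence violates $p_2^*\le p_3^*$, the values $\alpha_a,\alpha_b$, and the one-sided condition $2p_1^*p_3^b\ge 1$.

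The genuine gap is that your candidate reduction never disposes of interior critical points in the mixed regions $p_2^*<p_1^*<p_3^*$ and $p_3^*<p_1^*<p_2^*$ (cases (iii)--(iv) of \eqref{3alpha}). These are not vacuous: solving $\partial_{p_2^*}\alpha=-1+2p_1^*p_3^*=0$ and $\partial_{p_3^*}\alpha=-1+3(p_3^*)^2-3(p_1^*)^2+2p_1^*p_2^*=0$ gives an admissible interior critical point for a range of $p_1^*$ that contains the relevant maximin location (e.g.\ $p_1^*=0.65$ yields $p_3^*\approx 0.77$, $p_2^*\approx 0.38$). Since $\alpha$ is $C^1$ in $(p_2^*,p_3^*)$, any interior local minimum must be such a critical point, so proving $R\ge\min\{\alpha_a,\alpha_b\}$ requires either ruling these out or comparing their values; your plan does neither. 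The repair is one line and is precisely the device the paper and \cite{CCZ} use (and that you already used in case (i)): in cases (iii)--(vi) one has $\partial^2_{p_2^*}\alpha=0$ and $\partial^2_{p_2^*p_3^*}\alpha=2p_1^*$ by \eqref{gradient}--\eqref{secondpart}, so the $(p_2^*,p_3^*)$-Hessian has determinant $-4(p_1^*)^2<0$ and every such critical point is a saddle; this also dismisses your case (v)--(vi) diagonal point $q=1/(2p_1^*)$ without any value comparison. A smaller caveat, shared with the statement and the paper: for $p_1^*>\sqrt{2/3}$ one has $p_3^b>1$, so $(0,p_3^b)$ exits $[0,1]^2$ and the formula must be read on the range where $p_3^b\le 1$ (which comfortably contains the maximin $p_1^*\approx 0.6437$); neither your argument nor the paper's addresses this explicitly.
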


\begin{corollary}\label{VScor}
	For continuous guts poker the one-shot game has value 
	\be\label{xvalue}
	V_S=\max_{p_1^*}\min\{\alpha_a(p_1^*), \alpha_b(p_1^*)\}<0.
	\ee
\end{corollary}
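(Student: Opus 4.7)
The plan is to read off the corollary as a direct combination of two structural facts already established: the individual-variable concavity/convexity of $\alpha$ from Lemma \ref{C1lem} with Proposition \ref{maximinprop}, plus the explicit best-response formula of Proposition \ref{3br}. So the ``proof'' is really just an assembly argument, with the only nontrivial piece being the strict inequality $V_S<0$, which requires a compactness/continuity remark.

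First, I would invoke Proposition \ref{maximinprop} applied to the 3-player Guts payoff $\alpha(p_1^*,p_2^*,p_3^*)$. By Lemma \ref{C1lem}, $\alpha$ is concave in $p_1^*$ (and continuous on the compact domain $[0,1]^3$), so the hypothesis of Proposition \ref{maximinprop} is met. The conclusion is the maximin representation
\[
V_S=\max_{p_1^*\in[0,1]}\min_{p_2^*,p_3^*\in[0,1]}\alpha(p_1^*,p_2^*,p_3^*).
\]
Second, I would substitute the identity from Proposition \ref{3br} for the inner minimum, namely $\min_{p_2^*,p_3^*}\alpha(p_1^*,p_2^*,p_3^*)=R(p_1^*)=\min\{\alpha_a(p_1^*),\alpha_b(p_1^*)\}$, which immediately gives the formula in \eqref{xvalue}.

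It remains to argue $V_S<0$. Here I would use the precise sign information recorded at the end of Proposition \ref{3br}: $\alpha_a(p_1^*)\leq 0$ with equality only at $p_1^*=1/\sqrt{2}$, and $\alpha_b(1/\sqrt{2})<0$. Hence for every $p_1^*\in[0,1]$ one of the two branches is strictly negative, so $\min\{\alpha_a(p_1^*),\alpha_b(p_1^*)\}<0$ pointwise. Since $\alpha_a,\alpha_b$ are continuous functions of $p_1^*$ on the compact interval $[0,1]$, the minimum of two continuous functions is continuous, and the maximum in $p_1^*$ is attained at some $\widehat{p}_1^*\in[0,1]$; the attained value is strictly negative, giving $V_S<0$.

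The main potential obstacle is a subtle one: one must be sure that ``strict pointwise negativity'' does not degenerate to a vanishing supremum in the limit as $p_1^*$ varies. This is precisely what compactness of $[0,1]$ and continuity of the two branches rule out, so no further estimate is needed. All other steps are quotations of previously stated results, which is why I would present this as a short corollary proof rather than a standalone argument.
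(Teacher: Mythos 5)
Your proposal is correct and follows essentially the same route as the paper: concavity in $p_1^*$ from Lemma \ref{C1lem} feeding into Proposition \ref{maximinprop} to get the maximin representation, then substitution of the best-response formula \eqref{R} from Proposition \ref{3br}. The only difference is that you explicitly spell out the strict negativity $V_S<0$ via pointwise negativity of $\min\{\alpha_a,\alpha_b\}$ plus continuity and compactness of $[0,1]$, a detail the paper leaves implicit in the sign information of Proposition \ref{3br}.
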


\begin{proof}
	Noting for continuous guts poker that $\alpha$ is concave with respect to $x_1$, we have by
	Proposition \ref{maximinprop} that
	\be\label{straightup}
	V_S=\max_{p_1^*} \min_{p_2^*,p_3^*}\alpha(p_1^*,p_2^*,p_3^*)= \max_{p_1^*}R(p_1^*),
	\ee
	from which the result then follows by \eqref{R}.
\end{proof}

\br\label{oscrmk}
The advantage of \eqref{xvalue} over \eqref{straightup} is illustrated by
Figure \ref{oscfig}, below, in which \eqref{straightup} is used directly, treating the inner
loop as a nonconvex numerical minimization problem using the BFGS algorithm as described in
Section \ref{s:num}.
As we see, the presence of nearby local minima corresponding to $a(\cdot)$ and $b(\cdot)$ in \eqref{xvalue}
results in unwanted oscillation between the true global minimum and its nearby local neighbor.
\er

\begin{figure}
\centering
\includegraphics[scale=0.5]{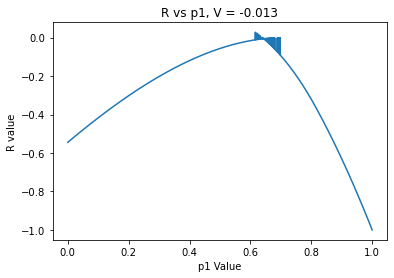}
\caption{Numerical evaluation of inner loop of Maximin, illustrating
oscillation between local minima; cf. Figure \ref{negativefig}.  }
\label{oscfig}
 \end{figure}

\subsubsection{Structure of optimizers}\label{s:struct}
As a side-benefit, the above analysis explains the remarkably simple structure of optimal strategies
observed for the 3-player synchronous coalition game in \cite{BLPWZ} of a pure strategy for player 1
and a mixture of just two pure strategy pairs (and their symmetric counterparts with roles of players 2 and
3 exchanged) for the player 2-3 coalition, of form $A=(0, q_1)$ and $B=(q_2,q_2)$.  
The first observation we have already established by 
a Jensen theorem argument showing that pure strategies are optimal for player 1.  The second now
follows by a closer look at Proposition \ref{3br} and Corollary \ref{VScor}, which characterize best response
for players 2-3 as one of the two strategies $A$ and $B$ of this form, and the optimal strategy for player 1
as the value of $p_1^*$ at which the minimum of these two responses is maximized.
But, this can be seen to occur at a point where the two responses give equal value, and all other strategies
(by the analysis of the proof of Proposition \ref{3br} in \cite{CCZ}) a greater one;
see Figure \ref{negativefig}.
Thus, the optimal strategy for players 2-3 must consist of a mixture of only these two strategies, for that
is the only possiblity that is optimal against the optimal $p_1^*$.

  \begin{figure}
        \centering
		  \includegraphics[scale=0.25]{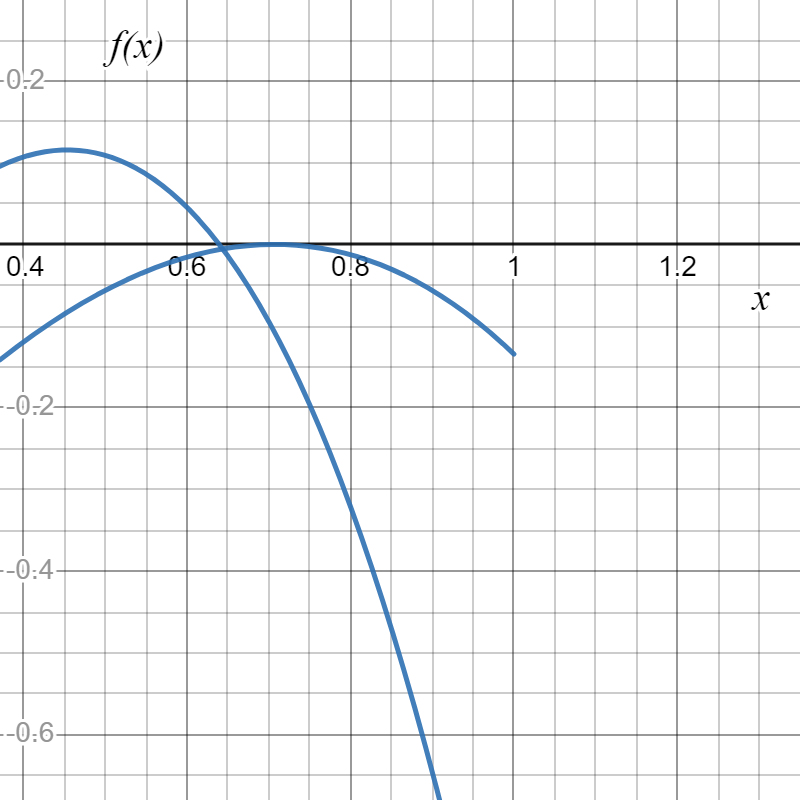}
		 	\caption{
				Blowup of graph plotting best response $\alpha_a$ and $\alpha_b$ vs. $p_1^*$.
				Maximin occurs at $p_1^*\approx 0.6437$, $\alpha\approx -0.0056$, at
				intersection of $\alpha_a$ and $\alpha_b$.
				}
	\label{negativefig}
  \end{figure}

Indeed, we can give the following complete description.

\begin{proposition}\label{odesc}
For continuous 3-player Guts with synchronous coalition, the optimal one-shot strategy for
	player 1 is the pure strategy $p_1^o=argmax R$, while an optimum strategy for players 2-3 is
	\be\label{23opt}
	(p_2^o,p_3^0)=\begin{cases}
		(p_3^a(p_1^o,p_3^1(p_1^0))& probability \, y,\\
		(0,p_3^b(p_1^0))& probability \, (1-y) 
		\end{cases}
	\ee
	with
	\be\label{yval}
	y= 
	\frac {|\alpha_b'(p_1^0)|} {|\alpha_a'(p_1^0| + |\alpha_b'(p_1^0| }.
	\ee
\end{proposition}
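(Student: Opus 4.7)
The plan is to combine the maximin reduction of Proposition \ref{maximinprop} with the explicit best response characterization of Proposition \ref{3br}, producing the pure strategy $p_1^o$ for player 1 directly and then reverse-engineering a mixture of the two best-response strategies of players 2--3 that nullifies any pure deviation by player 1.

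First, I would invoke Proposition \ref{maximinprop} together with concavity of $\alpha$ in $p_1^*$ (Lemma \ref{C1lem}) to conclude that an optimal strategy for player 1 is a pure strategy, which must then satisfy $p_1^o = \arg\max_{p_1^*} R(p_1^*)$ with $V_S = R(p_1^o)$. By Proposition \ref{3br}, $R = \min\{\alpha_a, \alpha_b\}$, and the geometry (cf. Figure \ref{negativefig}) shows that this maximum is attained at a transversal crossing where $\alpha_a(p_1^o) = \alpha_b(p_1^o) = V_S$ with $\alpha_a'(p_1^o) > 0 > \alpha_b'(p_1^o)$. Verifying this crossing (as opposed to an interior maximum of $\alpha_a$ or $\alpha_b$ alone) is the main technical obstacle, but it follows from the explicit formulas \eqref{3breq}: on the relevant range, $\alpha_a$ is strictly increasing up to the zero it attains at $p_1^* = 1/\sqrt{2}$, and $\alpha_b$ is strictly decreasing, so their graphs cross exactly once with the required sign pattern on derivatives.

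Next I would propose the candidate mixture $\mu = y\,\delta_{(p_3^a(p_1^o),\,p_3^a(p_1^o))} + (1-y)\,\delta_{(0,\,p_3^b(p_1^o))}$ for players 2--3 and determine $y \in [0,1]$ so that $\mu$ is a best response to $p_1^o$ against which player 1 cannot improve by any pure deviation. Set
\[
\bar\alpha(p_1^*) := y\,\alpha(p_1^*, p_3^a(p_1^o), p_3^a(p_1^o)) + (1-y)\,\alpha(p_1^*, 0, p_3^b(p_1^o)).
\]
By Lemma \ref{C1lem}, $\bar\alpha$ is concave in $p_1^*$, and at $p_1^o$ it takes the value $y V_S + (1-y) V_S = V_S$. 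It therefore suffices to choose $y$ so that $\bar\alpha'(p_1^o) = 0$; concavity then yields $\bar\alpha(p_1^*) \leq V_S$ for all $p_1^*$, which together with $V_S = R(p_1^o)$ confirms that $\mu$ is an optimal strategy for players 2--3 and that $p_1^o$ is an optimal (pure) reply.

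The final step is to compute $\bar\alpha'(p_1^o)$ via the envelope theorem. Since $p_3^a(p_1^*)$ is the interior minimizer of $\alpha(p_1^*, \cdot, \cdot)$ defining $\alpha_a$, and $p_3^b(p_1^*)$ is the interior minimizer in $p_3^*$ along the face $p_2^*=0$ defining $\alpha_b$, we obtain $\partial_{p_1^*}\alpha(p_1^o, p_3^a(p_1^o), p_3^a(p_1^o)) = \alpha_a'(p_1^o)$ and $\partial_{p_1^*}\alpha(p_1^o, 0, p_3^b(p_1^o)) = \alpha_b'(p_1^o)$. Hence $\bar\alpha'(p_1^o) = y\,\alpha_a'(p_1^o) + (1-y)\,\alpha_b'(p_1^o) = 0$, which using $\alpha_a'(p_1^o) > 0 > \alpha_b'(p_1^o)$ rearranges to the stated formula $y = |\alpha_b'(p_1^o)|/(|\alpha_a'(p_1^o)| + |\alpha_b'(p_1^o)|) \in (0,1)$, completing the proof.
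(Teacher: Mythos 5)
Your proposal is correct and follows essentially the same route as the paper: pure optimality for player 1 via concavity/Jensen and the definition of $R$, then a two-point mixture supported on the two best responses at $p_1^o$, with $y$ pinned down by the tangency condition $\bar\alpha'(p_1^o)=0$ using the envelope identification of the derivatives with $\alpha_a'(p_1^o)$, $\alpha_b'(p_1^o)$ and their opposite signs. The only (minor) difference is that you sketch an analytic monotonicity argument for the transversal crossing of $\alpha_a$ and $\alpha_b$ at the maximin point, whereas the paper validates this structural assumption numerically via Figure \ref{negativefig}.
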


\begin{proof}
	The description of the optimal strategy for player 1 follows from the Jensen theorem argument showing
	that the optimal strategy is of pure type, together with the definition of $R$.
Meanwhile, by the discussion just above, together with numerics validating the assumptions made therein, as displayed
	in Figure \ref{negativefig}, we have that the optimal strategy for players 2-3 is of form \eqref{23opt}.
	Noting that, by concavity in $p_1^*$, the expected payoff 
	$$
	\psi(p_1^*)= y\alpha(p_1^*, p_3^a(p_1^o),p_3^1(p_1^0)) + (1-y) \alpha(p_1^*,0,p_3^b(p_1^0))
	$$
	for $\alpha(p_1^*, p_2^0, p_3^0)$
	is concave, with value equal to $V_S$ at $p_1^*=p_1^o$, we find that it is less than or equal to $V_S$
	for all $p_1^*$, hence optimal, if and only if $\psi'(p_1^0)=0$, or
	\be\label{keyy}
	0=y \partial_{p_1} \alpha(p_1^*, p_3^a(p_1^o,p_3^1(p_1^0)) + (1-y) \partial_{p_1} \alpha(p_1^*,0,p_3^b(p_1^0))
	\ee
	at $p_1^*=p_1^0$.  Noting that 
	$\partial_{p_1} \alpha(p_1^*, p_3^a(p_1^o,p_3^1(p_1^0))=\alpha_a'(p_1^0)$
	and $\partial_{p_1} \alpha(p_1^*,0,p_3^b(p_1^0))=\alpha_b'(p_1^0)$ by minimality of $\alpha_a$ and $\alpha_b$
	with respect to $p_3$ at $p_1^0$, with $\alpha_a'(p_1^0)$ and $\alpha_b'(p_1^0$ of opposite signs,
	we obtain the result \eqref{yval} by substituting these facts into \eqref{keyy} and solving for $y$.
\end{proof}

In Figure \ref{tanfig}, we illustrate the construction of the optimal coalition solution for players 2-3
described in Proposition \ref{odesc}, displaying tangent lines to $\alpha_a$ and $\alpha_b$.
The values for $p_3^a$, $p_3^b$ and $y$ so obtained are very close to the values
$0.68$, $0.86$, and $0.86$ obtained numerically by fictitious play in \cite[\S 5]{BLPWZ}.
In Figure \ref{optfig}, we show the resulting return for player 1 against this optimal strategy
as a function of player 1 strategy $p_1^*$, which can be seen to be strictly negative.

  \begin{figure}
        \centering
		  \includegraphics[scale=0.25]{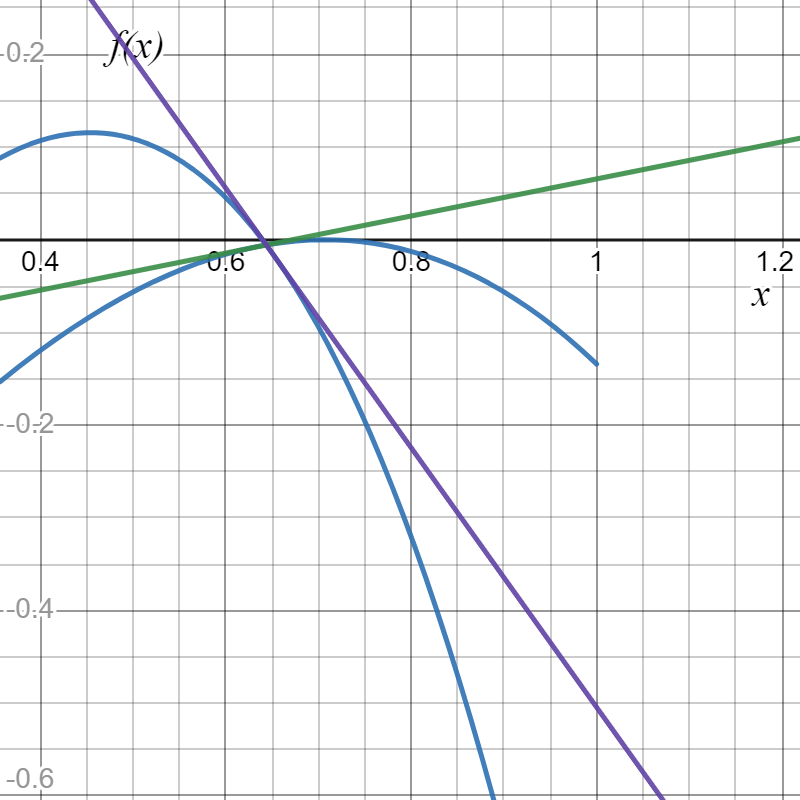}
		 	\caption{
				Blowup of graph plotting best response $\alpha_a$ and $\alpha_b$ vs. $p_1^*$.
				with tangent lines at crossing points showing $\alpha_a'\approx .02$,
				$\alpha_b'\approx -1.4$ at maximin occurring at
				intersection of $\alpha_a$ and $\alpha_b$, giving probability
				$y\approx .875$ for strategy $(p_a^3,p_a^3)$ and $(1-y)\approx .125$
				for strategy $(0,p_b^3)$, where, by our formulae, $p_3^a\approx .6764$ and
				$p_3^b\approx .864$.
				}
	\label{tanfig}
  \end{figure}

  \begin{figure}
        \centering
		  \includegraphics[scale=0.25]{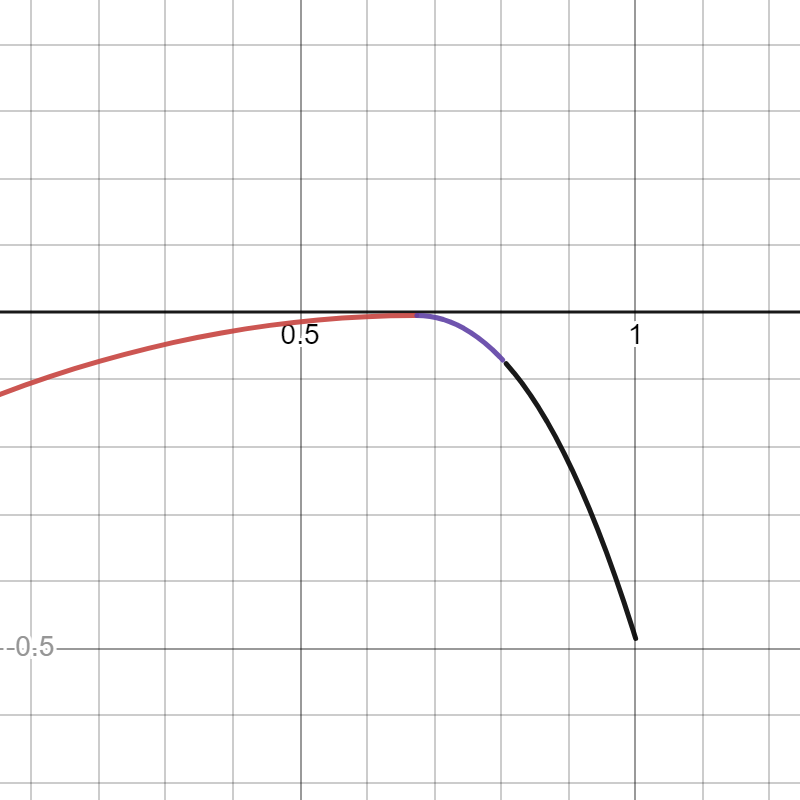}
		 	\caption{
				Payoff for player 1 against optimal player 2-3 strategy, plotted vs.  $p_1^*$.
				}
	\label{optfig}
  \end{figure}

  \br\label{usefulrmk}
A useful comment about our construction of optimal player 2-3 strategies
for synchronous coalition guts: the heuristic argument
for this depends on an infinite-dimensional fundamental theorem of games, which is a little delicate and we don't discuss, asserting existence of the minimax. We then conclude easily that it could involve only the two local mininma
at $p_1^o$ where they coincide. However, building on this intuition, we actually construct one, and verify 
rigorously the optimality, thus sidestepping the need for any abstract infinite-dimensional theory.
We note that this infinite-dimensional theory is nontrivial, and is proved with appropriate additional assumptions
on the infinite-dimensional model, which may or may not be satisfied for continuous guts.
\er

\subsubsection{The recursive game}\label{s:recsync}
The recursive version of continuous guts is based on the value function $T(V)$, $V\in \R$,
defined as the coalition value $V_S$ for the game $\alpha+ V \beta$, with
\be\label{beta}
\beta= 2-p_1^*-p_2^*-p_3^*+2p_1^*p_2^*p_3^* .
\ee
Namely, the value forceable over repeated rounds of play by a coalition of players 2-3 may be seen
to be equal to
\be\label{limval}
V_*:= \lim_{n\to \infty} V_S^n:= T^n(0),
\ee
where $V_S^n$ is a nonincreasing sequence, with $V_1$ equal to the value $V_S$ of the one-shot game 
discussed just above.

Noting that $\beta$ is linear in $p_1^*$, so that $\alpha + V \beta $ is concave with respect to $p_1^*$
for any $V$, we have the formula
\be\label{recform}
T(V)= \max_{p_1^*}\min_{p_2^*,p_3^*}(\alpha + V\beta)(p_1^*,p_2^*,p_3^*)
\ee
analogous to the one-shot case.
In principle, we could carry out a best response analysis for the recursive game like that of Proposition
\ref{3br} to determine sharp estimates for $V_*$ directly, without numerical optimization.
See Appendix \ref{s:alphabeta} for further discussion, in particular
the computation of $T(V)$ carried out in Proposition \ref{r3br}.


\subsection{Discretization}\label{s:discretization}
As in \cite{BLPWZ}, one may approach the numerical study of continuous Guts Poker by
discretizing the pure strategy space $(p_1^*,p_2^*,p_3^*)\in [0,1]^3$, considering equally spaced
values $p_j^*= 0, 1/N, \dots, 1-1/N$ for an $N$-point mesh.
This determines an $N\times N\times N$ discrete three player game, that is also symmetric, zero sum,
differing by at most $(1/N)\max|\nabla \alpha|$ in values $V_A$ and $V_S$.
Amusingly, this game corresponds to a simplified version of discrete Guts Poker,
played with hands consisting of a single card
taking values $1,\dots, N$ with equal likelihood and chosen with replacement.
Though amenable to numerics, the discrete game does not necessarily share the favorable property
of the continuous game that optimal strategies be of pure type, nor of unique local minima for the
minmax problem; at most we can say that they are nearby their continuous analogs. 
As these games are large and of a rather special type induced by closeness to the special structure of continuous
Guts, and because we possess already a complete solution in the continuous case,
we shall not carry out a detailed numerical study of the discretized case.
We remark only that nonsystematic trials suggest that both maximin and minimax problems seem
amenable to all of the numerical methods discussed below, consistently reaching the global
minimum, with, however, some unwanted ``spreading'' of
strategies away from the pure strategies that are optimal for the continuous case.
(The exception regarding spreading is Fictitious play as studied in \cite{BLPWZ}, which does stay near 
pure strategies but which is applicable in full strength only for the synchronous coalition case.) 
This good behavior can be explained by the fact that the nearby asynchronous optimization problems 
for continuous guts possess only a single local minimum, which is also a global one: the maximin problem since it is
convex, and the (nonconvex) minimax problem by the special features observed above.

\section{Numerical optimization}\label{s:num}
In the previous sections, we have answered to various extents the analytical problems posed in \cite{BLPWZ}
of when and for what types of systems do $V_S=V_A<N_S$, $V_S<V_A<N_S$, or $V_S<V_A=V_N$?
In particular, we have answered in the affirmative the main open problem whether $V_A=V_N=0$ 
for continuous Guts Poker, as conjectured in \cite{BLPWZ}.

We now turn to the further challenge posed in \cite{BLPWZ} of efficient numerical approximation of
$V_A$ for general systems, not necessarily possessing any special structure by which this may be
deduced analytically. For this task, our explicitly soluble
examples of Rock-Paper-Scissors, and randomly generated matrix games
serve as useful benchmarks.

\subsection{Minimax vs. Maximin}\label{s:mm}
Consider a general $m\times m\times m$ three-player symmetric zero-sum game, with players 2-3 acting as a coalition to minimize the payoff to player 1. Denoting the payoff for pure strategies $i$, $j$, $k$ as $P_{ijk}$, and mixed strategies by
probability vectors $x, y, z\in \R^{m+}$, $\sum_i x_i=\sum_j y_j=\sum_k z_k=1$, we distinguish the 
{\it Maximin} 
\be\label{maxi}
\max_{x} \min_{y,z} \sum_{i,j,k} x_iy_j z_k P_{ijk}= \max_{x} \min_{j,k} \sum_{i} x_i  P_{ijk}=V_S
\ee
and {\it Minimax} 
\be\label{mini}
\min_{y,z} \max_{x} \sum_{i,j,k} x_iy_j z_k P_{ijk}= \min_{y,z} \max_{i} \sum_{j,k} y_j z_k P_{ijk}=V_A
\ee
problems determining $V_S$ and $V_A$, respectively.

The former, a standard 2-player game, may be determined by the simplex method \cite{D}
or as in \cite{BLPWZ} by the method of fictitious play \cite{B,R}, an iterative scheme in which the two teams
(player 1 and players 2-3) play a series of games in which each plays the best response against the other's
historical empirical strategy distribution.
As a linear programming problem, it could also in principle be treated also by modern interior point methods,
or, more generally, any of the powerful techniques developed for convex numerical optimization;
see \cite{Ne,CVX} and references therein.
A natural question is whether and in what context one of these alternative choices might offer improved 
performance as compared to fictitious play.

The latter, Minimax problem is more complicated numerically.
For, $ \max_{i} \sum_{j,k} y_j z_k P_{ijk}$ is not necessarily convex, making this a nonconvex optimization
problem on the boundary of current investigations.
In particular, it is not clear how the game-theoretic origins of the problem might be exploited to
reduce computation, other than the replacement of the inner optimization over $x$ with maximum in $i$,
already recorded in \eqref{mini}.
So, for the moment, available techniques are those for general nonconvex optimization problems such as originate
in machine learning and elsewhere:
for example, quasi-Newton techniques such as the popular Broyden-Fletcher-Goldfarb-Shanno (BFGS)
algorithm \cite{Br,F1,F2,G,S} supported in SciPy \cite{SPy}, or slower but more reliable first-order
gradient descent.
To avoid trapping in local minima, one may add annealing or ``basin-hopping'' \cite{WD};
indeed, the Python-supported algorithms we test here generally include some version of
these as built-in features.

Additionally we explore the question of the optimal implementation of constraints.  The BFGS method can handle constraints only in the form of bounds on individual components of the strategy vectors.  While we can adjust  our problem by adding a penalty, it is more natural for the restriction to be expressed as an equality ($|x|=1$).  For this reason, we additionally experiment with the various continuation type methods in SciPy, particularly the Sequential Least SQuares Programming (SLSQP) method, which allows for much more general linear constraints.  The details for our numerical protocol can be found in Appendix \ref{s:numpro}.

\subsection{The Maximin problem}\label{s:maximin}
Numerical approximation for the basic two-player game problem represented by \eqref{maxi} has been well studied
in the literature.  See, for example, the discussion of \cite{GHPS1,GHPS2} motivated by
Texas Hold'em Poker, and references therein.
The conclusion is that for large linear programming problems such as occur for standard Poker games,
standard interior point programming methods involve memory requirements that are prohibitive.
Progress for such games has proceeded instead mainly by ``multi-grid'' type methods, iterating over a series of 
approximate problems, and/or smoothing methods as described in \cite{GHPS1,GHPS2}.
Here, guided by these previous results, we investigate the performance of various methods on the maximin problem 
for our example problems of Odds-Evens, Rock-Paper-Scissors, and discretized continuous guts poker, as well as 
for general randomly chosen large games.

With an eye toward the Minimax problem of our main interest, we mainly restrict ourselves to techniques 
available for general, nonconvex nonlinear optimization problems, consisting of (BFGS) and variants supported
in NashPy and SciPy, with and without smoothing. 
As benchmarks, we include also computations by the simplex method and the method of fictitious play. 

\subsubsection{Odds-Evens and Rock-Paper-Scissors}\label{s:minmaxRPS}
We start with the Rock-Paper-Scissors (OMI) and (OMO) games, with payoffs
\be\label{RPSrptOMI}
\Psi(x,y,z)=2 y\cdot z - x\cdot(y+z)
\ee
and 
\be\label{tildePsi}
\tilde \Psi(x,y,z):=-\Psi(x,y,z),
\ee
respectively, where $x, y, z\in \R^{3}$ are probability vectors.
As noted in Remark \ref{maximinrmk}, one may compute directly the innner minimization loop,
optimized at extreme values $y_i=\delta_i^k$ and $z_i=\delta_i^j$ for some $j,k$, to obtain
objective functions
\be\label{minOMI}
\Phi(x):= \min_{y,z}\Psi(x,y,z)= \min_j (x_j) -1
\ee
and
\be\label{minOMO}
\tilde \Phi(x):= \min_{y,z}\tilde \Psi(x,y,z)= 2(\min_j (x_j) -1) 
\ee
to be maximized in $x$.

Noting that the Odds-Evens games can be viewed as restrictions to $x_3=y_3=z_3=0$ of the Rock-Paper-Scissors
game, we find that \eqref{RPSrptOMI} and \eqref{tildePsi} hold here too, whence, computing the inner minimization
loop by hand, we obtain objective functions
\be\label{minOMIe}
\Phi(x):= \min_{y,z}\Psi(x,y,z)=0 - x_1-x_2 \equiv -1, \qquad x,y,z\in \R^2
\ee
for (OMI), and (as in the previous case)
\be\label{minOMOe}
\tilde \Phi(x):= \min_{y,z}\tilde \Psi(x,y,z)= 2(\min_j (x_j) -1), \qquad x,y,z\in \R^2
\ee
for (OMO).

For a general symmetric $N\times N\times N$ game, the inner loop can be carried out numerically at cost
$O(N^2)$ by cycling through the possible pure strategy combinations for players 2 and 3, so this is
essentially the same as the direct evaluation above.  However, analytical derivatives are not generally
available and so we will not use methods that require these.

The main point is that the exact formulae above can be used for forensic evaluation/insight into results.
For example, the formula \eqref{minOMIe} shows that the (OMI) problem for odds-evens is degenerate and
independent of the strategy of player one; hence it is not interesting to carry out.
The formula \eqref{minOMOe} shows that the (OMO) problem for odds-evens is maximization of a simple
scalar ``hat'' function,
\be\label{hat}
\max_{x_1} \min\{x_1, 1-x_1\},\qquad 0\leq x_1\leq 1,
\ee
for which any method should presumably perform well.

Meanwhile, the formulae \eqref{minOMI} and \eqref{minOMO} are multiples of each other, so that the maximin problems
for (OMI) and (OMO) are essentially identical, and should display equivalent performance, giving both a useful
numerical double check and a reduction of our studies.
Both Rock-Paper-Scisssors version reduce effectively to maximization of a ``triple hat'' or ``pyramid'' function
\be\label{rpsv}
\max_{x_1,x_2} \min\{x_1, x_2,  1-x_1-x_2\},\qquad 0\leq x_1,x_2, \quad x_1+x_2 \leq 1.
\ee
See Figs. \ref{pyrfig} and \ref{pyrfigs} for contour map and smoothing.
(Note: the map is extended for numerical purposes to the full interval $[0,1]^2$, with a penalty function
enforcing the boundary condition $x_1+x_2\leq 1$.

  \begin{figure}
        \centering
		  \includegraphics[scale=0.5]{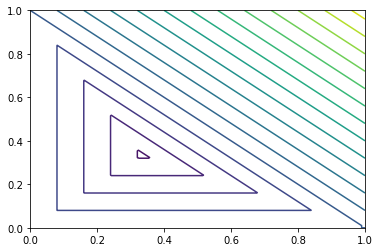}
		 	\caption{
				Pyramid function, contour map for the RPS OMI maximin.  }
	\label{pyrfig}
  \end{figure}

  \begin{figure}
        \centering
		  \includegraphics[scale=0.5]{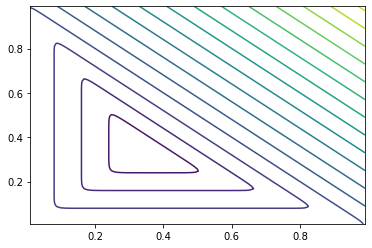}
		 	\caption{
				Smoothed contour map for the RPS OMI maximin.  }
	\label{pyrfigs}
  \end{figure}

\subsubsection{Numerical outcomes for RPS Maximin}\label{s:maximinout}
For all our RPS variants, we attempt several numerical optimization methods.  First we compare methods with soft constraints in the form of a BFGS method with penalty and methods using a hard constraint.  Additionally, we compare a nonsmoothed objective function with an $l^p$ smoothed objective function and a softmax smoothed objective function as decribed in Appendix \ref{s:numpro}.  We specifically use two examples of $\ell^p$ smoothing.  One which is very heavily smoothed, $p=2$, and one which is a closer approximation to the sharp function, $p=100$.

For the maximin RPS using either full or soft constraints, the nonsmooth method often terminates early, but largely gets quite close to the expected equilibrium strategy of $(\frac{1}{3},\frac{1}{3},\frac{1}{3})$, varying by at most about $.04$.  Meanwhile any smoothing technique will find a solution to 8 decimal points very quickly.  For this problem, though the nonsmoothed version is mostly effective, there is a clear advantage to any smoothing method.  Notably, 
since the OMI and OMO problems are equivalent for the maximin problem, we only simulate OMI.

\subsection{The Minimax problem}\label{s:minimax}
The Minimax problem \eqref{mini} does not correspond to a classical two-player game, and is generically
not equivalent to a linear programming problem, or even a convex one.
For example, even for the simplest example of odds and evens, the payoff functions 
\eqref{OMO1}, \eqref{OMO2} are nonconvex with respect to $(y,z)$.
Thus, we cannot use the simplex or interior point programming methods.
Nor does fictitious play yield a solution, as we discuss further in Section \ref{s:JFPnum} just below.
Accordingly, we are restricted to off-the-shelf routines such as (BFGS) and variants,
with no alternatives with which to compare.
Indeed, our only benchmarks here are the exact solutions found in earlier sections.

\subsubsection{Odds-Evens and Rock-Paper-Scissors}\label{s:maxminRPS}
Starting again with \eqref{RPSrptOMI}-\eqref{tildePsi}, we have that the payoff function for (OMI) is,
for both Odds-Evens and Rock-Paper-Scissors,
$$
\Psi(x,y,z)=2 y\cdot z - x\cdot(y+z)
$$
and the payoff function for (OMO), also for both, is
$$
\tilde \Psi(x,y,z):=-\Psi(x,y,z).
$$

Thus, the objective functions $\phi(y,z):=\max_x \Phi(x,y,z)$ and
$\tilde \phi(y,z):=\max_x \tilde \Phi(x,y,z)$ for the Minimax problem are given in both cases by
\be\label{mMomi}
\phi(y,z)=2y\cdot z - \min_j (y_j+z_j)
\ee
for the OMI problem and
\be\label{mMomo}
\tilde\phi(y,z)=-2y\cdot z + \max_j (y_j+z_j),
\ee
where $\psi$ and $\tilde \psi$ are to be minimized over probability vectors $y$ and $z$, with $y,z\in \R^2$
for Odds-Evens, and $y,z\in \R^3$ for Rock-Paper-Scissors.

\medskip
{\bf Odds-Evens (OMI).} Let us first consider Odds-Evens (OMI). Setting $y=y_1$, $z=z_1$, this can be written as
the minimization problem
\be\label{mMoeomi}
\min_{y,z} \alpha(y,z):=  2\big(yz+ (1-y)(1-z)\big) - \min \{ y+z, 2- y-z\},
\ee
for $0\leq y,z\leq 1$, or
$$
\alpha(y,z)=\begin{cases}
4yz -2y-2z+2  & y+z\leq 1\\
4yz-y-z& y+z\geq 1
\end{cases}
$$
with
$$
\alpha(y,z)= 1+ 4yz -2y-2z= -(2y-1)^2
$$
for $y+z\equiv 1$.

The function $\alpha$, as depicted in Figures \ref{oeomifig} below, is a sort of
nonsmooth hyperbolic paraboloid, with saddle point at the Nash equilibrium $(y,z)=(1/2,1/2)$, 
minima at $(y,z)=(0,1), (1,0)$, and a fold along the off-diagonal $y+z=1$.

  \begin{figure}
        \centering
		  \includegraphics[scale=0.5]{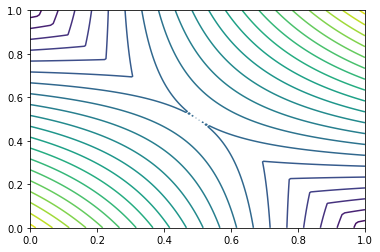}
		 	\caption{ Contour map, objective function for Odds-Evens Minimax (OMI).  }
	\label{oeomifig}
  \end{figure}

  \begin{figure}
        \centering
		  \includegraphics[scale=0.5]{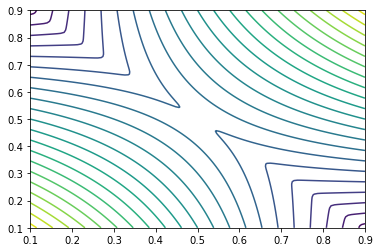}
		 	\caption{ Smoothed contour map for Odds-Evens Minimax (OMI).  }
	\label{oeomi_sfig}
  \end{figure}

  {\it Numerical outcomes for Evens-Odds OMI}: BFGS worked ``generically'' here, meaning random data converged. The convergence was always to one of the global minima in 1000 test trials.

\medskip
{\bf Odds-Evens (OMO).} We next consider Odds-Evens (OMO). Setting $y=y_1$, $z=z_1$, this can be written as
the minimization problem
\be\label{mMoeomo}
\min_{y,z} \tilde \alpha(y,z):=  -2\big(yz+ (1-y)(1-z)\big) + \max \{ y+z, 2- y-z\},
\ee
for $0\leq y,z\leq 1$, or
$$
\tilde \alpha(y,z)=\begin{cases}
-4yz +2y+2z-2  & y+z\geq 1\\
-4yz+ y+ z& y+z\leq 1
\end{cases}
$$
with
$$
\alpha(y,z)= (2y-1)^2
$$
for $y+z\equiv 1$.
Meanwhile, $\alpha(y,z)= -4y^2 + 2 y$ for $y=z$ and $0\leq y\leq 1/2$, vanishing at both endpoints.
Hence, the Nash equilibrium $(y,z)=(1/2,1/2)$ may be seen to be a global minimizer for $\alpha$,
along with $(0,0)$ and $(1,1)$, in agreement with Proposition \ref{omoprop}.
There are in addition smooth saddle points at $(1/4,1/4)$ and $(3/4,3/4)$.
See Figure \ref{oeomofig}.

  \begin{figure}
        \centering
		  \includegraphics[scale=0.5]{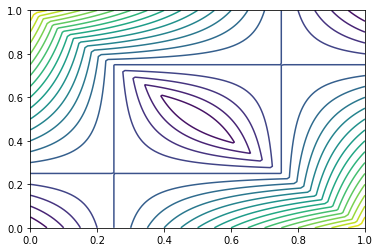}
		 	\caption{ Contour map, objective function for Odds-Evens Minimax (OMO).  }
	\label{oeomofig}
  \end{figure}

  \begin{figure}
        \centering
		  \includegraphics[scale=0.5]{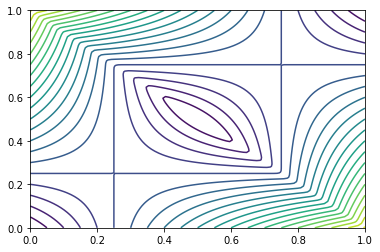}
		 	\caption{ Smoothed contour map for Odds-Evens Minimax (OMO).  }
	\label{oeomo_sfig}
  \end{figure}

  {\it Numerical outcomes for Evens-Odds OMO}: BFGS worked ``generically'' here, meaning random data converged,
  but data starting at either saddle points or corner points $(1/4,3/4)$ gave abnormal termination error, as
  did termination at Nash equilibrium. Smoothing fixed these issues in all trials.
  One can compute approximately by eye from Fig. \ref{oeomofig} the likelihood of terminating at Nash equilibrium
  vs. corner point by adding the areas of the basins of attraction, 
  slightly less than $2 (1/4)^2 + 2(1/4\times 3/4)= 2(1/4)= 1/2$.  This pattern is also observed in numerics.

 \medskip

{\bf Rock-Paper-Scissors (OMI) and (OMO).} For the Rock-Paper-Scissors Minimax, the problem is framed
in a subdomain of $\R^4$ consisting of the product of two simplices, and therefore the simple graphical
descriptions of the Evens-Odds case are not available.
However, as described in Propositions \ref{RPSOMIprop} and \ref{RPSOMOprop}, the rough features
are for (OMI) that global minimizers appear at
$y=(1,0,0),z=(0,1/2,1/2)$, $y=(0,1,0),z=(1/2,0,1/2)$, and $y=(0,0,1),z=(1/2,1/2,0)$,
with additional local minimizers
at $y=(0,1/3,2/3),z=(2/3,1/3,0)$; $y=(2/3,1/3,0),z=(0,1/3,2/3)$; $y=(1/3,0, 2/3),z=(1/3,2/3,0)$; 
$y=(1/3,2/3, 0),z=(1/3,0, 2/3)$; $y=(0,2/3, 1/3),z=(2/3, 0, 1/3)$; and $y=(0,1/3,2/3),z=(2/3, 1/3,0)$,
and a nonsmooth saddle at the Nash equilibrium $y=z=(1/3,1/3,1/3)$,
while for OMO that only global minimizers appear, 
at $y=(1/3,1/3,1/3),z=(1/3,1/3,1/3)$, $y=(1,0,0),z=(1,0,0)$, $y=(0,1,0),z=(0,1,0)$, and $y=(0,0,1),z=(0,0,1)$.
In the first case, global minimizers are mixed strategy type, and in the second all pure strategy type.
The second case features interior saddle points in the smooth portion of the domain.
The first has no associated saddle points in the smooth part of the domain, but has a
nonsmooth saddle point at the Nash equlibrium.

\subsubsection{Numerical outcomes for RPS Minimax}\label{s:minimaxout}
We run the same tests as for the maximin problem, however in this setting the OMI and OMO problems are treated separately.

For the OMI problem with full constraints, the nonsmoothed method fails about 25\% of the time.  $\ell^2$ smoothing performs very consistently, reaching a functionally exact (i.e., 7-8 digits accuracy)
solution in all 20 of our trials.  $\ell^{100}$ smoothing however failed in 15\% of our trials.   There is additionally a local min in this problem, specifically the strategy pair (1/3,2/3,0),(1/3,0,2/3).  The nonsmoothed method reached this pair in 15\% of trials, the $l^2$ smoothed problem avoided it entirely, the $\ell^{100}$ smoothing reached it 45\% of the time, and the softmax smoothed reached it 55\% of the time.  With soft constraints, the nonsmoothed method performed very badly.  While there were no outright failures, the method consistently failed to get close to the true min, or even the local min.  The $\ell^2$ smoothing also performed very badly, while the $\ell^{100}$ smoothing performed relatively well, reaching a local min in 50\% of trials.  The softmax performed similarly to the $l^{100}$ smoothing.  All of these methods consistently terminated too early, even with very small values for the stopping condition.  For this problem, hard constraints drastically outperformed the penalty method.

Meanwhile for OMO with full constraints, the nonsmoothed method failed in 5\% of our 20 trials.  In each other case, the method successfully converged to a local min.  In 80\% of trials, the method successfully converged to the true global min.  In the other 15\% of trials, the method converged to the local min of the form (1/3,2/3,0),(1/3,0,2/3).  With $l^2$ smoothing, the process converged to the true global min in all trials.  At $\ell^{100}$ smoothing, the errors were reintroduced.  The method failed in 5\% of trials, reaching the local min in 50\% of trials, and reached the true min in 45\% of trials.  The softmax method never failed, but reached the local min in 55\% of trials while reaching the global min in only 45\% of trials.  With soft constraints, the nonsmoothed method reached global min in all trials.  The $\ell^2$ smoothing reintroduced the local min in 40\% of trials, however all other smoothing methods also had a 100\% success rate.

\br\label{nolocrmk}
We conjecture that observed failure to find local minima for smoothed methods may be
a result of oversmoothing eliminating local minima.
\er

\subsection{``Joint'' fictitious play}\label{s:JFPnum}
Finally, we make a few comments regarding fictitious play, and an interesting ``joint'' variant for
coalitions.
Classical fictitious play, for any number of players, if it converges, must converge to a set of mixed strategies
that for each player $i\in \{1, \dots, n\}$ 
is optimal against the limiting strategies of the remaining players $j\neq i$.
But, this is just an alternative formulation of the definition of Nash equilibrium. Hence,

$\bullet$ {\it Fictitious play, if it converges, must converge to a Nash equilibrium.}

As, even for three player odds and evens, optimal asynchronous coalition strategies are not necessarily
Nash equilibria, this shows that fictitious play in general cannot be used to determine $V_A$.
An appealing idea is to modify the $n$-player symmetric game by pooling the winnings of players 2-$n$,
then run a standard $n$-player fictitious play algorithm on the modified game. We will call the resulting
algorithm ``joint'' fictitious play, and the resulting Nash equilibria ``joint'' Nash equilibria.

It is clear that a Nash equilibrium need not be a joint Nash equilibrium, and vice versa.
Thus, these notions do in general capture different types of information.
However, 

$\bullet$ {\it A symmetric Nash equilibrium of a zero-sum symmetric game is also a joint Nash equilibrium.}

\noindent 
For, if a change from Nash equilibrium of (without loss of generality) player 2 penalizes player 2, then it
benefits equally player 1 and players 3-n, in particular benefitting player 1. Thus, it penalizes the pooled
winnings of players 2-n.

\subsubsection{Numerical Results for Joint Fictitious Play.}
For the Evens-Odds problem, JFP seems to converge in all test cases with randomly chosen initial data.  In the case of OMI, joint ficitious play finds the Nash equilibrium in about 52\% of trials and the global minima in all other of the 1000 trials.  For OMO, joint fictitious play converged to the global minimizer (which is the same as the Nash Equilibrium for this problem) in 100\% of 1000 trials.  In each trial, 1000 iterations of the FP algorithm were performed.

For the RPS problem, JFP also seems to converge in all test cases.  For OMI, JFP finds the Nash Equilibrium in about 56\% of 1000 trials and the global minimum in all others.  For OMO, all 1000 trials converged to a 'corner' global minimizer (which are also Nash Equilibria).

\section{Experiments with Random Games I: efficiency}\label{s:numeff}
In this section we aim to explore the efficacy of different numerical methods for computing the minimax of a game as we increase the dimensions of the matrix (i.e. we increase the number of strategies available to each player).  To test this, we use randomly generated $N\times N$ matrices with entries chosen uniformly between -1 and 1,
simulating a random 2-player game. Recall that the minimax problem can even for multi-player games we formulated
as a 2-player game between two coalitions; hence, there is no loss of generality in framing the problem this way.

We compare four different methods with various parameters: Fictitious Play, Nonsmoothed minimization, $\ell^p$ Smoothed minimization, and Softmax smoothed minimization.  We additionally compare each of the minimization methods with BFGS type methods using inequality constraints and SLSQP type methods using equality constraints.  A detailed description of each of these methods can be found in Appendix \ref{s:numpro}.  

For each of $\ell^p$ smoothed and softmax, we perform runs with several parameter values.  For $\ell^p$, we run $p=1,10,50,100,200,300,400,500$.  For the softmax method, we run with $\epsilon=1,.5,.25,.0125,1e-4,1e-5,1e-6,1e-7,1e-8$.  In addition, we run Fictitious play with iterations equal to $3000,5000,12000,35000,$ $80000,1000000,4000000,12000000$. 

We measure accuracy by running each procedure twice (except Fictitious Play), once on the original matrix and once on the negative transpose.  If perfectly accurate, these two experiments will produce the same value.  Thus we use the difference between the two values, the 'value gap' as a benchmark for accuracy.  Fictitious Play has known guaranteed accuracy for a high enough number of iterations, this is discussed more in Appendix \ref{s:numpro} and Appendix \ref{s:Tables1}.

The detailed results can be found in Appendix \ref{s:Tables1}, however we summarize them here.  First we analyze the efficacy of the various smoothing methods.  For very low $N$, all methods are comparable.  However, as soon as $N=8$ the nonsmoothed method begins to drastically underperforms both smoothing methods in accuracy.  Both $\ell^p$ and softmax methods tend to perform better with parameter values smoothness, but not so high as to reintroduce the issues of the nonsmoothed methods.  At very high $N$, softmax begins to have an edge in terms of computation time, while accuracies stay comperable.

We also analyze the difference between the BFGS and SLSQP methods.  Even at low $N$, we see that the BFGS method has a clear speed advantage of about 2 orders of magnitude, while SLSQP has an accuracy advantage of about 7 orders of magnitude in the case of $N=2$.  By $N=256$, the speed advantage of BFGS reamins is closer to 3 orders of magnitude, while the accuracy advantage of SLSQP has fallen to about 1 order of magnitude.  This trend is also true for the intermediary values of $N$.  This suggests that as $N$ grows very large, there will be a point where BFGS performs unambiguously better.  However in this range of $N$, there is a reasonable tradeoff between accuracy and time.

\section{Experiments with Random Games II: $V_A$ vs. $V_N$, $V_S$}\label{s:numV}
In this section, we investigate random 3-player $N\times N\times N$ symmetric games for various small to medium sized values of $N$, comparing the efficiency, accuracy, and reliability of several different methods.  
In contrast to the previous section, our particular interest here is on {\it nonconvex} effects, and determination of
the asynchronous coalition value $V_A$, along with its relation to the synchronous and Nash values $V_S$
and $V_N$.

To simplify comparisons, we restrict to {\it symmetric} games, guaranteeing that $V_N=0$.
This is done by randomly populating a matrix with entries chosen uniformly in $[0,1]$ obeying the symmetry rules  
\ba\label{symrule}
P_{ijk}&=P_{ikj},\\
P_{ijk} &+ P_{jik}+P_{kij} = 0
\ea
where $P_{ijk}$ is return to player one, hence also:
\ba\label{rsymrule}
P_{iii}&=0,\\
P_{ijj}&=-2 P_{jji},\\
P_{iik}&=-(1/2)P_{kii},\\
P_{iji}&=P_{iij}= -(1/2)P_{jii}.\\
\ea

The methods we test include a BFGS approach with softmax smoothing and inequality constraints, a SLSQP approach with hard constraints and softmax smoothing, and a Fictitious Play based method with slight variations for $V_A$ and $V_S$.  For $V_A$ we use the Joint Fictitious Play Method described earlier, where players 2 and 3 minimize player 1's payoff independently.  For $V_S$ Fictitious Play, we join the strategies of players 2 and 3 into one synchronous strategy.

To test the convergence of each method, we compare between the predicted value of each method and the value of the actual best player 1 response to the strategy found by this method.  If the method has properly converged, then these values should be equal.  
This leads to the results in the following figures, which include the results for 10 trials on the same tensor.  Each figure uses a different payoff tensor, corresponding to a different randomly chosen symmetric game.

\begin{figure}
        \centering
		  \includegraphics[scale=0.4]{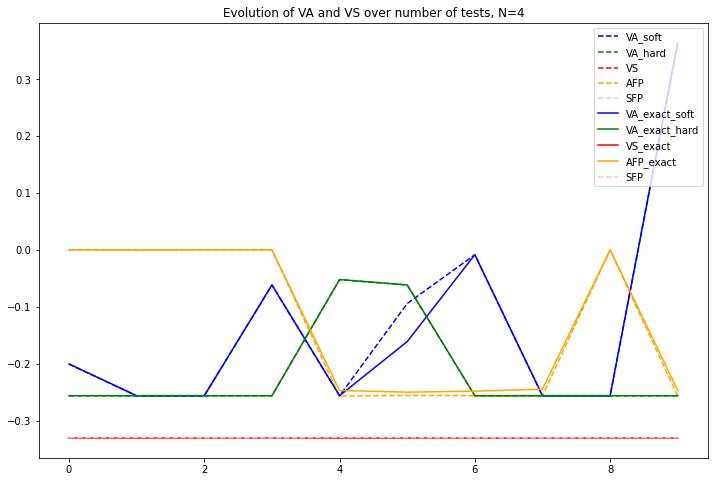}
		 	\caption{The case N=4.  We perform 10 trials of each method on a randomly generated tensor according to the symmetry rules above.  We first observe the remarkable consistency of the red and pink lines representing the synchronous problem.  Beyond this all methods perform reasonably, with SLSQP being the most consistent.  BFGS has one catastrophic failure but is generally around 0 or the true value.  Fictitious play does not quite converge for many of the trials but often comes close.  We first see a reocurring behavior of Fictitious Play, the tendency of it to find value 0.  We also note the obvious visible gap between predicted $V_A$ and $V_S$.}
  \end{figure}

  \begin{figure}
        \centering
		  \includegraphics[scale=0.4]{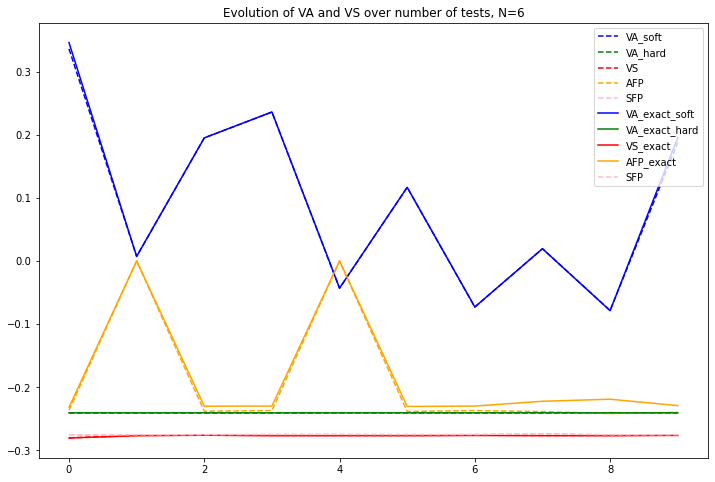}
		 	\caption{The case N=6.  Again the green line, SLSQP, shows the best performance for $V_A$ while all methods succeed for $V_S$.  The BFGS method demonstrates catastrophic failure in this case, due to failed linesearch errors, indicating a mismatched gradient.  AFP gets close to the value but does not converge well, still exhibiting the tendency to find the zero value.}
  \end{figure}

  \begin{figure}
        \centering
		  \includegraphics[scale=0.4]{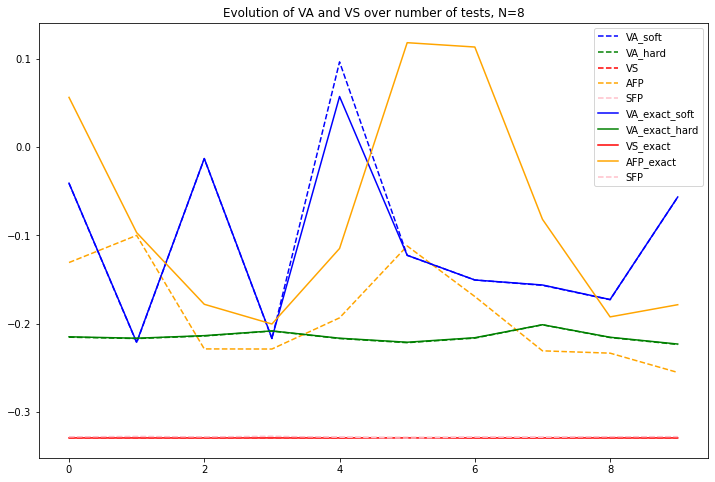}
		 	\caption{The case N=8.  Here SLSQP again performs the best.  BFGS finds the equilibrium several times and converges fairly well, but still often has linesearch errors creating premature termination.  We can clearly see the fictitious play does not converge.  In fact, it is so far from convergence that it often finds values below the true value (indicating it has found strategies which are not close to a best response pair).}
  \end{figure}

Worth noting immediately is that $V_S$ is easy to compute for all tested methods.  This is expected due to the much simpler nature of the problem.  For this reason we leave $V_S$ out of further experiments.  However, the methods struggle more to predict $V_A$.  The best performer is SLSQP. This method is very consistent, passes all tests for convergence, and predicts the minimum of the results.  The BFGS on the other hand does not work particularly well, especially for the lower $N$ values.  This is somewhat expected as BFGS is primarily meant to be applied to high dimensional problems.  We do notice that BFGS often terminates due to 'abnormal termination in linesearch,' which we will discuss more later.  The Asynchronous Fictitious Play works very well for low $N$, however even with $N$ as high as 8 it struggles to converge even when given a considerably longer runtime that the BFGS-type methods (~100 times longer runtime or more).  For Fictitious Play to be viable in higher dimensions, some improvement will be necessary.  Worth noting is the surprise that the Asynchronous Fictitious Play seems to work at all.  Unlike regular fictitious play, here we have no guarantees at all about performance.  We also performed tests on tensors without the previously described symmetry properties, and very similar results were found, though we do not display them here.

We noticed before that the BFGS method tends to fail due to a 'abnormal termination in linesearch' error, which heuristically indicates that the computed gradient does not match the true optimization landscape.  In other words, that the function is not sufficiently smoothed and that we are near a fold in the function.  To adjust for this, we add an adaptive version of BFGS.  In this method, if one of these errors is reached, we increase the $\epsilon$ parameter of the interior max, thus further smoothing the function.  This is done until the process can continue without error or the $\epsilon$ reaches a maximum threshold.  Once the process is able to continue, which indicates that the procedure has moved beyond the fold, the $\epsilon$ is reset to its original more accurate value.  This amounts to performing classical BFGS, then if a linesearch error is found, increasing the smoothing until the method can continue.

We also test an improved version of Asynchronous Fictitious Play which updates the strategies at each step by the rule
\begin{equation}
	x_{n+1} = (1-\theta(n))x_n + \theta(n) BR_x(y_n, z_n)
\end{equation}
and similarly for $y$ and $z$.  Here $\theta=1/n$ corresponds to the classical method.  
This method should allow for faster convergence by allowing larger step sizes for appropriately defined $\theta(n)$.
In particular we take $\theta(n) = max(\frac{1}{n}, .001)$.  The value $.001$ can be thought of as an accuracy threshold, since we can never get a strategy more accurate than within $.001$ of the true equilibrium except by dumb luck.  The results of both improved methods are tested on random symmetric matrices with the results encapsulated by the following figures.

\begin{figure}
        \centering
		  \includegraphics[scale=0.4]{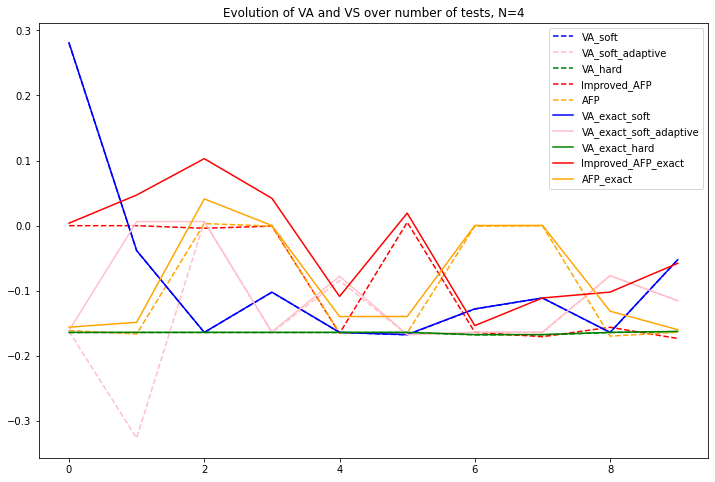}
		 	\caption{The case N=4.  We perform 10 trials of each method on a randomly generated tensor according to the symmetry rules above.  We see the green line, SLSQP, is the most consistent and best performing method.  The BFGS (blue and pink) also hits the proper value regularly (both adaptive and classical).  We can see the dotted red line (predicted improved fictitious play value) often comes very close to the true value, however since the distance from the solid red line is large, it has not converged.  The same true for classical fictitious play.  Fictitious Play also displays the typical behavior of often finding value 0.}
  \end{figure}
  
  \begin{figure}
        \centering
		  \includegraphics[scale=0.4]{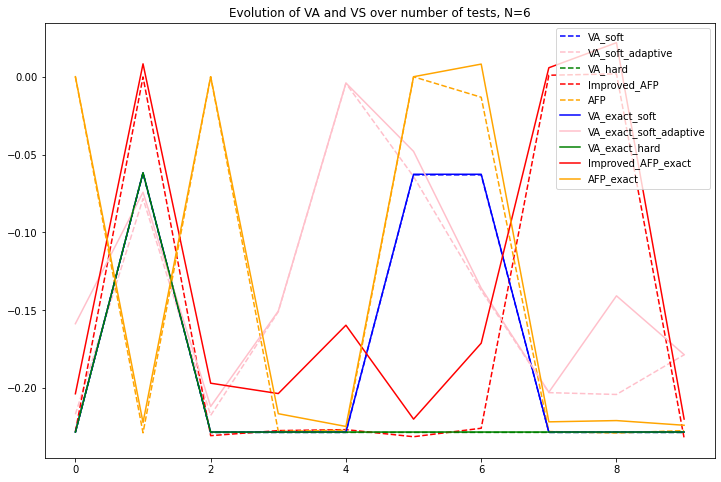}
		 	\caption{The case N=6.  Again the green line, SLSQP, shows the best performance.  The fictitious play methods perform similarly to the previous figure, bouncing between a predicted value of 0 and the true value but not converging well.}
  \end{figure}

  \begin{figure}
        \centering
		  \includegraphics[scale=0.4]{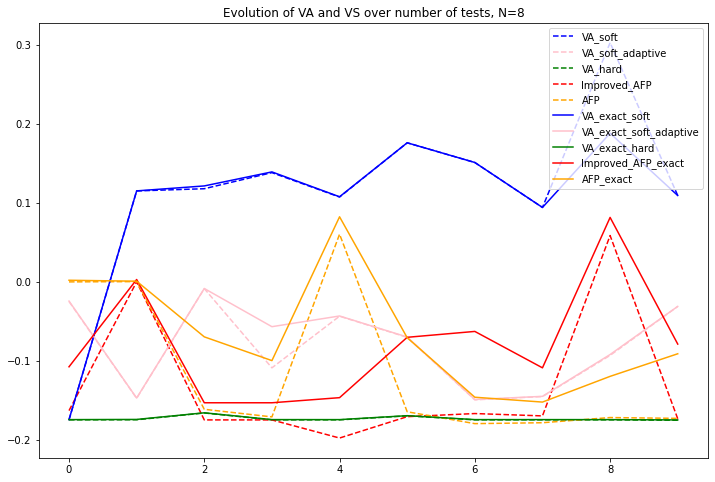}
		 	\caption{The case N=8.  Here SLSQP again performs the best.  We are entering the regime where BFGS begins to consistently terminate early, requiring the adaptive.  While the adaptive approach (pink) shows a large improvement over the traditional approach (blue), it is still not performing nearly as well as SLSQP.  We also see the improved fictitious play overtake the classical method.}
  \end{figure}

While the improvements to the methods help (especially in the cases with larger N), the SLSQP remains the best performing algorithm.  The BFGS confirms much more regularly with the adaptive method rather than the traditional, and tends to get a lower value.  The improved fictitious play shows similar improvements over the original method.  We expect that these improvements will become even more significant as $N$ is increased.  However, we emphasize that FP does 
not always converge, hence cannot be depended upon in any case as a stand-alone method.

Worth noting is that we are only able to run these simulations for relatively low values of $N$.  As shown in the two player case there is likely a point as $N$ increases where BFGS will significantly overtake SLSQP, at terms of accuracy cost per time.

\subsection{Comparison of $V_A$, $V_N$, and $V_S$: frequencies of typical gaps}\label{s:freq}
We are additionally interested in the gaps between $V_A$, $V_N=0$ and $V_S$: in particular whether
such gaps exist in generality and what are their typical sizes.  
This may be regarded as a followup for more general systems of the analyses for small systems
in Sections \ref{s:eg1} and \ref{s:eg2}.

These experiments are carried out by testing a large number of randomly chosen $N\times N\times N$ payoff tensors,
$N=4,6$, and recording the gaps $V_N- V_S=|V_S|$ and $V_N- V_A= |V_A|$.
In Figures \ref{experiments:gap4}-\ref{experiments:theta6} we display the empirical frequency distributions of the gap $V_A-V_S$ between asynchronous and synchronous coalition values (recall, as discussed in the introduction,
that this is also the gap between what coalition players can force against player 1 and what player 1 can force in asynchronous play) along with the ``relative gap''
\be\label{theta}
\theta:= V_A/V_S
\ee
measuring the relative location of $V_A$ between the two limits $V_S$ and $V_N$.
Together, these give an idea of (a) the advantage gained by synchronous coalition, and (b) the
advantage lost by asynchronicity.

\begin{figure}
\centering
  \includegraphics[width=.4\linewidth]{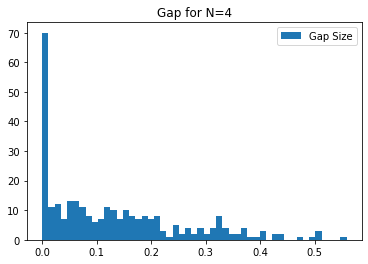}
	\caption{The gap ($V_A-V_S$) distribution for case N=4.  We notice that the gap distribution appears to decay exponentially, with a likely additional mass near 0.  This distribution has mean .12 and standard deviation .15.  We additionally observe negligible correlation with the value of $V_S$($\sim .04$).}
  \label{experiments:gap4}
\end{figure}
\begin{figure}
\centering
  \includegraphics[width=.4\linewidth]{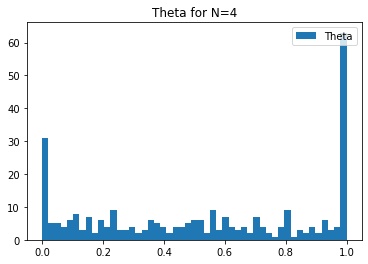}
  \caption{The $\theta$ distribution for case N=4.  We notice that this distribution seems to be roughly uniform but with concentrated masses at $0$ and $1$.  We tested several data filtering methods to see if there is consistent behavior in the formation of the masses, but did not reach any definitive conclusions.  This distribution has mean .53 and standard deviation .36.  We also note there is negligible correlation with the value of $V_S$($\sim .04$).}
  \label{experiments:theta4}
\end{figure}

\begin{figure}
\centering
  \includegraphics[width=.4\linewidth]{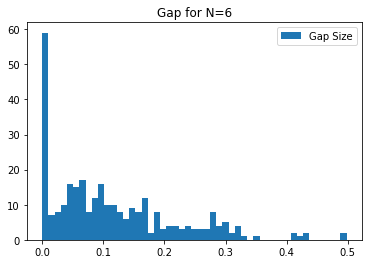}
	\caption{The gap ($V_A-V_S$) distribution for case N=6.  We notice this distribution is extremely similar to the N=4 case. This distribution has mean .11 and standard deviation .1.  We also note there is negligible correlation with the value of $V_S$($\sim .13$).}
  \label{experiments:gap6}
\end{figure}
\begin{figure}
\centering
  \includegraphics[width=.4\linewidth]{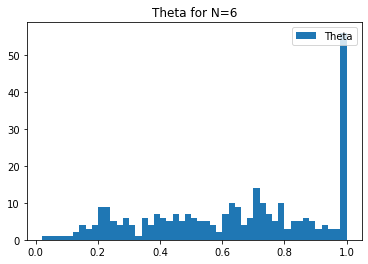}
  \caption{The $\theta$ distribution for case N=6.  We notice this distribution is extremely similar to the N=4 case.  We tested several data filtering methods to see if there is consistent behavior in the formation of the masses, but did not reach any definitive conclusions. This distribution has mean  0.63 and standard deviation .28.  We additionally observe negligible correlation with the value of $V_S$($\sim .09$).}
  \label{experiments:theta6}
\end{figure}

It should be noted that we lack the 
statistical 
power to draw complete conclusions about these distributions.  We only observe 300 samples, and additionally we provide no explanation for the appearance of the point masses in all histograms.

\section{Discussion and open problems}\label{s:disc}
Summing up, we have answered rather completely the problems posed in \cite{BLPWZ} regarding asynchronous
coalition games regarding whether and how often there can appear a gap between the value forcable by player 1
and the value forcable by a coalition of remaining players: that is, the situation of a nonzero gap
$$
V_A-V_S\geq 0
$$
between the asynchronous coalition value and the synchronous one. (Recall, as discussed in the introduction,
that the synchronous value $V_S$ is the best that player 1 can force in {\it either} of the synchronous or 
asynchronous games.)
In particular, we have shown that such a gap exists for the motivating example of 3-player continuous Guts
Poker, with $V_A=V_N=0$ equal to the Nash equilibrium value of zero, but $V_S$ strictly negative.
Indeed, we have gone on to give considerable further information, giving essentially a complete rigorous
solution of optimal synchronous, asynchronous, and Nash equilibrium play: quite remarkable
for a real-world version of poker that is frequently played.

Our analysis of continuous guts poker is based on special convexity properties of the payoff function, 
which at the same time suggests that this game is quite special and perhaps not a good exemplar of general behavior.
We have therefore supplemented this investigation with studies of asynchronous coalition in $N\times N\times N$
symmetric 3-player games with (i) analytical study of general games for $N= 2$ and a simple Rock Paper
Scissors game for $N=3$; and (ii) numerical study of randomly chosen games for $N=4, 6, 8, \dots$.
The latter question ties into more general issues of effective convex and nonconvex optimization for large systems.

The former investigation shows that nonzero gap is indeed a frequent occurence in asynchronous coalition play,
occurring for a wide variety of games with $N=2$ and $3$;
moreover, local minima, as might be expected, do appear, along with saddlepoints for $N\geq 3$.
The latter suggests for randomly chosen symmetric games with individual payoffs uniform in $[-1,1]$ 
that a typical gap $V_A-V_S$ for $N=4$ is $.12$
and for $N=6$ is $.25$, in both cases only mildly correlated with $V_S$.
Likewise $\theta$ in both cases is roughly one half, with weak correlation.

In the course of our numerical studies, we have compared also several off-the-shelf methods for accuracy 
and efficiency/computation time.  Our results were that SLSQP were the most dependable for finding global minima, 
but for large 2-player games (associate with convex optimization)
were prohibitively costly and at some point overtaken by smoothed BFGS methods in agreement with
conclusions of \cite{GHPS1,GHPS2}.
For 3-player games (associated with nonconvex optimization), BFGS methods often returned local rather than
global minima, making SLSQP more attractive for accuracy.  However, all methods became prohibitively expensive
for the Python-based algorithms we used, and we were unable to carry out computations for more than medium
sized systems, in particular not up to a point where we could observe crossover in efficiency from  
SLSQP to BFGS methods as in the 2-player case.

We also tested various modifications of Fictitious play with varying results. Definitely this often failed
altogether for some 3-player games, but for many was surprisingly accurate. Extremely rapid growth in computational
cost made this method noncompetitive in any case. Taken together, our results suggest that a hybrid approach
using several methods and repeated trials may be the most prudent for the moment.
However, when computationally feasible, the SLSQP method served as a dependable ``gold standard''
for our nonconvex computations for 3-player games.
(Recall, for 2-player games, Fictitious Play serves as a convenient and efficient standard \cite{BLPWZ};
Simplex or Linear Programming methods, though less convenient are also reliable \cite{D}, 
and anecdotally of similar computational cost.)

\subsection{Open problems}\label{s:open}
Regarding the motivating problem of continuous guts poker, we conjecture that $V_A=0$ also for the $n$-player
game with $(n-1)$ vs. $1$ coalition.
As for the 3-player version, this question reduces to a finite-dimensional calculus problem
that should be determinable, an interesting followup problem.
As noted in \cite{CCZ,BLPWZ}, for the actual card game of Guts Poker, played with 2-5 card hands dealt from one
or more standard 52-card decks, players' hands are not independent of one another but slightly correlated,
and so the solution of the continuous game must be modified in a somewhat irregular way to treat this case.
This problem, still open even for the (convex) 2-player game, is evidently a very interesting direction for further 
investigation, as the original motivation for all of our work.

Our results on frequency distributions of gaps for random 3-player games
can only be regarded as preliminary, given the relatively small size games that
we were able to compute, but do suggest a number of interesting questions.
For example, how do the mean values of $V_S$ and $V_A$ scale with large $N$ for symmetric
games with entries in $[-1,1]$? Or, to calibrate, what is the standard deviation of the value 
of a not necessarily random 2-player game with values in $[-1,1]$?
The small correlation observed between gap $V_A-V_S$ and synchronous value $V_S$ is puzzling,
as $V_S=0$ implies $V_A=0$: could there be an anomalous limit as $V_S\to 0$?
And, they do suggest at least that nonzero gap is a frequent occurrence for randomly chosen
3-player games, bringing to the fore the deeper game-theoretic question how to navigate
an asynchronous coalition game of this type.
Indeed, one may think of many real-world scenarios, such as ``blind'' negotiations or games where action
is too rapid or communications too slow for meaningful synchronization, where the asynchronous and not
the synchronous coalition game is relevant.

Likewise, our numerical and analytical experiments suggest that the 
asynchronous coalition problem in general possesses all the features- local as well as global minima, saddlepoints,
etc.- of general nonconvex optimization.  As such, its efficient solution belongs to the larger open
problem, applying also to machine learning, artifical intelligence, and optimal design,
of nonconvex optimization for large systems, a problem that is certainly outside the scope of the
current investigations.

Indeed, the main remaining game-theoretic open problem seems to be the same one cited in \cite{BLPWZ}.
Namely, in the scenario that $V_A=V_N=0$ but $V_S<0$- the one that we have proven above to occur for
3-player continuous Guts Poker, player 1 can guarantee at most a return of $V_S<0$. Yet, players
$2$-$3$ can guarantee a return of at most $V_A=0$.
{\it So, how can player 1 take advantage of this fact to obtain a return better than $V_S$?} 
Perhaps negotiation outside of the game?
And, how does this relate to the more familiar situation of a symmetric 3-player synchronous coalition game 
for which $V_S<0$, but $V_N=0$?


\appendix

\section{Minimax vs. critical point}\label{s:critical}
We give here a simple result used in the analysis of continuous guts.

\begin{proposition}\label{critprop}
Let $f(x,y):(\R^m \times \R^n) \to \R$ be $C^1$.
If $(x_*,y*)$ is a maximin for the problem
$$
\max_x \min_y f(x,y)
$$
and $y_*$ is the unique min of $f(x_*, \cdot)$, then $(x_*, y_*)$ is a critical point of $f$.
If $argmin f(x_*, \cdot)$ is connected and $m=1$, then there exists
	\emph{some} $y*^2\in argmin f(x_*, \cdot)$ such that $(x_*,y_*)$ is a critical point.
	Similarly, a minimax $(x_*,y_*)$ for $f(x,y):(\R^m \times \R^n) \to \R$ $C^1$ is a critical point
	if $y_*$ is the unique max of $f(x_*,y)$, or if $argmax f(x_*, \cdot)$ is connected and $m=1$.
\end{proposition}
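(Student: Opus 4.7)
The plan is to handle both halves via a standard envelope (Danskin-type) argument at $x_*$, upgrading the optimization structure to first-order conditions; throughout I assume that $x_*$ and the relevant minimizers are interior to their domains, as is the case in the applications to Lemma \ref{nosaddleprop}. The $\nabla_y$-conditions come for free: since $y_*$ (respectively each $y$ in $\operatorname{argmin} f(x_*,\cdot)$) is an interior minimizer of $f(x_*,\cdot)$, $C^1$ smoothness immediately yields $\nabla_y f(x_*, y_*) = 0$. All of the work is thus in establishing the $\nabla_x$-condition.

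For Part 1, define the value function $g(x) := \min_y f(x,y)$; by the maximin hypothesis $x_*$ is a local maximum of $g$. The claim is that $g$ is differentiable at $x_*$ with $\nabla g(x_*) = \nabla_x f(x_*, y_*)$. The upper bound $g(x) \le f(x, y_*)$ gives $g(x) - g(x_*) \le \langle \nabla_x f(x_*, y_*), x - x_* \rangle + o(|x-x_*|)$ by $C^1$ linearization. For the matching lower bound, pick $y(x) \in \operatorname{argmin}_y f(x,\cdot)$; by continuity of $f$, uniqueness of $y_*$, and compactness of the feasible set (implicit in our applications), $y(x) \to y_*$ as $x \to x_*$, so uniform continuity of $\nabla_x f$ combined with $f(x_*, y(x)) \ge g(x_*)$ yields the matching lower bound $g(x) - g(x_*) \ge \langle \nabla_x f(x_*, y_*), x - x_* \rangle + o(|x - x_*|)$. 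Differentiability plus the interior maximum condition then force $\nabla g(x_*) = 0$, i.e.\ $\nabla_x f(x_*, y_*) = 0$, completing Part 1.

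For Part 2, set $A := \operatorname{argmin}_y f(x_*, \cdot)$, which is connected by hypothesis and on which $\nabla_y f(x_*, \cdot) \equiv 0$. Since $m = 1$, $\nabla_x f(x_*, \cdot)$ is a scalar-valued continuous function on the connected set $A$, so its image is an interval in $\R$; by IVT it suffices to show this interval contains both a nonnegative and a nonpositive value. Argue by contradiction: if $\nabla_x f(x_*, y) \ge c > 0$ uniformly on $A$, pick a minimizer $y_\epsilon$ of $f(x_* + \epsilon, \cdot)$ for $\epsilon > 0$, use the convergence $y_\epsilon \to A$ (by continuity of $f$, compactness, and the characterization of $A$ as the full minimizing set at $x_*$), and combine $f(x_*, y_\epsilon) \ge g(x_*)$ with $C^1$ linearization to obtain
\[
g(x_* + \epsilon) = f(x_* + \epsilon, y_\epsilon) \ge g(x_*) + \epsilon\bigl(c - o(1)\bigr),
\]
contradicting that $x_*$ maximizes $g$; the symmetric argument with $\epsilon < 0$ rules out the uniformly negative case. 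Hence $0 \in \nabla_x f(x_*, A)$, delivering a $y^* \in A$ at which the full gradient of $f$ vanishes. The minimax version follows by negating $f$. The one delicate technical point, which I would isolate as a preliminary lemma, is the convergence $y_\epsilon \to A$, which is where the compactness of the feasible set really enters, and where the hypothesis $m=1$ prevents the need for a more sophisticated topological argument in higher dimension.
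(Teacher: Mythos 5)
Your argument is correct, and in substance it runs parallel to the paper's, but the packaging of the first assertion is genuinely different. The paper never establishes differentiability of the value function $g(x)=\min_y f(x,y)$: it argues directly by contradiction, using uniqueness of $y_*$ to get a neighborhood bound $f(x,y)>f(x_*,y_*)$ for $y$ outside a neighborhood $N$ of $y_*$, and then, if $\nabla_x f(x_*,y_*)\neq 0$, a small move of $x$ in an ascent direction raises $f$ uniformly on $N$ as well, so the whole minimum strictly exceeds $g(x_*)$, contradicting maximality. Your route instead proves the Danskin-type statement that $g$ is differentiable at $x_*$ with $\nabla g(x_*)=\nabla_x f(x_*,y_*)$ and then invokes the interior first-order condition; this costs the extra work of the two-sided envelope bound and the convergence $y(x)\to y_*$, but buys a stronger intermediate fact (differentiability of the value function) that the paper's proof never delivers. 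For the second assertion the two proofs are essentially identical in mechanism: a one-signed $\partial_x f$ on $\operatorname{argmin} f(x_*,\cdot)$ would again let one strictly improve the min by perturbing $x$ (you implement this with minimizers $y_\epsilon$ of the perturbed problem converging to the argmin set, the paper by reusing its neighborhood argument), and connectedness plus the intermediate value theorem then forces a zero of $\partial_x f(x_*,\cdot)$ on the argmin. One remark on hypotheses: you explicitly assume interiority of $x_*$ and the minimizers and compactness of the feasible set; the proposition as stated on $\R^m\times\R^n$ does not include these, but the paper's own proof implicitly uses the same facts (its bound \eqref{out} and the boundedness of $f_x$ away from zero on the argmin both need compactness, and the perturbation in $x$ needs feasibility of the move), and the applications to continuous guts are on $[0,1]^3$ with interior optima, so your caveats match the paper's actual usage rather than constituting a gap.
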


\begin{proof} 
{\it (First assertion)} By continuity, together with uniqueness of minimum $y_*$, 
for each neighborhood $M$ of $x_*$ there is a neighborhood $N$ of $y_*$ such that 
\be \label{out}
\hbox{\rm $f(x,y)> f(x_*,y_*)$ for $x\in M$ and $y\not \in N$.}
\ee
But, also, $f(x_*,\cdot)$ is stationary at $y_*$, so that $f_y$ vanishes at $(x_*,y_*)$.  
If $f_x$ does not vanish there, then moving in a direction $h$ in $x$ for which $f$ increases at
$(x_*,y_*)$, we have by differentiability
$$
f(x_*+ th, y_*+k)=f(x_*, y_*)+  t f_x \cdot h + o(t)>0
$$
for $t\leq t_0$ sufficiently small, and arbitrary bounded directions $k$.  Taking $M$, $N$ 
to lie within the region of applicability, we have that
	\be\label{in}
	\hbox{\rm $f(x_*+t_0 h, y)> f(x_*,y_*)$ for $y\in N$}
	\ee
	Combining \eqref{out} and \eqref{in}, we find that $f(x_*+t_0 h, y)> f(x_*,y_*)$ for all $y$,
	hence, $x_*$ is not the maximin, a contradiction.  It follows that $(x_*,y_*)$ is a critical point.

{\it (Second assertion)} In this case $x$ is scalar. Taking $h=1\in \R$ in the previous argument, we find
	that $f_x$ cannot be of one sign on $argmin f(x_*,\cdot)$. For, then, by continuity, it would be
	bounded above or below, and the previous argument would again give a contradiction.
	Thus, $f_x(x_*,\cdot)$ changes sign, and so must vanish somewhere on
	$argmin f(x_*,\cdot)$, giving the result.

	{\it (Minimax)} This scenario is exactly symmetric to the first.
\end{proof}

\br\label{ceg}
If $argmin f(x_*,\cdot)$ is not assumed connected, then there are 
easy counterexamples with $argmin f(x_*,\cdot)$ consisting of two isolated points $y_*^1,y_*^2$ 
with $f_x(x_*,y_j)$ of opposite sign.
A particulary relevant one is the maximin problem $\max_{p_1^*} \min_{p_2^*,p_3^*}\alpha(p_1^*,p_2^*,p_2^*)$ 
for continuous guts, which is achieved at two different values of $(p_2,p_3)$; see Figure \ref{negativefig}.
As the function $\alpha$ has a single critical point at $p_j^*=1/\sqrt{2}$, the maximin cannot be a critical point
at either of the optimal values of $(p_2^*,p_3^*)$, as indeed is clear from the figure.
Note that this is connected with nonconvexity of $\alpha$ in $(p_2,p_3)$, as otherwise $argmin_{p_2,p_3}$
would be connected, and so, as dimension of $p_1$ is $m=1$,
there would be a critical maximin by Proposition \ref{critprop}.
\er

\br\label{relrmk}
As concerns the standard Minimax problem $\min_{(x_2,\dots, x_n)}\max_{x_1}  \Psi(x):=\sum_{i_1,\dots, i_n} x_i P_{i_1,\dots, i_n)}$ arisining the asynchronous player 2-n vs. player 1 game, $n\geq 3$, $\Psi$ is affine in $x_1$, hence 
$$
argmin \Psi(\cdot, x_{2*},\dots, x_{n*})=\R
$$
is connected but not unique. As $n-2\geq 2$, Proposition \ref{critprop} thus does not apply.
\er

\subsection{Minimax and joint Nash equilibrium}\label{s:appjNE}
It is not clear what is the relation between joint Nash equilibrium and Minimax
in general.  
For example, in RPS OMI, the global minimax $y=(1,0,0)$, $z=(0,1/2,1/2)$ and rearrangements are
joint Nash equilibria; however, the local minimax $y=(1/3,2/3,0)$, $z=(1/3,0,2/3)$ is not.

\medskip
{\bf Computations.} For RPS OMI, the payoff function is
$$
\Psi(x,y,z)= 2y\cdot z - x\cdot(y+z),
$$

The global minimax at $y_*=(1,0,0)$, $z_*=(0,1/2,1/2)$ gives
$\Psi(x, y_*,z_*)= -(x_1+x_2/2+x_3/3)$, hence the best response for player one is
$(0, \alpha, \beta)$ for arbitrary $\alpha$, $\beta$.
Varying $z=(h, 1/2-h-k, 1/2+k)$, with $h>0$ and $k$ small, gives
$$
\Psi(x_*,y_*,z)- \Psi(x_*,y_*,z_*)= (2-\alpha)h + k (\alpha-\beta).
$$
Thus, $(x_*,y_*,z_*)$ is a joint Nash equilibrium only if the above quantity is nonnegative
for all such $h,k$.  This gives, first, setting $h=0$, that $\alpha=\beta$, hence the only
candidate is $x_*=(0,1/2,1/2)$.
Then, $\Psi(x_*,y_*,z)- \Psi(x_*,y_*,z_*)= 3h/2 \geq 0 $, as required. 

A similar computation
shows that the best response for $x$ with $y_*, z_*$ fixed is $x_*$ as well, as,
setting $x=(h, 1/2-h+k, 1/2-k)$, $h>0$, we have
$$
\Psi(x,y_*,z_*)- \Psi(x_*,y_*,z_*)= -(x-x*)\cdot(1,1/2,1/2)
= -\Big( h+ (k-h)/2 -k/2 \Big)=-h/2\leq 0.
$$
Finally, settinng $y=(1-h-k h, k)$, with $h,k\geq 0$ gives
$$
\Psi(x_*,y,z_*)- \Psi(x_*,y_*,z_*)= h+k\geq 0,
$$
completing the verification that $(x_*,y_*,z_*)$ is a joint Nash equilibrium.

The local minimax at $\tilde y=(1/3,0,2/3)$, $\tilde z=(1/3,2/3,0)$ 
gives $ \Psi(x, \tilde y,\tilde z)= -4/9$ independent of $x$. 
But, then,
$$
\Psi(x,y, z)= -4/9 + x\cdot ( y-\tilde y + z-\tilde z).
$$
Setting $y=(1/3-h, 0, 2/3+h)$ and varying $h$ shows that this is a joint Nash equilibrium only
if $x_1=x_3$.
Setting $y=(1/3-h, 2/3+h,0)$ and varying $h$ shows that $x_1=x_2$, so that $\tilde x=(1/3,1/3,1/3)$
is the only candidate.
But, then, $\Psi(\tilde x, y, \tilde z)= 2y\cdot \tilde z- 2/3$, which is minimized at value $-2/3$
for $y=(0,0,1)$. Hence, this is not a joint Nash equilibrium.

\br\label{critJNErmk}
By Remark {relrmk},
for the standard Minimax problem $\min_{(x_2,\dots, x_n)}\max_{x_1}  \Psi(x):=\sum_{i_1,\dots, i_n} x_i P_{i_1,\dots, i_n)}$ arisining the asynchronous player 2-n vs. player 1 game, $n\geq 3$, $\Psi$ is affine in $x_1$,
hence an interior critical point is a joint Nash equilibrium.
However, again by Remark {relrmk}, an interior Minimax for such a game is not necessarily
a joint Nash equlibrium.
\er

\medskip
{\bf Conclusions:} It is not clear to us whether there is any useful general relation beteen joint Nash
equilibria and Minimax points. Nor is it clear that joint Fictitious play converges. However, one
may always compute them and afterward see if they correspond to Minimaxima.
This issue is one that seems interesting for further investigation.
A second important question is the issue of {\it which joint Nash equilibria are attracting under the
dynamics of joint Fictitious play.}
Both of these apparently system-by-system questions would be very interesting to treat in a general way.

\section{Evolution of recursive strategies}\label{s:alphabeta}
As an interesting side note, we conclude by a general discussion about recursive games and 
how optimal strategies may change over successive rounds.
For an example, consider the $2\times 2$ recursive game
\be\label{gen}
\alpha=\bp 1 & -1\\-1& 1\ep +\alpha_0 \bp 1&1\\1&1\ep,\quad
\beta=\beta_0 \bp 1 & 1 \\1 & 1\ep,
\ee
with $\alpha_0, \beta_0>0$, $\beta_0<2$.
Denote the P1 strategy by $0\leq x\leq 1$, the probability of picking strategy 1.
Then, the optimum one-shot strategy for player 1 is evidently $x=1/2$, returning values $\alpha=\alpha_0$,
$\beta=\beta_0/2$.
If repeated indefinitely, this would give recursive value 
\be\label{oneval}
V_{oneshot}:= \alpha/(1-\beta)= \alpha_0/(1- \beta_0/2).
\ee

To investigate recursive strategies, set $h=x-1/2$, taking without loss of generality (by symmetry)
$0\leq h\leq 1/2$. Then, the payoff matrix presented to P2 for a given $V\geq 0$ is
\be\label{Vgame}
x^T(\alpha+\beta V)=
\bp \alpha_0 + V\beta_0/2 + (V\beta_0+2)h  & \alpha_0  + V\beta_0/2+ (V\beta_0-2)h\ep,  
\ee
and the best P2 response evidently strategy 2, for a return to P1 of
\be
\alpha_0  + V\beta_0/2+ (V\beta_0-2)h.
\ee

We see that so long as $V\beta_0 < 2$, the best play for P1 is the one-shot strategy $h=0$, returning
$\alpha_0 + V\beta_0/2$.  If 
\be\label{crit}
V_{oneshot} \beta_0<2,
\ee
therefore, the oneshot strategy is always best, and the ultimate recursive return is $V_{oneshot}$ above.

However, if \eqref{crit} fails, or
\be\label{anticrit}
\alpha_0 \beta_0 + \beta_0>2,
\ee
then eventually the optimal strategy for \eqref{Vgame} will switch
to $h=1/2$, or $x=1$, giving a return of
$$
\alpha_0  + V\beta_0/2+ (V\beta_0-2)/2=
\alpha_0  + V\beta_0 - 1.
$$
For $\beta_0 <1$ (so that by \eqref{anticrit} $\alpha_0>1$), this payoff converges to 
$$
V_*:= \frac{\alpha_0-1}{1-\beta_0}> V_{oneshot}.
$$
For $\beta_0\geq 1$, it diverges to $+\infty$.

This seems quite illustrative for discrete games. 
For, we see that, generally, movement $h$ away from the P2 optimum gives a penalty
of {\it linear order $h$}, which balances against a benefit of order $ hV$.
Thus, if $V$ is small enough, there is no incentive to modify the one-shot game strategy. On the other hand,
if $V$ grows large enough, the optimum strategy will switch.

\subsection{A model for continuous Guts}\label{s:gutsmodel}
The situation is a bit different for continuous games like guts.  Based on our observations regarding
optimal strategies in Section \ref{s:struct}, let us consider a simplified model for synchronous coalition
continuous guts, in which we require that P2-3 use a mixture of strategies $(q_1,q_1)$ and $(0, q_2)$, weighted
as $y$, $(1-y)$.  Meanwhile we require that P1 use a pure strategy $p$.
Then, the one-shot game becomes minimizing over $(y,q_1,q_2)$ the maximum over $p$ of payoff function
\be\label{payoff}
\Psi(y,p,q_1,q_2):= y\alpha(p,q_1,q_1)+(1-y)\alpha(p,0,q_2),
\ee
or
\be\label{optprob}
\min_{(y,q_1,q_2)} \max_p \Psi(y,p,q_1,q_2),
\ee
with $\Psi$ as in \eqref{payoff}.

This is a much less computationally intensive problem than the full minimization problem,
and should be accessible by BFG, gradient descent, or simple discretization/comparison.

Note in this situation that there is a new possibility of modifying $q_j$, for which the 
resulting payoff change is $C^2$. In that case, the accounting changes, since penalties will be order $-ch^2$ 
and benefits still linear order $d Vh$.  So this suggests that optimal strategy is
$$
h\approx dV/c,
$$
leading to a recursive strategy different from the one-shot version, even for $V\ll 1$ as it is seen numerically
to be.

{\bf Conjecture:} Based on this back-of-the-envelope computation, we guess that the recursive strategy
for continuous guts does change from the one-shot optimum, but only order $V\approx 0.013$. 
If we discretize the game, then things become again a bit different: for sufficiently fine discretization,
this conclusion should persist, but for coarser grid the one-shot strategy should remain optimum
throughout play. The above model computations suggest how finely we must discretize
in order that the recursive strategy evolve,
namely, so that neighboring $q$ values are less than order $V$ apart...  At our current
discretization of $N=50$ or $N=100$, they are only $0.02$ or $0.01$ appart, right on the boundary.
And, indeed, numerical experiments show little if any evolution over different rounds of play for this
level of refinement.

\subsection{Rigorous optimum for continuous recursive Guts}\label{s:rguts}
The above questions can be answered analytically by the following extension of Proposition \ref{3br}.

\begin{proposition} \label{r3br}
For continuous recursive 3-player Guts $\alpha +V\beta$, with $V$ sufficiently small, 
the best response function is
\be\label{VR}
R^V(p_1^*) \min \{\alpha^V_a(p_1^*), \alpha^V_b(p_1^*)\},
\ee
where 
\ba\label{V3breq}
	\alpha_a^V&:= (p_3^a(V)-p_1^*)(4p_3^a(V)^2-2) + V(2-p_1^*-2 p_3^a(V) + 2p_1^*p_3^a(V)^2),\\
	\alpha_b^V &:= 2p_1^* - p_3^b(V) + p_3^b(V)^3 - 3(p_1^*)^2 p_3^b(V) + V(2-p_1^*-p_3^b(V)),
\ea
with
	\ba \label{Vopts}
	 p_3^a(V)&:=(1+V)(\sqrt{(2-V)^2(p_1^*)^2+6(1+V)}-(2-V)p_1^*)^{-1},\\
	  p_3^
	  (V)&:= \sqrt{\frac{3(p_1^*)^2+1+V}{3}}\Big),
	  \ea
	  hence the recursive synchronous value map is given by
	\be\label{Vxvalue}
	T(V)=\max_{p_1^*}\min\{\alpha_a^V(p_1^*), \alpha_b^V(p_1^*)\}.
	\ee
\end{proposition}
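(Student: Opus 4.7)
The plan is to extend the proof of Proposition \ref{3br} from \cite{CCZ} to the one-parameter family $\alpha + V\beta$, tracking the $V$-dependence of the minimizers. The structural properties used in that proof carry over for all $V \geq 0$: since $\beta = 2 - p_1^* - p_2^* - p_3^* + 2 p_1^* p_2^* p_3^*$ is individually affine in each variable, Lemma \ref{C1lem} implies that $\alpha + V\beta$ remains concave in $p_1^*$ and individually convex in each of $p_2^*, p_3^*$. Propositions \ref{maximinprop} and \ref{indprop} therefore apply, so $T(V) = \max_{p_1^*} R^V(p_1^*)$ with $R^V(p_1^*) := \min_{p_2^*, p_3^*}(\alpha+V\beta)(p_1^*, p_2^*, p_3^*)$ a finite-dimensional pure-strategy minimization for each fixed $p_1^*$, exactly as in Corollary \ref{VScor}.

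By symmetry take $p_2^* \leq p_3^*$. Following the argument in \cite{CCZ}, I would first show that for $V$ small the inner minimum is attained either on the diagonal $p_2^* = p_3^*$ (case $a$) or on the boundary face $p_2^* = 0$ (case $b$). This uses the signs of $\partial_{p_2^*}(\alpha + V\beta)$ and $\partial_{p_3^*}(\alpha + V\beta)$, computed piecewise from \eqref{gradient} plus the linear correction from $V\beta$; the defining inequalities that select a given branch of \eqref{3alpha} are strict at $V = 0$ and vary continuously in $V$, hence persist on a neighborhood of zero.

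Next, in each case I would compute the critical point explicitly. In case $a$, with $p_1^* \leq p_2^* = p_3^* = p$ the first branch of \eqref{3alpha} applies, yielding
\[
(\alpha+V\beta)(p_1^*,p,p) = 2p_1^* - 2p + 4p^3 - 4p_1^*p^2 + V(2-p_1^*-2p + 2p_1^*p^2),
\]
whose $p$-derivative produces the quadratic $6p^2 - 2(2-V)p_1^* p - (1+V) = 0$; solving for the positive root and rationalizing the numerator yields the stated formula for $p_3^a(V)$, and substitution gives $\alpha_a^V = (p - p_1^*)(4p^2 - 2) + V\beta(p_1^*,p,p)$ since one verifies directly that $(p - p_1^*)(4p^2 - 2) = 2p_1^* - 2p + 4p^3 - 4p_1^*p^2$. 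In case $b$, with $p_2^* = 0 < p_1^* < p_3^* = p$ the third branch of \eqref{3alpha} applies, giving
\[
(\alpha+V\beta)(p_1^*,0,p) = 2p_1^* - p + p^3 - 3(p_1^*)^2 p + V(2-p_1^*-p),
\]
whose critical equation $3p^2 = 3(p_1^*)^2 + 1 + V$ yields $p_3^b(V)$, and the displayed expression is already the value of the payoff at this point. The best-response formula \eqref{VR} then follows by minimizing over the two local values, and \eqref{Vxvalue} by \eqref{VSmaximin}.

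The main obstacle is the verification that no other critical points contribute for $V$ in the permitted range, and the quantification of that range. At $V = 0$ this is exactly \cite[Prop. 6.8]{CCZ}; for small $V$ one must check that (i) $p_3^a(V)$ and $p_3^b(V)$ remain in $[0,1]$ and satisfy the ordering assumptions used to select the branch of \eqref{3alpha}, and (ii) no new boundary or interior critical points emerge on the other branches. Both follow from strict inequality plus continuity at $V = 0$, but an explicit threshold requires a quantitative perturbation argument tracking, e.g., the constraint $p_3^b(V) \leq 1$, which bounds $V$ via $3(p_1^*)^2 + 1 + V \leq 3$, together with the location of the crossing $\alpha_a^V(p_1^*) = \alpha_b^V(p_1^*)$ at which the outer maximum is attained.
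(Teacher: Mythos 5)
Your route is the same as the paper's: perturb the argument of Proposition \ref{3br} (\cite[Prop.~6.8]{CCZ}), locate the symmetric interior critical point on the region $p_1^*\le p_2^*,p_3^*$ and the boundary critical point at $p_2^*=0$, and solve explicitly. Your computations are correct and reproduce \eqref{V3breq}--\eqref{Vopts}: the diagonal restriction yields $6p^2-2(2-V)p_1^*p-(1+V)=0$, whose positive root rationalizes to the stated $p_3^a(V)$, and the face $p_2^*=0$ yields $3p^2=3(p_1^*)^2+1+V$; the rewriting of the case-(a) payoff as $(p-p_1^*)(4p^2-2)+V\beta(p_1^*,p,p)$ checks out. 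The structural step you defer (no minima off the diagonal or off the face $p_2^*=0$) is handled in the paper by the sign of the Hessian of $\alpha+V\beta$ in $(p_2^*,p_3^*)$: negative determinant where $p_1^*\not\le p_2^*,p_3^*$, positive definite where $p_1^*\le p_2^*,p_3^*$; since the $V\beta$ correction only perturbs the mixed partial by $2Vp_1^*$, this part of your continuity argument is sound.

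There is, however, a genuine gap precisely where you invoke ``strict inequality plus continuity.'' The branch-selection inequality for case (a), namely $p_3^a(V)\ge p_1^*$, is \emph{not} strict at $V=0$ uniformly in $p_1^*$: it holds with equality at $p_1^*=1/\sqrt{2}$ and fails outright for $p_1^*>1/\sqrt{2}$, already at $V=0$. Hence no perturbation-in-$V$ argument can validate \eqref{VR} on all of $p_1^*\in[0,1]$; the neighborhood of $V=0$ you would obtain degenerates as $p_1^*$ approaches the crossover, and beyond it $\alpha_a^V(p_1^*)$ simply does not correspond to an admissible local minimum of the inner problem. The paper closes this by an additional direct comparison: for $p_1^*\ge 1/\sqrt{2}+O(V)$ and $V$ sufficiently small, $\alpha_a^V(p_1^*)>\alpha_b^V(p_1^*)$, so the minimum in \eqref{VR} is attained by $\alpha_b^V$ and the formula remains correct despite the infeasibility of the case-(a) point; the paper explicitly flags this as repairing an omission in \cite{CCZ}. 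Without that comparison (or an equivalent restriction of the range of $p_1^*$), your argument establishes \eqref{VR} only on the subregion where both critical points are feasible, which is not enough to conclude the maximization formula \eqref{Vxvalue} over all $p_1^*$. You would need to add this case to complete the proof.
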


\begin{proof}
As in the proof of Proposition \ref{3br} given in \cite[Prop. 6.8]{CCZ}, we observe that
	for $p_1^*\not \leq p_2^*,p_3^*$, the Hessian of $\alpha(p_1^*,\cdot,\cdot)$ 
	with respect to $p_2^*,p_3^*$ has negative determinant, and so there can be no local minima.
	For $p_1^*\leq p_2^*, p_3^*$, on the other hand, the Hessian is positive definite and so
	the local minimum of $\alpha$ on this region may be found at an interior critical point.
	Setting $(\alpha + V\beta)_{p_2^*}=(\alpha + V\beta)_{p_3^*}=0$, and solving, we readily
	find that $p_2^*=p_3^*=p_3^a(V)$, with $p_3^a(V)\geq p_1^*$ so long as $p_1^*\leq 1/\sqrt{2} + O(V)$.
	Indeed, that $p_2^*=p_3^*$ at a unique critical point may be deduced already by strict convexity
	together with symmetry in $p_2^*$ and $p_3^*$.
	
	The only other candidates for minima are at boundaries $p_2^*$ or $p_3^*$ equal $0$ or $1$.
	Case by case comparison eliminates all but $\alpha(p_1^*,0, p_3^b(V))$, with
	$0=p_2^* \leq p_1^*\leq p_3^b(V)$. Here, $p_3^b(V)$ is found by setting to the derivative
	with respect to $p_3$ of $\alpha(p_1^*,0,p_3)$ and solving for $p_3$.
	Finally, we observe that for $ p_1^*\geq 1/\sqrt{2} + O(V)$, so long as $V$ is sufficiently small, 
	that $\alpha_a^V(p_1^*)> \alpha_b^V(p_1^*)$, so that formula \eqref{VR} remains correct even
	though $\alpha_a^V(p_1^*)$ is no longer a valid local minimum.\footnote{This last
	repairs a minor omission in \cite{CCZ}, where the final point was not addressed.}
\end{proof}

\br\label{Vsmall}
In practice, we may determine whether $V$ is small enough by examining the graphs of $\alpha_a^V$ and
$\alpha_b^V$ and verifying their respective positions for $p_1^*\geq p_1^{optimum}$, where
$p_1^{optimum}$ is defined as the maximum value of \eqref{Vxvalue} obtained as the intersection 
on $p_1^*\in [0,1]$ of the graphs of $\alpha_a^V$ and $\alpha_b^V$.
For continuous guts poker, the fixed point $V\approx -0.013$ of $T$ (determined in \cite{BLPWZ}) 
is more than small enough.
\er

As illustrated in Figure \ref{0.13fig}, the optimal strategy for player 1 indeed changes as $V$ goes
from $V=0$ to the approximate fixed point $V=-0.13$, going from the value 
$p_1^* \approx .6436$ for $V=0$, with $T(V) \approx -.005576$ to
$p_1^*=.6445$ for $V=-.013$, with $T(V) = -.013124$.
Hence, {\it the one-shot strategy is not optimal for the recursive synchronous coalition game.}

As noted in \cite{BLPWZ}, however, the optimal recursive strategy may be used for all rounds of the
game to achieve the optimum value. That is, though the best response evolves with time, it is not
necessary to vary the player 2-3 strategy in order to achieve the optimal result. (Examples given
in \cite{BLPWZ} show that this is necessary for some games, but not others.)

  \begin{figure}
        \centering
		  \includegraphics[scale=0.7]{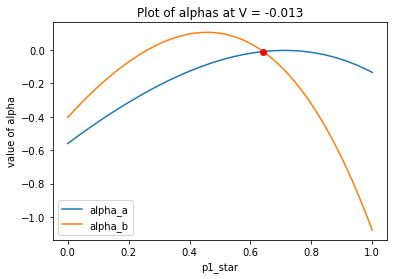}
		 	\caption{
				Blowup of graph plotting best response $\alpha^V_a$ and $\alpha_b^V$ vs. $p_1^*$,
				for $V=-0.13$.
				Maximin occurs at $p_1^*\approx 0.6445$, $\alpha\approx -0.0131$, at
				intersection of $\alpha_a^V$ and $\alpha_b^V$.
				}
	\label{0.13fig}
  \end{figure}

\subsubsection{Alternative method}\label{s:altmet}
Alternatively, defining $ p_1^m(V)$ to be the optimal strategy for player 1, defined by
\be\label{defeq}
(\alpha_a^V-\alpha_b^V)(p_1^m(V))=0,
\ee
we find by the implicit function theorem that, for $V$ sufficiently small,
\be\label{pODE}
(d/dV)p_1^m(V)= -\frac{ (d/dV)(\alpha_a^V-\alpha_b^V)} { (d/dp_1)(\alpha_a^V-\alpha_b^V)}|_{(V,p_1^m(V)}.
\ee
Here, we are recalling from our previous calculations that 
$ (d/dp_1)(\alpha_a^V-\alpha_b^V)|_{(0,p_1^m(0)} \approx 1.6 >0$.
Computing 
$$
(d/dV) (\alpha_a^V-\alpha_b^V)|_{(0,p_1^m(0)} \neq 0,
$$
we find therefore that the optimal player 1 strategy $p_1^m(V)$ indeed evolves nontrivially with $V$.

%

\subsubsection{Conclusion}\label{s:econc}
From the above computations, we may conclude as conjectured, that the optimal strategies
for continuous guts {\it do} evolve with the round of play, and indeed from the very first round.
The change in strategy between $V=0$ and the final value $V\approx -0.013$
is $\approx .001$, that is, with respect to a change in $V$ of order $0.1$, which likewise corresponds
with conjecture.
This is convincing that evolution actually does take place, but a delicate computation that
would be difficult to confirm with numerics alone.

\section{Numerical protocol}\label{s:numpro}
1. \textbf{Numerical Simplification.} Evaluation of objective function (inner loop). In both Maximin and Minimax, this part is straightforward,
as the inner minimum (resp. maximum) is attained at pure strategy pairs (resp. single strategies), 
hence may be obtained by simple comparison as recorded in \eqref{maxi} (resp. \eqref{mini}).
This costs $N^2$ (resp. $N$) functional evaluations, of complexity $N$ (resp. $N^2$) apiece, for total
computational cost of order $N^3$, which is negligible for these problems.

2. \textbf{Optimization Methods.} When computing the minimax, we primarily use two optimization methods: Broyden-Fletcher-Goldfarb-Shanno (BFGS), and Sequential Least Squares Programming (SLSQP).  BFGS is a quasi-newton type method that uses an approximation for the hessian which improves every step.  BFGS is among the most widely used optimization method today, and has shown incredible efficiency in complex high dimensional problems.  Thus it is a natural starting point for our analysis.  SLSQP is another quasi newton method highly related to BFGS, but adjusted to handle equality constraints.  In terms of computation cost per iteration, SLSQP is many times more time expensive than BFGS.  Additional detailed references can be found REF for BFGS and REF for SLSQP.  In both cases we use the scipy python package (CITE SCIPY) as our implementation of the methods.  

3. \textbf{Smoothing.}  When computing a minimax (or maximin) problem, it is appropriate to replace the inner max (or min) with a smooth approximant in order to increase the accuracy of the minimization procedure (in fact most minimization methods we employ require $C^1$ regularity of the inner function).  We do this with two distinct methods, $\ell^p$ smoothing and softmax smoothing.  For $\ell^p$ smoothing, we use the approximate maximum function given by
\begin{equation}
    \max_i x_i \approx \left(\sum_i (x_i+1)^p \right)^{\frac{1}{p}} - 1
\end{equation}
by taking $p$ negative, we can approximate the minimum function the same way.  The shift up by 1 allows us to use negative $p$ values without worrying about dividing by $0$, since in our setting $0\leq x_i\leq1$.  It also generally helps with underflow issues since our values are between 0 and 1.  The accuracy of the approximation is increased as $|p|\to\infty$.

The softmax method meanwhile uses the approximation
\begin{equation}
	\max_i x_i \approx \sum_i \frac{x_i e^{\frac{x_i}{\epsilon}}}{\sum_j e^{\frac{x_j}{\epsilon}}}
\end{equation}
where the $\epsilon$ is a tunable parameter similar to $p$ in the $\ell^p$ smoothing example above.  In this case the approximation becomes exact as $\epsilon\to 0$.

4. \textbf{Constraints.} 
The BFGS routine is supported in SciPy with interval-type constraints $x_i\geq 0$, but
not linear inequalities $\sum_{j=1}^{N-1}\leq 1$ such as we require. We repair
this by introducing a penalty term
$-K|(\sum_{j=1}^{N-1}-1)_+|^p$ 
(resp. $+K|(\sum_{j=1}^{N-1}-1)_+|^p$), $K\gg 1$, $p\geq 1$, in the Maximin (resp. Minimax) objective function,
loosely following \cite{AY}.  We refer to these constraints as softly enforced.
Continuation-type routines in SciPi and NashPy do support linear inequalities so can be used ``as is'', we refer to these constrainst as hard enforced.  Additionally, we note that in \cite{AY}, convergence is proven for $p=2$, which is what we use in our experiments.

5. \textbf{Fictitious Play.}  Fictitious Play is a numerical algorithm for finding the Nash Equilibrium of a matrix game, or any game with a well defined best response function.  This method is initialized with two players playing a random strategy.  Then at each following iteration, each player plays the best response to the mixed strategy derived by all their opponents previous plays.  This method is not guaranteed to converge, but if it does converge to a particular strategy distribution, it is guaranteed that the distribution represents a mixed nash equilibrium.  This algorithm can also be adjusted to three player games, but does not have the same guaranteed convergence.

\section{Experiments with Random Games I Tables}\label{s:Tables1}
The following tables contain the results of the numerical experiments described in section \ref{s:numeff}.  
These experiments test three numerical methods: Fictitious Play, SLSQP, and BFGS with three different smoothing methods: no smoothing, $\ell^p$ smoothing, and softmax smoothing.  All these procedures are described in detail in Appendix \ref{s:numpro}.

For each of $\ell^p$ smoothed and softmax, we perform runs with several parameter values.  For $\ell^p$, we run $p=1,10,50,100,200,300,400,500$.  For the softmax method, we run with $\epsilon=1,.5,.25,.0125,1e-4,1e-5,1e-6,1e-7,1e-8$.  In the charts below, we show only the parameter value which produced the lowest error as described below.  In addition, we run Fictitious play with iterations equal to $3000,5000,12000,35000,$ $80000,1000000,4000000,12000000$.  For Fictitious Play, we display only the parameters which yielded similar runtimes to either $\ell^p$ or softmax.

To measure accuracy, we additionally compute the maximin by running the same procedure on the negative transpose of the payoff matrix.  If perfectly accurate, this will yield the same result as for the minimax.  Thus we use this Value Gap as a benchmark for accuracy.  
This benchmark does not work for Fictitious Play, which computes the value without reference to the minimax/maximin 
structure.
However, Fictitious Play is known analytically to converge \cite{R} for 2-player games with rate $1/n$, where $n$
is the number of iterations, hence can serve as a second benchmark in itself.
Moreover, its computational cost is relatively insensitive to the size $N$ of the game, with $N$ entering only
in the $O(N^3)$ cost  of payoff evaluations.
In the following we report the absolute value gap, but this error can also be interpreted as relative since the matrices have entries normalized to magnitude one.  Additionally we report the computational time for each method.

\begin{center}
\begin{tabular}{| c | c | c | c | c | c |} 
\hline
 \multicolumn{6}{|c|}{N=2} \\
 \hline
 Method &  Smoothing & Parameter & Value & Value Gap & Time \\ \hline
    Ficitious Play  &N/A                &    3000   &    -0.70624  &         N/A  &    0.14133\\
    SLSQP           &None               &     N/A   &    -0.70624  &  5.2425e-13  &    0.16047\\
    BFGS            &None               &     N/A   &    -0.70624  &  2.8091e-06  &   0.002279\\
    SLSQP           &$\ell^p$   &     100   &    -0.70624  &  5.2425e-13  &    0.15599\\
    BFGS            &$\ell^p$   &      10   &    -0.70624  &  2.5208e-06  &  0.0041604\\
    SLSQP           &Softmax    &   .0125   &    -0.70624  &  5.2425e-13  &    0.17692\\
    BFGS            &Softmax    &      .5   &    -0.70624  &  2.3358e-06  &  0.0021272\\
\hline

 \multicolumn{6}{|c|}{N=4} \\
 \hline
 Method &  Smoothing & Parameter & Value & Value Gap & Time \\ \hline
    Fictitious Play &  N/A              &      12000   &    -0.0019507  &         N/A  &   0.44446\\
    SLSQP           & None              &        N/A   &      0.014475  &    0.040802  &    1.0221\\
    BFGS            &  None             &      N/A     &  -0.0019212    &  7.0959e-07  &  0.058216\\
    SLSQP           &$\ell^p$   &        500   &   -0.00028436  &   0.0034422  &   0.28614\\
    BFGS            &$\ell^p$   &        400   &    0.00012546  &   0.0043043  &  0.024751\\
    SLSQP           &Softmax    &       1e-5   &    -0.0019133  &  1.0477e-05  &    0.6283\\
    BFGS            &Softmax    &       1e-7   &    -0.0019213  &  4.2996e-08  &   0.06435\\
 \hline

 \multicolumn{6}{|c|}{N=8} \\
 \hline
 Method &  Smoothing & Parameter & Value & Value Gap & Time \\ \hline
    Fictitious Play     &N/A     &     12000   &    -0.14168  &         N/A  &    0.4448\\
    SLSQP               &None    &       N/A   &   -0.086871  &    0.077539  &   0.91074\\
    BFGS                &None  &       N/A   &    -0.10877  &    0.090801  &   0.04243\\
    SLSQP               &$\ell^p$    &       500   &    -0.14038  &   0.0025014  &   0.22348\\
    BFGS                &$\ell^p$   &       400   &    -0.14006  &   0.0031443  &  0.034575\\
    SLSQP               &Softmax    &      1e-5   &    -0.14156  &  2.5678e-05  &    1.0319\\
    BFGS                &Softmax   &      1e-4   &    -0.14157  &  3.0448e-05  &  0.047931\\
\hline

 \multicolumn{6}{|c|}{N=16} \\
 \hline
 Method &  Smoothing & Parameter & Value & Value Gap & Time \\ \hline
    Fictitious Play   &       12000  &    0.023029  &         N/A  &  0.44602\\
    SLSQP &None  &         N/A  &    0.093545  &     0.16301  &   1.0236\\
    BFGS   &  None&       N/A  &    0.080021  &     0.12615  &  0.14645\\
    SLSQP   &$\ell^p$&       500  &    0.025604  &   0.0050109  &   1.0203\\
    BFGS   & $\ell^p$&        300  &    0.027136  &   0.0084286  &  0.28557\\
    SLSQP   & Softmax&       1e-4  &    0.023402  &  6.4928e-05  &   1.4684\\
   BFGS   & Softmax&       1e-4  &    0.023864  &   0.0039738  &  0.45736\\
\hline

 \multicolumn{6}{|c|}{N=32} \\
 \hline
 Method &  Smoothing & Parameter & Value & Value Gap & Time \\ \hline
    Fictitious Play &N/A        &       35000  &     -0.012266  &        N/A   &  1.3075\\
    SLSQP           &None       &         N/A  &      0.080816  &    0.17485  &   1.7749\\
    BFGS            &None       &         N/A  &      0.064499  &    0.13081  &  0.26525\\
    SLSQP           &$\ell^p$   &         500  &    -0.0089836  &  0.0053671  &   2.8667\\
    BFGS            &$\ell^p$   &         300  &    -0.0057636  &   0.010865  &    1.282\\
    SLSQP           &Softmax    &       .0125  &    -0.0070011  &  0.0095071  &   1.0497\\
    BFGS            &Softmax    &       .0125  &    -0.0067932  &   0.010705  &  0.41066\\
    \hline
\end{tabular}
\end{center}

\begin{center}
\begin{tabular}{| c | c | c | c | c |c|} 
\hline
 \multicolumn{6}{|c|}{N=64} \\
 \hline
 Method &  Smoothing & Parameter & Value & Value Gap & Time \\ \hline
    Fictitious Play &N/A        &       1000000  &   -0.0013476  &        N/A  &   38.995\\
    SLSQP           &None       &           N/A  &      0.05408  &    0.10792  &   7.0172\\
    BFGS            &None       &           N/A  &     0.067929  &    0.14123  &  0.25405\\
    SLSQP           &$\ell^p$   &           500  &    0.0011084  &  0.0055256  &    26.15\\
    BFGS            & $\ell^p$  &           200  &     0.023682  &   0.049482  &   2.6399\\
    SLSQP           &Softmax    &         .0125  &    0.0032562  &   0.010004  &   14.329\\
    BFGS            &Softmax    &         .0125  &     0.023841  &   0.052561  &   0.3972\\
\hline

 \multicolumn{6}{|c|}{N=128} \\
 \hline
 Method &  Smoothing & Parameter & Value & Value Gap & Time \\ \hline
Fictitious Play&N/A  &      1000000  &   0.0041705  &        N/A  &   54.768\\
SLSQP  &None &          N/A  &    0.063219  &    0.12336  &   18.022\\
BFGS   &None&            0  &    0.083172  &    0.16983  &  0.30773\\
SLSQP  &$\ell^p$ &          500  &   0.0077296  &  0.0069334  &   265.16\\
BFGS  &$\ell^p$ &          500  &    0.054236  &    0.10957  &  0.37022\\
SLSQP  &Softmax &        .0125  &    0.010702  &   0.012289  &   54.849\\
BFGS  &Softmax &        .0125  &    0.050734  &   0.097304  &  0.52501\\
\hline

 \multicolumn{6}{|c|}{N=256} \\
 \hline
 Method &  Smoothing & Parameter & Value & Value Gap & Time \\ \hline
    Fictitious Play &N/A    &     1000000 &    0.0021377  &        N/A &    62.468\\
    SLSQP  &None  &         N/A &     0.053226  &     0.1018 &    58.093\\
    BFGS   &None &         N/A &     0.092248  &    0.16708 &   0.38451\\
    SLSQP  &$\ell^p$  &         500 &    0.0056138  &  0.0067225 &       723\\
    BFGS   &$\ell^p$ &         500 &     0.047917  &   0.097521 &    1.9243\\
    SLSQP  &Softmax  &       .0125 &    0.0085595  &   0.012295 &    140.57\\
     BFGS   &Softmax &       .0125 &     0.050234  &    0.10204 &   0.57419\\
\hline
\end{tabular}
\end{center}


\end{document}